\titleformat{\subsubsection}[runin]{\bf}{}{}{}[:]
\theoremstyle{definition}
\newtheorem{theorem}{Theorem}[section]
\newtheorem{lemma}[theorem]{Lemma}
\DeclareMathOperator*{\argmin}{arg\,min}
\title{
\bf
Deep Neural Network Framework Based on Backward Stochastic Differential Equations for Pricing and Hedging American Options in High Dimensions
}
\author{
Yangang Chen\thanks{Department of Applied Mathematics, University of Waterloo, 200 University Avenue West, Waterloo, ON, N2L 3G1, Canada.}
\and
Justin W. L. Wan\thanks{David R. Cheriton School of Computer Science, University of Waterloo, 200 University Avenue West, Waterloo, ON, N2L 3G1, Canada.}
}
\begin{document}

\maketitle

\begin{abstract}
We propose a deep neural network framework for computing prices and deltas of American options in high dimensions. The architecture of the framework is a sequence of neural networks, where each network learns the difference of the price functions between adjacent timesteps. We introduce the least squares residual of the associated backward stochastic differential equation as the loss function.
Our proposed framework yields prices and deltas on the entire spacetime, not only at a given point. The computational cost of the proposed approach is quadratic in dimension, which addresses the curse of dimensionality issue that state-of-the-art approaches suffer.
Our numerical simulations demonstrate these contributions, and show that the proposed neural network framework outperforms state-of-the-art approaches in high dimensions.
\end{abstract}

\begin{keywords}
American options, delta hedging, neural network, stochastic differential equations
\end{keywords}


\section{Introduction}
\label{sec:intro}

American options are among the most common derivatives in financial markets. In practical applications of hedging, we are required to compute not only an American option price, but also the derivatives of a price with respect to the underlying asset prices, called American option delta \citep{hull2003options}. Numerous approaches have been proposed for solving American option problems, such as binomial trees \citep{hull2003options}, numerically solving partial differential equations (PDEs) with free boundary conditions, with policy iterations or with penalty terms \citep{achdou2005computational,duffy2013finite,forsyth2002quadratic,reisinger2012use}, regression-based methods \citep{longstaff2001valuing,tsitsiklis1999optimal,kohler2010review}, stochastic mesh methods \citep{broadie2004stochastic}, sparse grids methods \citep{bungartz2004sparse,reisinger2007efficient,leentvaar2008pricing}, etc. When the dimension of an American option, i.e., the number of underlying assets, is greater than 3, numerical solution of PDEs is infeasible, as the complexity grows exponentially with the dimension.
When the dimension $d$ is moderate (e.g., $d\leq 20$), the regression-based Longstaff-Schwartz method \citep{longstaff2001valuing} is widely considered as the state-of-the-art approach for computing option prices. In addition, one can combine the Longstaff-Schwartz method with the methods proposed in \citet{broadie1996estimating,bouchard2012monte} and \citet{thom2009longstaff} to compute corresponding option deltas.
We note that these approaches only compute option prices and deltas at a given point (e.g., $t=0$)\footnote{Although one may consider using the Longstaff-Schwartz regressed values as an estimate of the spacetime prices, Figure 1 in \citet{bouchard2012monte} shows that using such regressed values as the spacetime solution is inaccurate. Alternatively, one may consider applying the Longstaff-Schwartz method repeatedly on all the spacetime points, where {\it every} point requires $M\to\infty$ samples. However, this is expensive.}.
However, we emphasize that price and delta at a given point are insufficient for a complete delta hedging process, which requires computing prices and deltas on the entire spacetime \citep[see][for explanations and concrete examples]{hull2003options,he2006calibration,kennedy2009dynamic}.
Furthermore, for the Longstaff-Schwartz method, a set of $\chi$-th degree polynomials is normally used as the basis for regression, which leads to $\chi$-th degree complexity (rather than exponential complexity). However, $\chi$ is required to go to infinity for convergence \citep{longstaff2001valuing,stentoft2004convergence}, which still results in a high complexity.

In this paper, we propose a deep neural network framework for solving high-dimensional American option problems. The major contributions of the proposed neural network framework are summarized as follows:
\begin{itemize}[leftmargin=10mm]
\item Our neural network architecture is new.
Assuming that there are $N$ discrete timesteps, we design a sequence of $N$ recursively-defined feedforward neural networks\footnote{Here the proposed ``recursively-defined" feedforward network is not the same as the Recurrent Neural Network (RNN) in the literature, which will be explained in Section \ref{subsec:NN}.}, where each network extracts the difference between the price functions of adjacent timesteps.
\item Our neural network formulation utilizes domain knowledge of American options, including smoothing the payoff at $t=T$, adding the payoff and the previous continuation price as features, etc.
\item We introduce the least squares residual of the associated backward stochastic differential equation (BSDE) as the loss function of neural networks. BSDE couples prices and deltas in one single equation, and thus evaluates both prices and deltas accurately.
\item Our proposed approach can evaluate both option prices and option deltas on the entire spacetime, not only at a given point.
\item The computational cost of the proposed neural network framework grows quadratically with the dimension $d$, in contrast to exponential growth as in the Longstaff-Schwartz method. In particular, our approach outperforms the Longstaff-Schwartz method when $d\geq 20$, in the sense that our proposed approach solves American option prices and deltas in as high as 200 dimension, while the Longstaff-Schwartz method fails to solve the problems due to the out-of-memory error and the worse-than-quadratic cost.
\end{itemize}

We note that this paper is not the only neural network framework for American option problems. Early research of neural networks in American options can be found in \citet{kohler2010pricing} and \citet{haugh2004pricing}. They consider using one-hidden-layer (shallow) feedforward neural networks for option pricing. However, the highest dimension considered in their numerical simulations is 10.
Very recently, several types of deep neural network approaches were proposed in \citet{sirignano2018dgm,weinan2017deep,beck2017machine,han2017overcoming} and \citet{fujii2017asymptotic}. They suggest that increasing the depth of neural networks is important in pushing the solutions to higher dimensions.
Similar to these approaches, our proposed framework is also a deep neural network approach. However, we emphasize that there are a few key differences between our proposed approach and the other deep neural network approaches.
\begin{itemize}[leftmargin=10mm]
\item {\it Different computed quantities:}
Our approach computes American option prices and deltas on the entire spacetime.
The approach in \citet{sirignano2018dgm} computes prices but not deltas.
The approaches in \citet{weinan2017deep,beck2017machine} and \citet{han2017overcoming} only consider European option prices, noting that European options are easier to price than American options.
Although \citet{fujii2017asymptotic} extends their methods to American options, the authors only compute the price at a given point.
In particular, we emphasize that only our paper discusses and simulates hedging options, which is beyond merely pricing options.
\item {\it Different network architectures:}
Our network architecture is a chain of recursively-defined networks that learn the difference of the price functions between adjacent timesteps; \citet{sirignano2018dgm} uses a long short-term neural network that learns the price function itself;
\citet{weinan2017deep,beck2017machine,han2017overcoming} and \citet{fujii2017asymptotic} consider a chain of isolated, independent feedforward networks.
\item {\it Different loss functions:}
The approach in \citet{sirignano2018dgm} defines the loss function by the residual of the Hamilton-Jacobi-Bellman partial differential equation emerging from the Black-Scholes theory. It involves computing the Hessian of the output price function, which is expensive in both time and memory, and is difficult to implement. Our framework uses the residual of one single BSDE as the loss function, which avoids computing the Hessian.
The approaches in \citet{weinan2017deep,beck2017machine,han2017overcoming} and \citet{fujii2017asymptotic} involve the integral form of multiple BSDEs, which is redundant for option pricing. In addition, their BSDEs are 
not used as loss functions.
\end{itemize}

The paper is organized as follows. Section \ref{sec:American} defines the American option problems. Section \ref{sec:BSDE} introduces the BSDE formation and the least squares residual loss function. Section \ref{sec:NN} describes the architecture and components of the proposed neural network model. Section \ref{sec:key} discusses the techniques that improve the performance of the framework. Section \ref{sec:cost} analyzes the computational cost. In Section \ref{sec:numerical}, we present numerical solutions of option prices and deltas to illustrate the advantage of our deep neural network framework. Section \ref{sec:conclusion} concludes the paper.


\section{American Options}
\label{sec:American}

Suppose an American option contains a basket of $d$ underlying assets. Let $\vec{S}=(S_1, \cdots, S_d)^T\in\mathbb{R}^d$ be the prices of the underlying assets. Note that $\vec{S}$ is a random variable. In order to distinguish random and deterministic variables, we will use capital and lowercase letters respectively. Let $t\in [0,T]$ be the time up to the expiry $T$. Let $r$ be the interest rate. Let $\delta_i$ and $\sigma_i$ ($i = 1, \cdots, d$) be the dividend and volatility of each underlying asset. Let $\rho\in\mathbb{R}^{d\times d}$ be a correlation matrix. Define $d$ correlated random variables
$
dW_i(t) = 
\sum_{j=1}^d L_{ij} \phi_j(t) \sqrt{dt},
$
where
$\phi_i(t) \sim \mathcal{N}(0,1)$
are independent standard normal random variables, and $L$ is the Cholesky factorization of the correlation matrix, i.e., $\rho = LL^T$.
Given an initial state $\vec{s}^0\in\mathbb{R}^d$, the prices of the underlying assets $\vec{S}$ evolve under the following stochastic differential equations (SDEs):
\begin{equation}
\label{eq:SDE_S}
\begin{array}{l}
dS_i(t) = (r-\delta_i) S_i(t)dt
+ \sigma_i S_i(t) dW_i(t),
\quad
i = 1, \cdots, d,
\\
\vec{S}(0) = \vec{s}^0.
\end{array}
\end{equation}
Let $f(\vec{s})$ be the payoff function of the option at the state $\vec{s}$, which usually takes the form of
\begin{equation}
\label{eq:payoff}
f(\vec{s}) = \max(g(\vec{s}), 0).
\end{equation}
For instance, if we denote the strike price as $K$, then the commonly-seen max call options have the payoff function of
$
f(\vec{s})=\max\left(\max_{i=1,...,d} (s_i) \, -K, \,0\right)
$.

Let $c(\vec{s},t)$ be the continuation price, i.e., the discounted option payoff provided that the option is not exercised at time $t$ and state $\vec{s}$:
\begin{equation}
\label{eq:AmerOp_def1}
c(\vec{s},t)
= \max_{\tau\in[t,T]}
\mathbb{E}
\left[ \,
\left.
e^{-r(\tau-t)} f(\vec{S}(\tau))
\, \right| \,
\vec{S}(t) = \vec{s}
\, \right],
\end{equation}
where $\tau$ is the stopping time. Then the American option price $v(\vec{s},t)$ is defined as
\begin{equation}
\label{eq:AmerOp_def2}
\begin{array}{rl}
v(\vec{s},t)
\,
= &
\max\left[
c(\vec{s},t),
f(\vec{s})
\right]
\\
= &
\left\{
\begin{array}{ll}
c(\vec{s},t),
&
\text{ if }
c(\vec{s},t) > f(\vec{s})
\text{, i.e., the option is continued at } (\vec{s},t),
\\
f(\vec{s}),
&
\text{ if }
c(\vec{s},t) \leq f(\vec{s})
\text{, i.e., the option is exercised at } (\vec{s},t).
\end{array}
\right.
\end{array}
\end{equation}

In practical application of hedging, we are also interested in the first derivative of the American option price,
$\vec{\nabla} v (\vec{s},t)
\equiv
\left(
\frac{\partial v}{\partial s_1}(\vec{s},t),
\cdots
\frac{\partial v}{\partial s_d}(\vec{s},t)
\right)^T$.
This is called the ``delta" of the American option. The objective of this paper is to solve for both the option price
$v(\vec{s},t)$
and the option delta
$\vec{\nabla} v (\vec{s},t)$ on the entire spacetime.


\section{Backward Stochastic Differential Equation (BSDE) Formulation}
\label{sec:BSDE}

\subsection{BSDE formulation}
\label{subsec:BSDE}

Our approach is to first convert the American option problem into a backward stochastic differential equation (BSDE) using the following theorem:

%
%

\begin{theorem}[BSDE formulation]
\label{thm:BSDE}
Assume that an American option is not exercised at time $[t, t + dt]$. Then the continuation price of an American option at time $t$ satisfies the following BSDE:
\begin{equation}
\label{eq:SDE_c}
dc(\vec{S},t)
= rc(\vec{S},t) dt
+ \sum_{i=1}^d
\sigma_i S_i(t)
\frac{\partial c}{\partial s_i}(\vec{S},t)
dW_i(t),
\end{equation}
where $\vec{S}$ satisfies the SDE (\ref{eq:SDE_S}), and $r$, $\sigma_i$ and $dW_i(t)$ are the same as in (\ref{eq:SDE_S}).
\end{theorem}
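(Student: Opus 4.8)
The plan is to obtain \eqref{eq:SDE_c} by applying It\^o's formula to the process $c(\vec{S}(t),t)$ and then using the fact that, inside the continuation region, the discounted continuation price is a martingale. The starting observation is that the hypothesis ``the option is not exercised on $[t,t+dt]$'' places $(\vec{s},t)$ strictly in the continuation region $\{c(\vec{s},t)>f(\vec{s})\}$, where by \eqref{eq:AmerOp_def2} we have $v=c$ and the optimal stopping time for \eqref{eq:AmerOp_def1} is not $t$ itself. By the dynamic programming principle for the optimal stopping problem, the discounted value $M(s):=e^{-rs}c(\vec{S}(s),s)$ is then a martingale as long as the process stays in this region.

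First I would expand $dc(\vec{S}(t),t)$ via the multidimensional It\^o formula,
\begin{equation}
dc = \frac{\partial c}{\partial t}\,dt + \sum_{i=1}^d \frac{\partial c}{\partial s_i}\,dS_i + \frac12\sum_{i,j=1}^d \frac{\partial^2 c}{\partial s_i\,\partial s_j}\,dS_i\,dS_j ,
\end{equation}
and substitute the asset dynamics \eqref{eq:SDE_S} together with the quadratic covariation $dW_i\,dW_j=\rho_{ij}\,dt$, which follows from $\rho=LL^T$ and the independence of the $\phi_j$. Collecting the $dt$ terms produces the drift $\big[\tfrac{\partial c}{\partial t}+\mathcal{L}c\big]\,dt$, where $\mathcal{L}c=\sum_{i=1}^d(r-\delta_i)S_i\frac{\partial c}{\partial s_i}+\tfrac12\sum_{i,j=1}^d\rho_{ij}\sigma_i\sigma_j S_iS_j\frac{\partial^2 c}{\partial s_i\,\partial s_j}$ is the generator of \eqref{eq:SDE_S}, while the stochastic terms give exactly $\sum_{i=1}^d\sigma_i S_i\frac{\partial c}{\partial s_i}\,dW_i$.

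Next I would pin down the drift. Applying It\^o to $M(s)=e^{-rs}c$ yields $dM=e^{-rs}\big[\tfrac{\partial c}{\partial t}+\mathcal{L}c-rc\big]\,dt+(\text{martingale part})$. Since $M$ is a martingale in the continuation region, its drift must vanish, so $c$ satisfies the Black--Scholes equation $\frac{\partial c}{\partial t}+\mathcal{L}c-rc=0$ there. Substituting $\tfrac{\partial c}{\partial t}+\mathcal{L}c=rc$ back into the expression for $dc$ collapses its drift to $rc\,dt$, leaving precisely \eqref{eq:SDE_c}.

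The main obstacle is the first step, namely rigorously justifying the martingale property of $M$ under the no-exercise hypothesis, rather than the subsequent It\^o bookkeeping, which is routine. This requires the optimal-stopping characterization of \eqref{eq:AmerOp_def1} together with enough regularity of $c$ inside the continuation region for It\^o's formula to apply; alternatively, one may simply take the Black--Scholes pricing equation for the American continuation price as known from arbitrage-pricing theory and proceed directly to the It\^o computation, in which case the drift cancellation is immediate.
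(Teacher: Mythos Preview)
Your proposal is correct and follows the same route the paper points to: the paper's own ``proof'' is merely a reference to \citet{el1997backward} and \citet{leentvaar2008pricing} with the remark that it ``uses Ito's lemma,'' and your argument is precisely the standard It\^o-plus-martingale/Black--Scholes computation those references carry out. You have supplied the details the paper omits, and the acknowledged soft spot---regularity of $c$ in the continuation region so that It\^o applies---is exactly the technicality deferred to those references.
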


\begin{proof}
We refer interested readers to the proof in \citet{el1997backward} and \citet{leentvaar2008pricing}, which uses Ito's lemma. 
\end{proof}

The significance of the BSDE formulation (\ref{eq:SDE_c}) is two-fold. One is that it correlates the price $c(\vec{s},t)$ with the delta $\vec{\nabla} c (\vec{s},t)$. If the price is solved correctly, then (\ref{eq:SDE_c}) simultaneously yields the correct delta. A simultaneously correct evaluation of the price and the delta is essential for performing a complete hedging process.

The other significance is that the BSDE formulation allows a less expensive and more manageable neural network approach compared to other formulations. In fact, other than the BSDE formulation, American option problems can also be formulated as a Hamilton-Jacobi-Bellman partial differential equation (PDE) based on the Black-Scholes theory. \citet{sirignano2018dgm} considers a neural network approach for solving the PDE, which involves computing Hessian tensors. Unfortunately, a Hessian tensor is an $O(Md^2)$ tensor, where $M$ is the number of samples for a neural network. When $d$ is high, a Hessian tensor can be expensive to compute and store.
In addition, given a neural network, the automatic differentiation of a Hessian is nearly impossible to derive, which makes it difficult to implement using existing deep learning libraries.
However, unlike the PDE formulation, the BSDE formulation (\ref{eq:SDE_c}) does not contain a Hessian, which avoids the computation and storage of Hessian tensors. Instead, it only requires computing price tensors of size $O(M)$ and delta tensors of size $O(Md)$. In addition, delta tensors can be easily evaluated by the built-in automatic differentiation of Tensorflow \citep{abadi2016tensorflow}, i.e., ``tf.gradients".

In this paper, we use an Euler timestepping Monte Carlo method to simulate the SDEs (\ref{eq:SDE_S}) and the BSDE (\ref{eq:SDE_c}). Let $m = 1, \cdots, M$ be the indices of simulation paths, $n=0, \cdots, N$ be the indices of discrete timesteps from 0 to $T$, $\Delta t = \frac{T}{N}$, $t^n = n\Delta t$ be the timesteps, and $(\Delta W_i)^n_m = 
\sum_{j=1}^d L_{ij} (\phi_j)^n_m \sqrt{\Delta t}$.
We discretize (\ref{eq:SDE_S}) as
\begin{align}
\label{eq:Euler_S0}
&
(S_i)^0_m = s_i^0,
&
i = 1, \cdots, d;
\\
\label{eq:Euler_S}
&
(S_i)^{n+1}_m
= (1 + (r-\delta_i)\Delta t) (S_i)^n_m
+ \sigma_i (S_i)^n_m (\Delta W_i)^n_m,
&
n = 0, \cdots, N-1, \;\;
i = 1, \cdots, d.
\end{align}

We also discretize (\ref{eq:SDE_c}) as
\begin{equation}
\label{eq:Euler_c}
c(\vec{S}^{n+1}_m, t^{n+1})
= (1 + r\Delta t) c(\vec{S}^n_m, t^n)
+ \sum_{i=1}^d
\sigma_i (S_i)^n_m
\frac{\partial c}{\partial s_i}(\vec{S}^n_m, t^n)
(\Delta W_i)^n_m,
\quad
n = N-1, \cdots, 0.
\end{equation}

Theorem \ref{thm:BSDE} assumes that an American option is not exercised at time $[t, t + dt]$. More generally, if we allow the option to be exercised at any time after $t$, then we can replace
$c(\vec{S}^{n+1}_m, t^{n+1})$
on the left hand side of (\ref{eq:Euler_c}) by
$v(\vec{S}^{n+1}_m, t^{n+1})$. In addition, we add the expiry condition $v(\vec{s}, T) = f(\vec{s})$ into the discretization. This yields a complete discretized system for the BSDE:
\begin{align}
\label{eq:Euler_vT}
&
v(\vec{S}^N_m, t^N) = f(\vec{S}^N_m),
\hspace{8.7cm}
n = N.
\\
\label{eq:Euler_v}
&
\text{Solve }
(1 + r\Delta t) c(\vec{S}^n_m, t^n)
+ \sum_{i=1}^d
\sigma_i (S_i)^n_m
\frac{\partial c}{\partial s_i}(\vec{S}^n_m, t^n)
(\Delta W_i)^n_m,
= v(\vec{S}^{n+1}_m, t^{n+1})
\text{ for } c(\vec{S}^n_m, t^n),
\\
\label{eq:Euler_v1}
&
\text{and then compute }
v(\vec{S}^n_m, t^n)
= \max\left[
c(\vec{S}^n_m, t^n),
f(\vec{S}^n_m)
\right],
\hspace{2.8cm}
n = N-1, \cdots, 0.
\end{align}

To sketch the idea of solving the discretized BSDE, let (\ref{eq:Euler_S0})-(\ref{eq:Euler_S}) generate samples of underlying asset prices $\{\vec{S}^n_m\}$ for all $n$'s and $m$'s. Then one starts with $n=N$, computes the expiry condition (\ref{eq:Euler_vT}), and then performs backward timestepping from $n=N-1$ to $n=0$ using (\ref{eq:Euler_v})-(\ref{eq:Euler_v1}), which yields $\{v(\vec{S}^n_m, t^n)\}$ for all $n$'s and $m$'s. Eventually, at $n=0$, noting that $\vec{S}^0_m=\vec{s}^0$ by (\ref{eq:Euler_S0}), we obtain the option price $v(\vec{s}^0,0)$ and the option delta $\vec{\nabla} v (\vec{s}^0,0)$.

\subsection{Least squares solution for the discretized BSDE}
\label{subsec:LeastSq}

Consider only the $n$-th timestep $t^n$, and introduce a short notation for the corresponding price and delta functions as
$v^n(\vec{s}) \equiv v(\vec{s},t^n)$
and
$\vec{\nabla} v^n(\vec{s})
\equiv \vec{\nabla} v(\vec{s},t^n)$.
Solving (\ref{eq:Euler_v}) requires finding a $d$-dimensional function $c^n(\vec{s})$ where both the function $c^n(\vec{s})$ itself and its derivative $\vec{\nabla} c^n(\vec{s})$ satisfy (\ref{eq:Euler_v}). This is challenging, especially when $d$ is large.

In this paper, we consider finding an approximation of the continuous price function. We let the approximation satisfy (\ref{eq:Euler_v}) in a least squares sense. More specifically, define the residual of (\ref{eq:Euler_v}) as the difference between the left and right hand sides:
\begin{equation}
\label{eq:LeastSq1}
\begin{array}{r}
\displaystyle
\mathcal{R}[c^n]_m
\equiv
(1 + r\Delta t) c^n(\vec{S}^n_m)
+ \sum_{i=1}^d
\sigma_i (S_i)^n_m
\frac{\partial c^n}{\partial s_i}(\vec{S}^n_m)
(\Delta W_i)^n_m
-v^{n+1}(\vec{S}^{n+1}_m),
\\
m = 1, \cdots, M.
\end{array}
\end{equation}
Then our goal is to find an approximation $y^n$ to the actual continuation function $c^n$ that minimizes the least squares residual:
\begin{equation}
\label{eq:LeastSq2}
c^n
\approx
(y^n)^*
\equiv
\argmin_{y^n}
\left(
\sum_{m=1}^M
\mathcal{R}[y^n]_m^2
\right).
\end{equation}


\section{Neural Network Formulation}
\label{sec:NN}

Finding the optimal approximate function in the least squares sense (\ref{eq:LeastSq2}) is non-trivial. One approach is to use a parameterized function to represent the approximate function $y^n$. Then the optimization problem in terms of function space is converted to the optimization problem in terms of parameter space, which is more manageable.

One well-known example of the parameterized approach is the Longstaff-Schwartz method \citep{longstaff2001valuing}. More specifically, the continuation price is approximated by a $\chi$-th degree polynomial. We note that unlike our approach, their objective is not to minimize the least squares residual of the BSDE (\ref{eq:LeastSq1}), but to minimize the least squares difference between the discounted payoffs and the parameterized polynomials. In practical implementation of the Longstaff-Schwartz method, we let $\chi\ll d$, which means that the number of the polynomial basis is
$\left(
\begin{smallmatrix}
d+\chi \\ d
\end{smallmatrix}
\right)
\approx \frac{1}{\chi!}d^{\chi}$.
However, convergence of the Longstaff-Schwartz method to the exact American option prices requires the number of the basis tending to infinity, i.e., $\chi\to\infty$ \citep{longstaff2001valuing,stentoft2004convergence}, which results in a high computational cost. In addition, a pre-defined, static polynomial basis may not be the optimal choice for American options.

\subsection{Sequence of neural networks}
\label{subsec:NN}

Our approach is to use neural networks to represent the approximate continuation price function $y^n$. A neural network is a nonlinear parameterization where the basis is dynamic, i.e., the optimal basis is learned during the training process \citep{goodfellow2016deep}. The main advantage of neural network formulation is that the complexity does not grow exponentially with the dimension $d$.

The architecture of neural network determines the proximity between the global minimum of the loss function and the true underlying price function, the landscape of the loss function, and the level of difficulty for optimization algorithms to find the global minimum. These directly impact the accuracy of the approximate price function. There exist many neural network architectures, such as feedforward, convolutional, or recurrent networks. We refer interested readers to \citet{goodfellow2016deep} for a review of these standard network architectures. However, these standard networks are not designed for solving American option problems.

In this paper, we propose a sequence of $N$ networks
$\{y^n (\vec{s};\Omega^n)
\,|\,
n = N-1, \cdots, 1,0\}$,
where $\Omega^n$ is the trainable parameter set of the $n$-th network. Each individual network $y^n (\vec{s};\Omega^n)$ approximates the price function at the $n$-th timestep $c^n(\vec{s})$. The design of each individual network is motivated by the fact that the approximate function of the $n$-th timestep,
$y^n (\vec{s};\Omega^n)$,
should differ from
$y^{n+1} (\vec{s};\Omega^{n+1})$
by a function of magnitude $O(\Delta t)$. Mathematically, it means that
\begin{align}
\label{eq:NN_def_payoff1}
&
y^N (\vec{s})
= f (\vec{s}),
&
&
n = N;
\\
\label{eq:NN_def_recursive_1}
&
y^n (\vec{s};\Omega^n)
= y^{n+1} (\vec{s};\Omega^{n+1})
+ \Delta t \cdot \mathcal{F}(\vec{s};\Omega^n),
&
&
n = N-1, \cdots, 0;
\end{align}
where
$\mathcal{F}(\vec{s};\Omega^n)$
is the difference between the approximate functions at the two adjacent timesteps, or the ``remainder" that we aim to find. We note that the sequence of networks (\ref{eq:NN_def_recursive_1}) is defined in a recursive sense. In addition, the sequence of networks is backward in time, i.e., the timestep $n$ decreases from $N-1$ to 0. Hence, in this paper, we use the ``previous", ``current" and ``next" timesteps to refer to the $(n+1)$-th, $n$-th and $(n-1)$-th timesteps, respectively.

Regarding each remainder network
$\mathcal{F}(\vec{s};\Omega^n)$,
we parameterize it by an $L$-layer feedforward network with batch normalizations. In the following part, we drop the timestep index $n$ temporarily, and use superscript with square brackets for the layer index $l = 0, \cdots, L$. Let the dimensions of the layers be
$\{d^{[l]} \, | \, l = 0, \cdots, L\}$.
Let the input of the neural network be
$\vec{x}^{[0]} = \vec{s} \in \mathbb{R}^{d^{[0]}}$,
where the input dimension is $d^{[0]}=d$.
Then we construct an $L$-layer feedforward neural network as follows:
\begin{itemize}[leftmargin=10mm]
\item For the hidden layers, $l = 1, \cdots, L$:
\begin{align}
\label{eq:NN_layer_linear}
\text{linear transformation: }
&
\vec{z}^{[l]}
= \mathbf{W}^{[l]} \cdot \vec{x}^{[l-1]},
\\
\label{eq:NN_layer_batch}
\text{batch normalization: }
&
\vec{h}^{[l]}
= \text{bnorm}(\vec{z}^{[l]};
\vec{\beta}^{[l]}, \vec{\gamma}^{[l]},
\vec{\mu}^{[l]}, \vec{\sigma}^{[l]}),
\\
\label{eq:NN_layer_activate}
\text{rectified linear unit activation: }
&
\vec{x}^{[l]}
= \max(\vec{h}^{[l]}, 0),
\end{align}
where
\begin{equation}
\label{eq:NN_layer_batch2}
\text{bnorm}(\vec{x};
\vec{\beta}, \vec{\gamma},
\vec{\mu}, \vec{\sigma})
\equiv
\vec{\gamma}\cdot
\frac{\vec{x}-\vec{\mu}}{\vec{\sigma}}
+ \vec{\beta}
\end{equation}
is the batch normalization operator,
$\vec{x}^{[l]},\vec{z}^{[l]},\vec{h}^{[l]}
\in\mathbb{R}^{d^{[l]}}$
are hidden layer variables,
$\mathbf{W}^{[l]}
\in\mathbb{R}^{d^{[l]}\times d^{[l-1]}}$
are trainable weights,
$\vec{\mu}^{[l]},\vec{\sigma}^{[l]}
\in\mathbb{R}^{d^{[l]}}$
are moving averages of batch means and standard deviations, and
$\vec{\gamma}^{[l]},\vec{\beta}^{[l]}
\in\mathbb{R}^{d^{[l]}}$
are trainable scales and offsets.
The operations in (\ref{eq:NN_layer_batch})-(\ref{eq:NN_layer_batch2}) are evaluated element-wise. For instance, (\ref{eq:NN_layer_activate}) means
$
x_i^{[l]}
= \max(h_i^{[l]}, 0)
$
for all $i=1,\cdots, d^{[l]}$.
\item For the output layer:
\begin{equation}
\label{eq:NN_layer_output}
\mathcal{F}(\vec{s};\Omega^n)
= \vec{\omega} \cdot \vec{x}^{[L]} + b,
\end{equation}
where
$\vec{\omega}\in\mathbb{R}^{d^{[L]}}$,
$b\in\mathbb{R}$ are trainable weight and bias.
\end{itemize}

In addition, we propose adding a scaling parameter $\alpha^n$ to each neural network (\ref{eq:NN_def_recursive_1}) and revise it as
\begin{equation}
\label{eq:NN_def_recursive_2}
y^n (\vec{s};\,\Omega^n)
= \alpha^n \left[
y^{n+1} (\vec{s};\,\Omega^{n+1})
+ \Delta t \cdot \mathcal{F}(\vec{s};\Omega^n)
\right],
\quad
n = N-1, \cdots, 0.
\end{equation}
We let $\alpha^n$ be trainable, or equivalently, $\alpha^n\in \Omega^n$. $\alpha^n$ is initialized as 1 before training, and is close to 1 during and after training. Introducing the trainable parameter $\alpha^n$ expands the function space the neural network can represent. A neural network with a larger function space is less likely to underfit, and thus more likely to have an accurate training result \citep{goodfellow2016deep}.

We remark that our proposed recursive architecture (\ref{eq:NN_def_recursive_2}) is different from the other architectures in the literature, particularly \citet{sirignano2018dgm}, where one single neural network is used to represent the spacetime price function. To justify our choice of the recursive architecture, we note that the true price functions $c^{n+1} (\vec{s})$ and $c^{n} (\vec{s})$ differ by a function of magnitude $O(\Delta t)$. In (\ref{eq:NN_def_recursive_2}), if we let $y^{n+1} (\vec{s};\Omega^{n+1})\approx c^{n+1} (\vec{s})$ and $\alpha^n \approx 1$, then regardless of the value of $\mathcal{F}(\vec{s};\Omega^n)$, $y^n (\vec{s};\Omega^n)$ will only differ from the true price function $c^n (\vec{s})$ by a magnitude of $O(\Delta t)$. Hence, before training starts, $y^n (\vec{s};\Omega^n)$ is already a good approximation of $c^n (\vec{s})$. This makes it more likely for the training to find the optimal solution that (almost) equals $c^n (\vec{s})$. Therefore, the recursive architecture is critical to the accuracy of the resulting prices and deltas.

\subsection{Smoothing payoff functions}
\label{subsec:smooth}

We note that most of the payoff functions in practical applications have the form of (\ref{eq:payoff}), which is not differentiable at $g(\vec{s})=0$. In other words, $y^{N} (\vec{s})$ in (\ref{eq:NN_def_payoff1}) is not differentiable. However, $y^{N-1} (\vec{s};\Omega^{N-1})$ as an approximation of the continuation price function is differentiable. Consequentially, the left and right hand sides of
$
y^{N-1} (\vec{s};\,\Omega^{N-1})
= \alpha^{N-1} \left[
y^{N} (\vec{s})
+ \Delta t \cdot \mathcal{F}(\vec{s};\Omega^{N-1})
\right],
$
are inconsistent in terms of differentiability. 
Such inconsistency makes it difficult to learn an accurate $\mathcal{F}(\vec{s};\Omega^{N-1})$, which negatively affects the accuracy of the trained $y^{N-1} (\vec{s};\Omega^{N-1})$, and furthermore, the accuracy of the trained $y^n (\vec{s};\Omega^n)$ in the subsequent timesteps. In this paper, we propose smoothing the function $y^{N} (\vec{s})$ in (\ref{eq:NN_def_payoff1}) as follows:
\begin{equation}
\label{eq:NN_def_payoff2}
y^{N} (\vec{s})
= f_{\kappa}(\vec{s})
\equiv \frac{1}{\kappa}
\ln\left(
1 + e^{\kappa g(\vec{s})}
\right),
\end{equation}
where $\kappa$ is a user-defined parameter. The operations in (\ref{eq:NN_def_payoff2}) are evaluated element-wise. $f_{\kappa}(\vec{s})$ converges to $f(\vec{s})$ when $\kappa\to\infty$, and is a good approximation of $f(\vec{s})$ when $\kappa$ is large. The significance of (\ref{eq:NN_def_payoff2}) is that $f_{\kappa}(\vec{s})$ is differentiable, which makes it easier to train an accurate $\mathcal{F}(\vec{s};\Omega^{N-1})$. In practice, we choose $\kappa=\frac{2}{\Delta t}$.
We note that smoothing payoff functions is a standard technique in the literature of binomial trees for option pricing \citep{heston2000rate}. However, to the best of our knowledge, this paper is the first to propose smoothing payoff functions among the literature of neural networks for option pricing.

\subsection{Feature selection}
\label{subsec:feature}

Feature selection, i.e., choosing the correct input features based on domain knowledge, has a great impact on the accuracy of neural network models \citep{goodfellow2016deep}. Naively one can simply set the input as the underlying asset prices $\vec{x}^{[0]} = \vec{s}$. In this paper, we consider adding two new features.

One new feature is the payoff function. It is suggested in \citet{kohler2010review} and \citet{firth2005high} that including the payoff in the nonlinear basis can improve the accuracy of the regression-based algorithms. In this paper, we consider using $g(\vec{s})$ in (\ref{eq:payoff}) as an input feature. The reason of using $g(\vec{s})$ rather than $f(\vec{s})$ is that the maximum operator in (\ref{eq:payoff}) is irreversible. In other words, $f(\vec{s})$ can be computed by $g(\vec{s})$ but not conversely. Hence, using $g(\vec{s})$ as the input contains more information than $f(\vec{s})$. The additional maximum operator in (\ref{eq:payoff}) can be learned by the activation function (\ref{eq:NN_layer_activate}) in the network.

The other new feature we consider adding is the output price function from the previous timestep, i.e., $y^{n+1} (\vec{s};\,\Omega^{n+1})$ in (\ref{eq:NN_def_recursive_2}). The intuition is that the solution at the $n$-th step should look similar to the solution at the $(n+1)$-th step. We note that this feature is similar but not exactly the same as the payoff function, which makes it useful as an additional feature. More specifically, when $n\approx N$, $y^{n+1}$ is approximately the same as but slightly smoother than the payoff function; when $n\ll N$, $y^{n+1}$ can be very different from the payoff function.

The accuracy of neural network models can be further improved by input normalization \citep{sola1997importance}. Effectively, we can combine the implementation of feature selection and input normalization by adding the following ``input layer" (denoted as $l=0$) before the hidden layer $l=1$:
\begin{align}
\label{eq:NN_layer_input_concat}
\text{feature concatenation: }
&
\vec{z}^{[0]}
= \left(
\vec{s}, \, g(\vec{s}), \,
y^{n+1} (\vec{s};\,\Omega^{n+1})
\right)^T \in \mathbb{R}^{d^{[0]}},
\\
\label{eq:NN_layer_input_batch}
\text{input normalization: }
&
\vec{x}^{[0]}
= \text{bnorm}(\vec{z}^{[0]};
\vec{\beta}^{[0]}, \vec{\gamma}^{[0]},
\vec{\mu}^{[0]}, \vec{\sigma}^{[0]}),
\end{align}
where the input dimension is changed to $d^{[0]}=d+2$ after the concatenation. We note that $\vec{\mu}^{[0]}$ and $\vec{\sigma}^{[0]}$ can be pre-computed from the entire training dataset, unlike $\vec{\mu}^{[l]}$ and $\vec{\sigma}^{[l]}$ in the hidden layers that are computed by moving averages of training batches.

To summarize Sections \ref{subsec:NN}-\ref{subsec:feature}, the architecture of the proposed neural network framework is defined by (\ref{eq:NN_def_recursive_2}) and (\ref{eq:NN_def_payoff2}), where the remainder network at each timestep
$\mathcal{F}(\vec{s};\Omega^n)$
is defined by the input layer (\ref{eq:NN_layer_input_concat})-(\ref{eq:NN_layer_input_batch}), the hidden layers (\ref{eq:NN_layer_linear})-(\ref{eq:NN_layer_activate}) and the output layer (\ref{eq:NN_layer_output}). The trainable parameters of the neural network framework are
$\{\Omega^n \, | \, n = N-1, \cdots, 0\}$,
where
\begin{equation}
\Omega^n \equiv
\{
(\mathbf{W}^{[l]})^n,
(\vec{\gamma}^{[l]})^n, (\vec{\beta}^{[l]})^n,
(\vec{\gamma}^{[0]})^n, (\vec{\beta}^{[0]})^n,
\vec{\omega}^n, b^n, \alpha^n
\,|\,
L = 1, \cdots, L\}.
\end{equation}
Figure \ref{fig:NN_1} illustrates the architecture of the proposed neural network framework.


\begin{figure}[h!]
\centering
\footnotesize
\includegraphics[height=5.3cm]{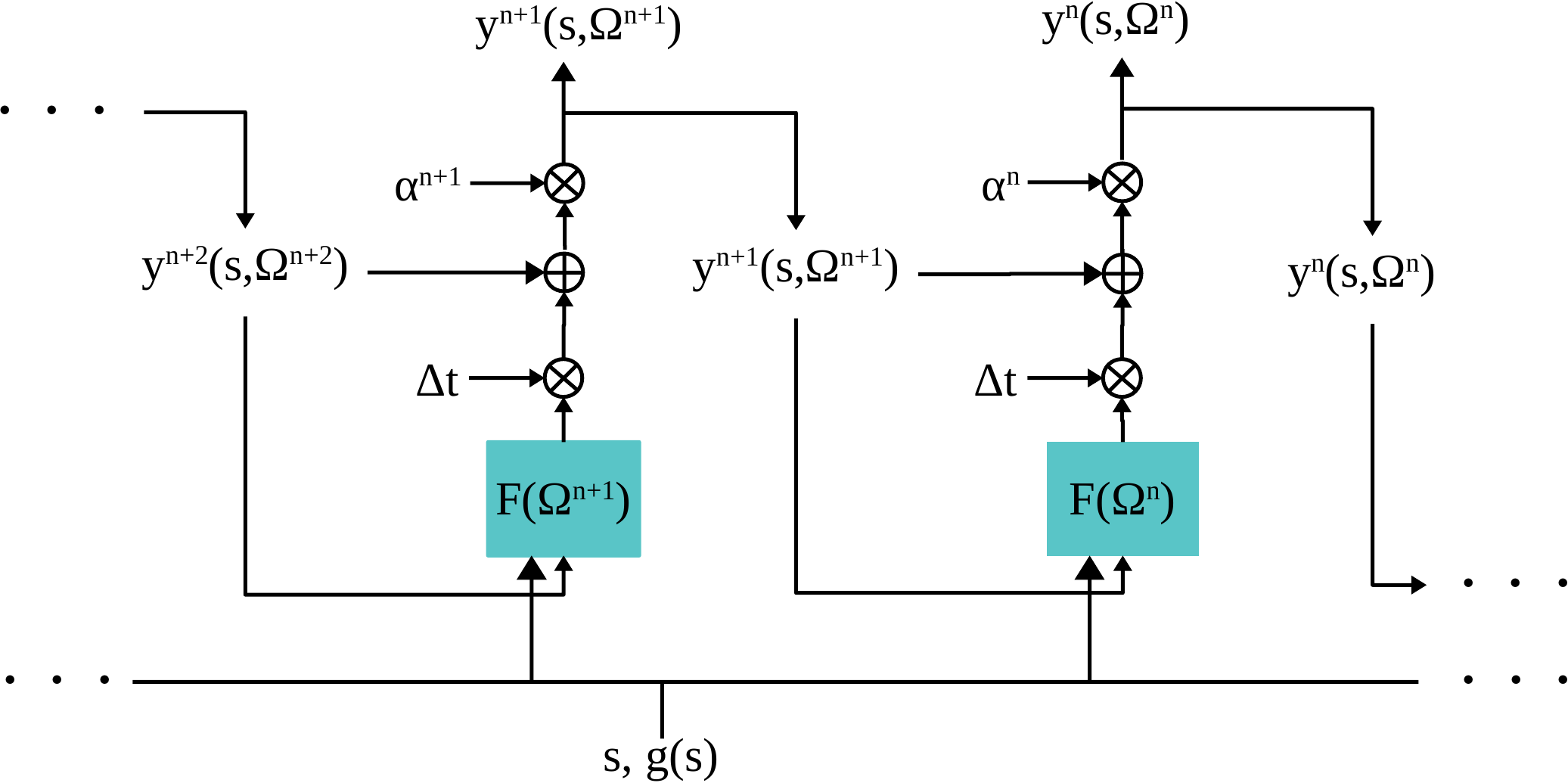}
\includegraphics[height=5.3cm]{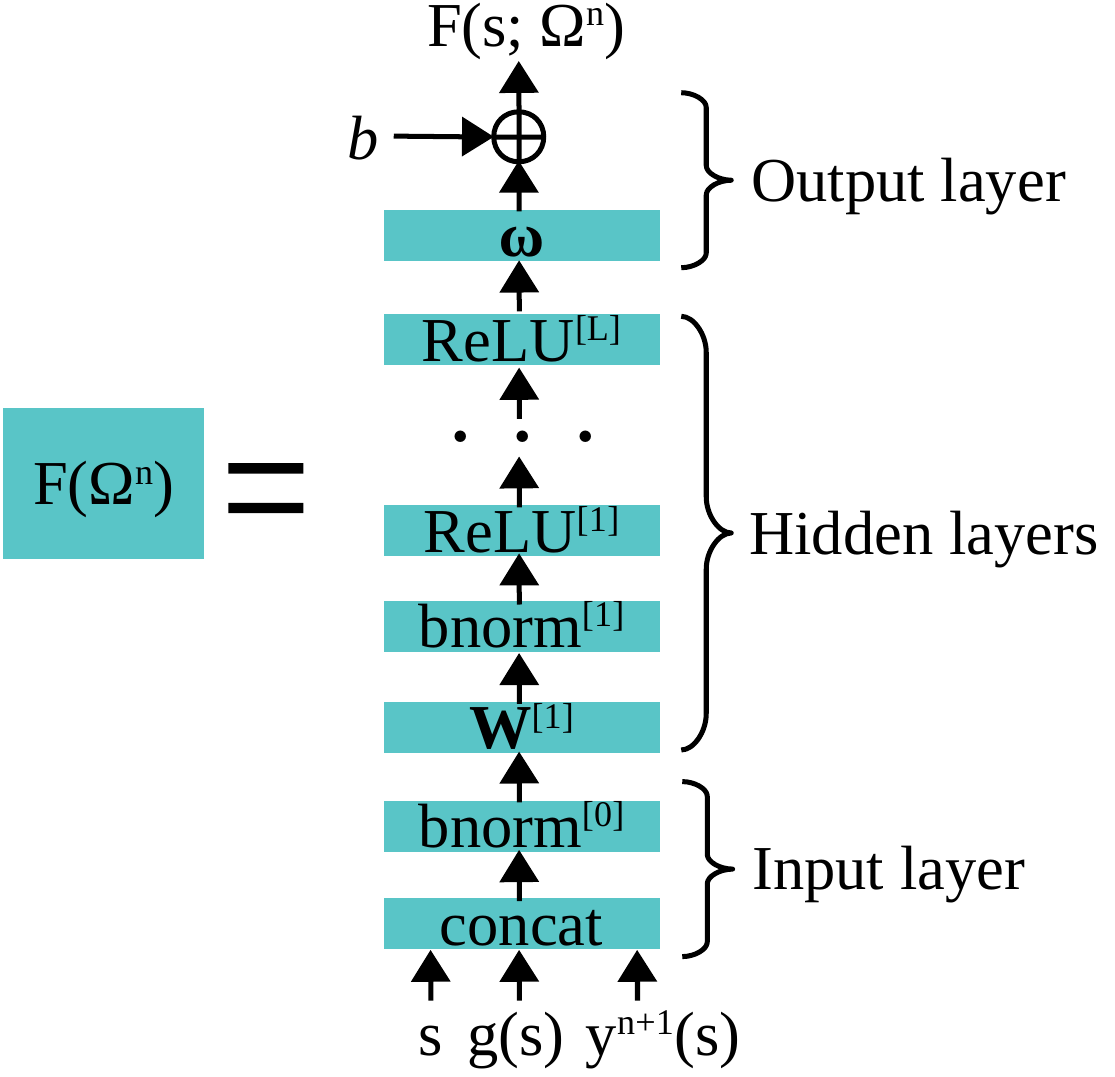}
\caption{\label{fig:NN_1}
The architecture of the proposed neural network framework defined by (\ref{eq:NN_def_recursive_2}) and (\ref{eq:NN_def_payoff2}), where the remainder network at each timestep $\mathcal{F}(\vec{s};\Omega^n)$ is defined by the input layer (\ref{eq:NN_layer_input_concat})-(\ref{eq:NN_layer_input_batch}), the hidden layers (\ref{eq:NN_layer_linear})-(\ref{eq:NN_layer_activate}) and the output layer (\ref{eq:NN_layer_output}). The symbols $\otimes$ and $\oplus$ represent multiplication and addition, respectively.
}
\end{figure}


\subsection{More efficient neural network sequence}
\label{subsec:NN2}

We discussed the advantage of the recursive architecture (\ref{eq:NN_def_recursive_2}) at the end of Section \ref{subsec:NN}. However, the recursive architecture is expensive when $N$ is large. More specifically, consider the $0$-th timestep, and consider using the sequence of the neural networks to compute the value of $y^0(\vec{s})$. By applying the recursive relation (\ref{eq:NN_def_recursive_2}), we have
\begin{equation}
\label{eq:y0_1}
y^0 (\vec{s})
= y^N(\vec{s})
+ \Delta t \cdot \sum_{\nu=1}^{N}
\mathcal{F}(\vec{s};\Omega^{N-\nu}),
\end{equation}
where for simplicity we set $\alpha^n=1$ for all timesteps. Equation (\ref{eq:y0_1}) shows that the computation of $y^0 (\vec{s})$ requires going through $N$ feedforward networks.

Here we propose a modified neural network architecture to reduce the computational cost. In Section \ref{subsec:NN}, we motivated the recursive relation (\ref{eq:NN_def_recursive_2}) based on the fact that the outputs of the two adjacent timesteps, $y^n(\vec{s})$ and $y^{n+1}(\vec{s})$, should differ by a function of magnitude $O(\Delta t)$. In fact, we can generalize this relation to any two timesteps $n$ and $n+j$ where $j \ll N$. That is, the outputs $y^n(\vec{s})$ and $y^{n+j}(\vec{s})$ should differ by a function of magnitude $O(\Delta t)$. Similar to (\ref{eq:NN_def_recursive_2}), we formulate this idea into the following recursive relation:
\begin{equation}
y^{n} (\vec{s};\,\Omega^{n})
= \alpha^{n} \left[
y^{n+j} (\vec{s};\,\Omega^{n+j})
+ j\Delta t \cdot \mathcal{F}(\vec{s};\Omega^{n})
\right].
\end{equation}
This generalization allows us to recur the feedforward networks at every few timesteps, rather than at every single timestep, and thus reduces the computational cost.

To be more precise, if we recur the feedforward networks at every $J$ timesteps ($J\ll N$), then we modify the sequence of the neural networks (\ref{eq:NN_def_recursive_2}) as follows:
\begin{equation}
\label{eq:NN_def_recursive_3}
\begin{array}{rl}
y^{n} (\vec{s};\,\Omega^{n})
= \alpha^{n} \left[
y^{n+\eta} (\vec{s};\,\Omega^{n+\eta})
+ \eta\Delta t \cdot \mathcal{F}(\vec{s};\Omega^{n})
\right],
&
\\
\text{where }
\eta \equiv [(N-n-1) \text{ mod } J] + 1,
&
n = N-1, \cdots, 0.
\end{array}
\end{equation}
Equivalently, we can enumerate (\ref{eq:NN_def_recursive_3}) as
\begin{equation}
\begin{array}{ll}
\text{at the $(n-1)$-th step:}
&
y^{n-1} (\vec{s};\,\Omega^{n-1})
\\
&
\,
= \alpha^{n-1} \left[
y^n (\vec{s};\,\Omega^n)
+ \Delta t \cdot \mathcal{F}(\vec{s};\Omega^{n-1})
\right],
\\
& \vdots
\\
\text{at the $(n-j)$-th step:}
&
y^{n-j} (\vec{s};\,\Omega^{n-j})
\\
&
\,
= \alpha^{n-j} \left[
y^n (\vec{s};\,\Omega^n)
+ j\Delta t \cdot \mathcal{F}(\vec{s};\Omega^{n-j})
\right],
\\
& \vdots
\\
\text{at the $(n-J)$-th step:}
&
y^{n-J} (\vec{s};\,\Omega^{n-J})
\\
&
\,
= \alpha^{n-J} \left[
y^n (\vec{s};\,\Omega^n)
+ J\Delta t \cdot \mathcal{F}(\vec{s};\Omega^{n-J})
\right],
\end{array}
\end{equation}
where $1\leq j\leq J$ and $n=N, N-J, N-2J, \cdots$. We remark that (\ref{eq:NN_def_recursive_2}) is simply a special case of (\ref{eq:NN_def_recursive_3}) with $J=1$. Figure \ref{fig:NN_2} illustrates the modified architecture with $J=3$. Readers can generalize the idea of Figure \ref{fig:NN_2} to any $J\ll N$.


\begin{figure}[t!]
\centering
\footnotesize
\includegraphics[height=5.3cm]{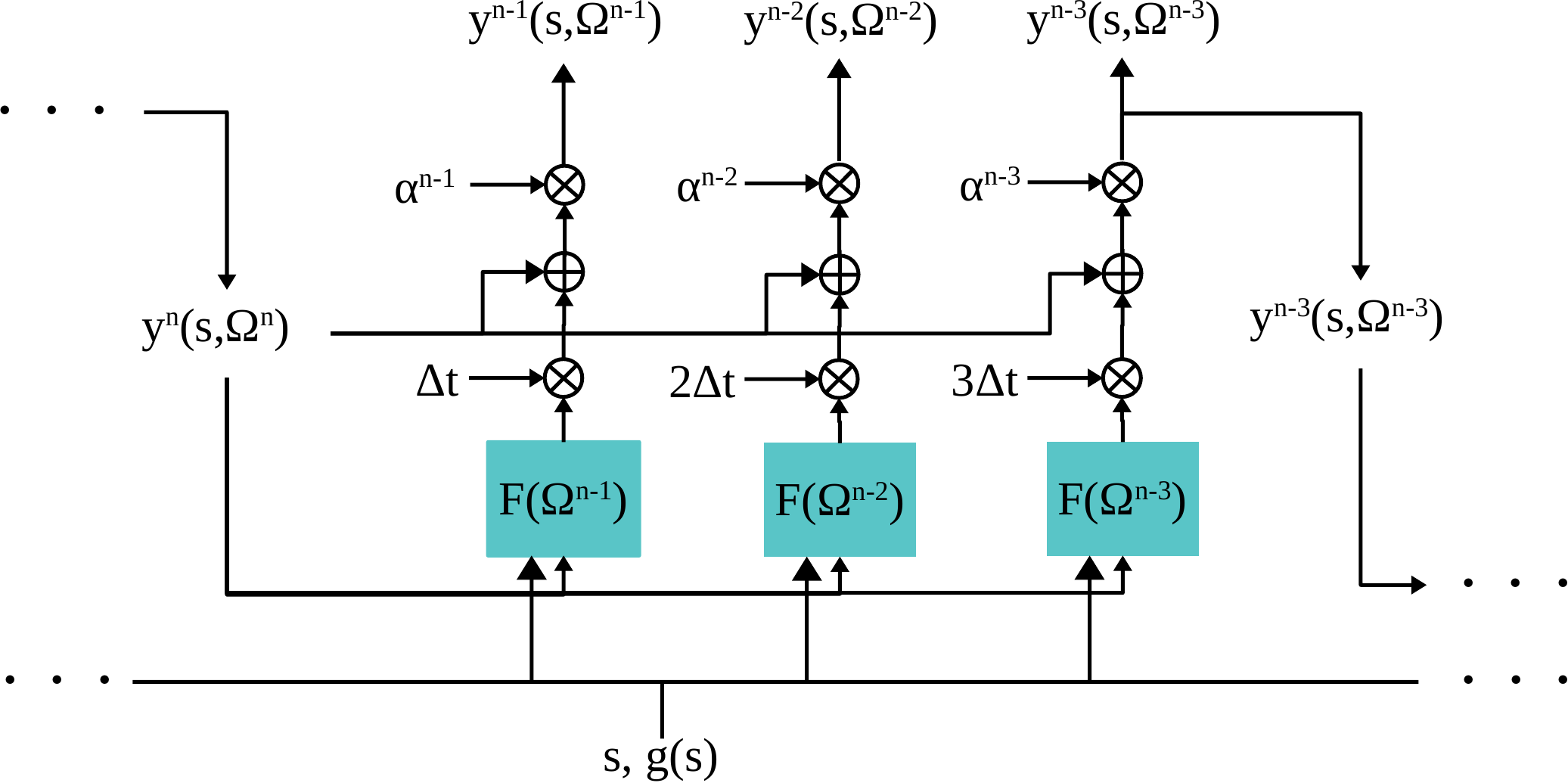}
\caption{\label{fig:NN_2}
The modified architecture of the proposed neural network framework defined by (\ref{eq:NN_def_payoff2}) and (\ref{eq:NN_def_recursive_3}), where $J=3$. Similar to Figure \ref{fig:NN_1}, the remainder network at each timestep $\mathcal{F}(\vec{s};\Omega^n)$ is defined by the input layer (\ref{eq:NN_layer_input_concat})-(\ref{eq:NN_layer_input_batch}), the hidden layers (\ref{eq:NN_layer_linear})-(\ref{eq:NN_layer_activate}) and the output layer (\ref{eq:NN_layer_output}).
}
\end{figure}


Regarding the choice of $J$, smaller $J$ yields more precise trained $y^n$ with higher computational cost; larger $J$ is computationally cheaper but the trained $y^n$ is less precise. In our numerical simulations, we choose $N=100$ and $J=4$.

To give an example of how the modified architecture reduces the computational cost, let us reconsider evaluating $y^0(\vec{s})$. By applying the recursive relation (\ref{eq:NN_def_recursive_3}), we have
\begin{equation}
\label{eq:y0_2}
y^0 (\vec{s})
= y^N(\vec{s})
+ J \Delta t \cdot \sum_{\nu=1}^{\lfloor N/J \rfloor}
\mathcal{F}(\vec{s};\Omega^{N-\nu J})
+ (N \text{ mod } J) \Delta t
\cdot \mathcal{F}(\vec{s};\Omega^0),
\end{equation}
where for simplicity we set $\alpha^n=1$ for all timesteps. Compared to (\ref{eq:y0_1}), using (\ref{eq:y0_2}) to compute $y^0 (\vec{s})$ only requires going through $\lceil N/J \rceil$ feedforward networks. In other words, the computation is $J$ times cheaper.

%

\subsection{Training the neural network}
\label{subsec:train}

Consider training the network at the $n$-th timestep for solving (\ref{eq:LeastSq1})-(\ref{eq:LeastSq2}). The training inputs are
\begin{equation}
\label{eq:train_input}
\{\vec{S}^n_m, \,
\Delta \vec{W}^n_m, \,
v^{n+1}(\vec{S}^{n+1}_m), \,
g(\vec{S}^n_m), \,
y^{n+\eta}(\vec{S}^n_m;(\Omega^{n+\eta})^*), \,
\vec{\nabla} y^{n+\eta}(\vec{S}^n_m;(\Omega^{n+\eta})^*)
\,|\, \forall m\},
\end{equation}
where the first three inputs are the required inputs of (\ref{eq:LeastSq1}), the last three inputs are the features introduced in Section \ref{subsec:feature}, $y^{n+\eta}$ is defined in (\ref{eq:NN_def_recursive_3}) and $(\Omega^{n+\eta})^*$ is the trained parameters from the previous timestep $n+\eta$. The training output is
$\{y^{n}(\vec{S}^n_m;\,\Omega^{n}), \,
\vec{\nabla} y^{n}(\vec{S}^n_m;\,\Omega^{n})
\,|\, \forall m\}$.
The loss function of the network is given by (\ref{eq:LeastSq1})-(\ref{eq:LeastSq2}), i.e., the least squares BSDE residual, which we rewrite as a function of the trainable parameters $\Omega^n$:
\begin{equation}
\label{eq:LeastSq3}
\begin{array}{rl}
\displaystyle
\mathcal{L}[\Omega^n]
\equiv
\sum_{m=1}^M
&
\displaystyle
\left[
(1 + r\Delta t) y^n(\vec{S}^n_m; \Omega^n)
\right.
\\
&
\displaystyle
\left.
+ \sum_{i=1}^d
\sigma_i (S_i)^n_m
\frac{\partial y^n}{\partial s_i}(\vec{S}^n_m; \Omega^n)
(\Delta W_i)^n_m
-v^{n+1}(\vec{S}^{n+1}_m)
\right]^2.
\end{array}
\end{equation}
We consider using the popular Adam optimizer \citep{kingma2014adam} to minimize the loss function (\ref{eq:LeastSq3}), which yields the set of optimal trainable parameters
\begin{equation}
\label{eq:LeastSq4}
(\Omega^n)^*
\equiv
\argmin_{\Omega^n} \,
\mathcal{L}[\Omega^n].
\end{equation}
Then, using the trained neural network, we can compute the estimated option price $y^n(\vec{s};\,(\Omega^n)^*)$ and delta $\vec{\nabla} y^n (\vec{s};\,(\Omega^n)^*)$.
In addition, we use the estimated option price to determine the exercise boundary as
\begin{equation}
\label{eq:exercise}
\xi^n(\vec{s})
= \left\{
\begin{array}{ll}
\text{continued,}
& \text{if }
y^n(\vec{s};\,(\Omega^n)^*) > f(\vec{s}),
\\
\text{exercised,}
& \text{otherwise.}
\end{array}
\right.
\end{equation}

In order to ensure the accuracy of training, we follow suggested good practices in the deep learning community \citep{goodfellow2016deep}. For instance, mini-batch optimization is used; the learning rate of the Adam optimizer is decayed to ensure convergence; gradient clipping is applied to avoid exploding gradients. In particular, we let the number of training steps be 600. At the $s$-th training step ($0\leq s\leq 600$), we let the moving average rate for $\vec{\mu}^{[l]}$ and $\vec{\sigma}^{[l]}$ in (\ref{eq:NN_layer_batch}) be
$\frac{1}{0.99}(0.01^{\max(\min(s/350,1),0)} - 0.01)$,
and let the learning rate for the Adam optimizer be
$0.01 \times 0.001^{\max(\min((s-150)/350,1),0)}$.


\section{Improving the Algorithm}
\label{sec:key}

Sections \ref{sec:BSDE}-\ref{sec:NN} describe the foundation of our algorithm. This section introduces a few techniques that improve the accuracy of resulting prices and deltas and the efficiency of the algorithm.

\subsection{The training input ``v"}
\label{subsec:define_v}

Consider the $n$-th timestep. The definition of $v^{n+1}(\vec{S}^{n+1}_m)$ in the training input (\ref{eq:train_input}) turns out to play a significant role in the accuracy of the trained continuation price $y^n$. More specifically, if the training input $v^{n+1}(\vec{S}^{n+1}_m)$ is incorrectly defined, which means that we feed incorrect values to the right hand side of (\ref{eq:Euler_v}), then the trained network $y^n$ would not represent the correct $c^n$.

Finding the correct definition of $v^{n+1}(\vec{S}^{n+1}_m)$ turns out to be non-trivial. One natural way of defining $v^{n+1}(\vec{S}^{n+1}_m)$ is to use the output prices of the trained network. More specifically, suppose
$y^{n+1} (\vec{s};\,(\Omega^{n+1})^*)$
is already trained. Then
\begin{equation}
\label{eq:v_def1}
v^{n+1}(\vec{S}^{n+1}_m)
= \left\{
\begin{array}{ll}
y^{n+1} (\vec{S}^{n+1}_m;\,(\Omega^{n+1})^*),
&
\text{if }
\xi^{n+1}(\vec{S}^{n+1}_m) = \text{continued},
\\
f(\vec{S}^{n+1}_m),
&
\text{if }
\xi^{n+1}(\vec{S}^{n+1}_m) = \text{exercised},
\end{array}
\right.
\end{equation}
where $\xi^{n+1}$ is defined in (\ref{eq:exercise}).
However, in practice, due to the finite number of samples and training steps, training error in the network $y^{n+1}$ is inevitable, which means that $v^{n+1}(\vec{S}^{n+1}_m)$ might contain error after applying (\ref{eq:v_def1}). Consequentially, the error of the training input $v^{n+1}(\vec{S}^{n+1}_m)$ will propagate into $y^{n}$ after training the $n$-th network, and propagate into $v^{n}(\vec{S}^{n}_m)$ after applying (\ref{eq:v_def1}) again, and propagate into $y^{n-1}$, $v^{n-1}(\vec{S}^{n-1}_m)$, $y^{n-2}$, ..., after further backward timestepping. In other words, (\ref{eq:v_def1}) is not robust against the accumulation of training errors over timesteps and may result in bias.

In fact, such bias can be quantified using the following lemma:
\begin{lemma}[Quantifying bias]
\label{lm:bias}
Assume that $\{\vec{S}^\nu_m \,|\, 0\leq\nu\leq n,\forall m\}$ are known and fixed, i.e., assume that the stochastic process $\{\vec{S}^n\}$ is adapted to the filtration $\{\mathcal{F}_n\}$.
Let $\{\vec{S}^{n+1}_m \,|\, \forall m\}$ be another set generated under (\ref{eq:Euler_S}).  Then the prices $c^n$ and $v^{n+1}$ must satisfy
\begin{equation}
\label{eq:v_def2_pre}
c^n(\vec{S}^n_m)
= \mathbb{E} [e^{-r\Delta t} v^{n+1}(\vec{S}^{n+1}_m) \,|\, \vec{S}^n_m] + O(\sqrt{\Delta t}).
\end{equation}
\end{lemma}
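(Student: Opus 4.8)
The plan is to condition on the information available at time $t^n$ and exploit the martingale structure of the BSDE to annihilate the diffusion term, leaving only the drift, which is precisely the discounting relation claimed. Concretely, I would start from the BSDE of Theorem~\ref{thm:BSDE} in its one-step Euler form (\ref{eq:Euler_v}), written for the \emph{true} price functions, but with the honest understanding that it holds only up to a local error $\epsilon_m$:
\[
v^{n+1}(\vec{S}^{n+1}_m)
= (1+r\Delta t)\,c^n(\vec{S}^n_m)
+ \sum_{i=1}^d \sigma_i (S_i)^n_m \frac{\partial c^n}{\partial s_i}(\vec{S}^n_m)(\Delta W_i)^n_m
+ \epsilon_m .
\]
Since $\{\vec{S}^\nu_m \mid \nu\le n\}$ is $\mathcal{F}_n$-measurable while the increments $(\Delta W_i)^n_m$ are mean-zero and independent of $\mathcal{F}_n$, taking $\mathbb{E}[\,\cdot\mid\vec{S}^n_m]$ kills the stochastic sum and yields $\mathbb{E}[v^{n+1}(\vec{S}^{n+1}_m)\mid\vec{S}^n_m] = (1+r\Delta t)\,c^n(\vec{S}^n_m) + \mathbb{E}[\epsilon_m\mid\vec{S}^n_m]$. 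Solving for $c^n$ and using $\tfrac{1}{1+r\Delta t} = e^{-r\Delta t} + O(\Delta t^2)$ reduces the entire lemma to showing $\mathbb{E}[\epsilon_m\mid\vec{S}^n_m] = O(\sqrt{\Delta t})$.

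Next I would pin down the smooth part of the bound. Away from the exercise boundary $v^{n+1}=c^{n+1}$ and $c$ is smooth, so I would Itô/Taylor-expand $c^{n+1}(\vec{S}^{n+1}_m)=c(\vec{S}^{n+1}_m,t^{n+1})$ about $(\vec{S}^n_m,t^n)$ along the Euler increment $\Delta S_i = (r-\delta_i)S_i^n\Delta t + \sigma_i S_i^n(\Delta W_i)^n_m$. Using $\mathbb{E}[(\Delta W_i)^n_m\mid\vec{S}^n_m]=0$ and $\mathbb{E}[(\Delta W_i)^n_m(\Delta W_j)^n_m\mid\vec{S}^n_m]=\rho_{ij}\Delta t$, the first- and second-order terms collapse, via the Black--Scholes PDE that the BSDE is equivalent to (obtained by matching the drift of Itô's formula for $c$ to $rc$), into $r c^n\Delta t$. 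This gives $\mathbb{E}[c^{n+1}(\vec{S}^{n+1}_m)\mid\vec{S}^n_m] = (1+r\Delta t)\,c^n(\vec{S}^n_m)+O(\Delta t^2)$, i.e. $\mathbb{E}[\epsilon_m\mid\vec{S}^n_m]=O(\Delta t^2)$ whenever $\vec{S}^n_m$ lies well inside the continuation region.

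The main obstacle is the free boundary, and it is exactly what degrades the bound from $O(\Delta t^2)$ to $O(\sqrt{\Delta t})$. Writing $v^{n+1}-c^{n+1}=(f-c^{n+1})^+$, which is supported in the exercise region and vanishes at the boundary, the residual is controlled by $D_m := \mathbb{E}[(f-c^{n+1})^+(\vec{S}^{n+1}_m)\mid\vec{S}^n_m]$. When $\vec{S}^n_m$ lies within the $O(\sqrt{\Delta t})$-wide diffusive layer around the boundary, the Euler step can cross it, and there $v^{n+1}$ is only $C^1$ (smooth pasting), so the second-order Itô expansion of the previous paragraph breaks down. I would estimate $D_m$ directly: $f-c^{n+1}$ grows linearly in the signed distance to the boundary while the displacement of $\vec{S}^{n+1}_m$ is Gaussian of scale $\sqrt{\Delta t}$, so for $\vec{S}^n_m$ on the boundary $D_m=O(\sqrt{\Delta t})$, decaying rapidly (exponentially) as $\vec{S}^n_m$ moves into the continuation region. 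Taking the worst case over $\vec{S}^n_m$ gives $\mathbb{E}[\epsilon_m\mid\vec{S}^n_m]=O(\Delta t^2)+O(\sqrt{\Delta t})=O(\sqrt{\Delta t})$, which combined with the discounting error proves (\ref{eq:v_def2_pre}).

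The delicate point I expect to require the most care is precisely this boundary-layer estimate, equivalently the statement that replacing the smooth continuation value $c^{n+1}$ by the kinked value $v^{n+1}=\max(c^{n+1},f)$ costs $O(\sqrt{\Delta t})$ in conditional expectation. One reaches the same scale from the continuous-exercise side: the early-exercise premium captured within a single period $[t^n,t^{n+1}]$ is governed by the running maximum of the diffusion, which exceeds its endpoint value by $O(\sqrt{\Delta t})$, so the continuous-exercise $c^n$ and the one-step $v^{n+1}$-backup differ at the same order. Either viewpoint localizes the whole difficulty in a neighborhood of the free boundary of diffusive width $\sqrt{\Delta t}$, and making that estimate rigorous — rather than the routine martingale cancellation and Taylor expansion — is the heart of the proof.
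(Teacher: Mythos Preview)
Your approach is essentially the paper's: take the conditional expectation of the Euler-discretized BSDE (\ref{eq:Euler_v}), use the $\mathcal{F}_n$-measurability of $\vec{S}^n_m$ together with $\mathbb{E}[(\Delta W_i)^n_m\mid\vec{S}^n_m]=0$ to kill the stochastic sum, and convert $1+r\Delta t$ to $e^{r\Delta t}$ at cost $O(\Delta t^2)$. The paper's own proof stops there, simply \emph{declaring} the residual to be $O(\sqrt{\Delta t})$ ``to reflect the discretization error of (\ref{eq:Euler_v})''; your Itô/Taylor expansion away from the free boundary and your boundary-layer estimate for $(f-c^{n+1})^+$ go well beyond what the paper actually argues, so you are supplying a justification that the paper omits rather than taking a different route.
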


\begin{proof}
Consider taking the conditional expectation of (\ref{eq:Euler_v}):
\begin{equation*}
(1 + r\Delta t) c^n(\vec{S}^n_m)
+ \sum_{i=1}^d
\sigma_i (S_i)^n_m
\frac{\partial c^n}{\partial s_i}(\vec{S}^n_m)
\mathbb{E} [(\Delta W_i)^n_m \,|\, \vec{S}^n_m]
= \mathbb{E} [v^{n+1}(\vec{S}^{n+1}_m) \,|\, \vec{S}^n_m] + O(\sqrt{\Delta t}),
\end{equation*}
where we add the term $O(\sqrt{\Delta t})$ at the end of the equation to reflect the discretization error of (\ref{eq:Euler_v}).
We note that $\{\vec{S}^n_m\}$, $\{c^n(\vec{S}^n_m)\}$ and $\{\frac{\partial c^n}{\partial s_i}(\vec{S}^n_m)\}$ are not random variables due to the filtration, and hence the only random variables are $\{\vec{S}^{n+1}_m\}$ and $\{(\Delta W_i)^n_m\}$.
Since
$\mathbb{E} [(\Delta W_i)^n_m \,|\, \vec{S}^n_m] = 0$
and
$1 + r\Delta t = e^{r\Delta t} + O(\Delta t^2)$,
we have
$
e^{r\Delta t} c^n(\vec{S}^n_m)
= \mathbb{E} [v^{n+1}(\vec{S}^{n+1}_m) \,|\, \vec{S}^n_m] + O(\sqrt{\Delta t})
$,
which gives (\ref{eq:v_def2_pre}).
\end{proof}

Lemma \ref{lm:bias} indicates that if $v^{n+1}(\vec{S}^{n+1}_m)$ is correctly evaluated, then $\mathbb{E} [e^{-r\Delta t} v^{n+1}(\vec{S}^{n+1}_m)]$ should match the true underlying continuation function $c^n(\vec{S}^n_m)$. After a few timesteps, if $\mathbb{E} [e^{-r\Delta t} v^{n+1}(\vec{S}^{n+1}_m)]$ deviates from $c^n(\vec{S}^n_m)$, then it indicates an accumulation of training errors from the previous timesteps.


\begin{figure}[h!]
\centering
\footnotesize
\includegraphics[height=5cm]{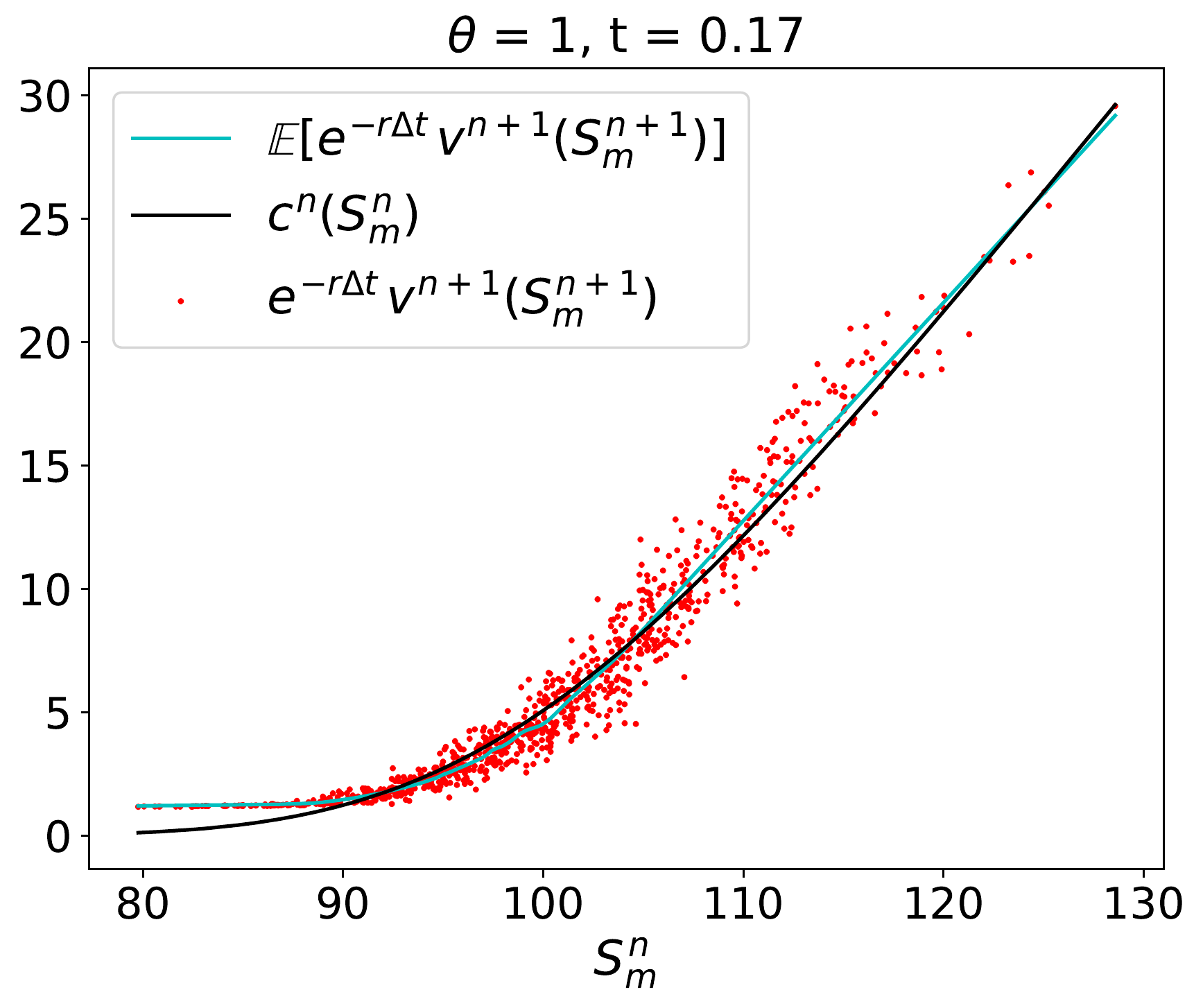}
\includegraphics[height=5cm]{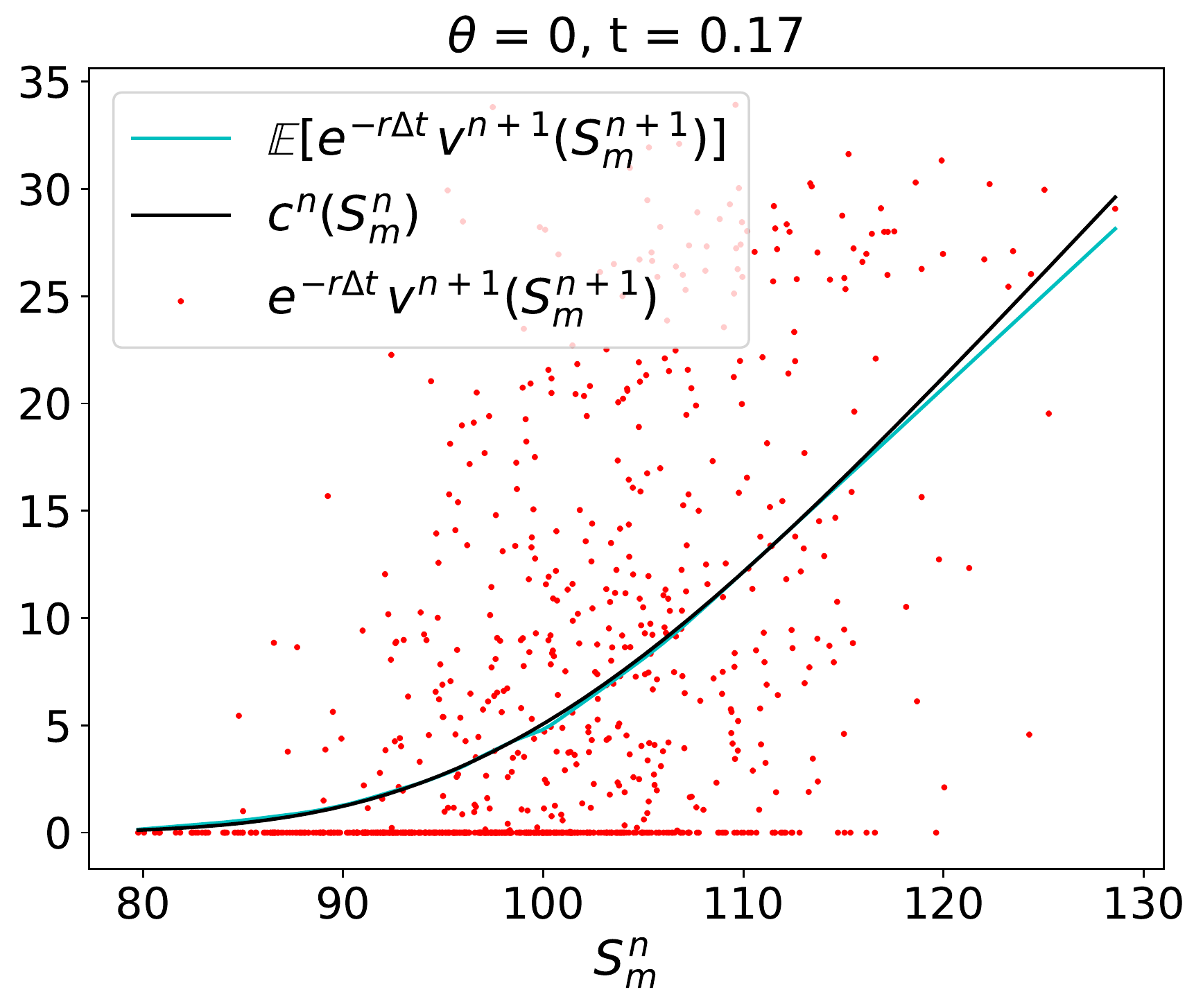}
\includegraphics[height=5cm]{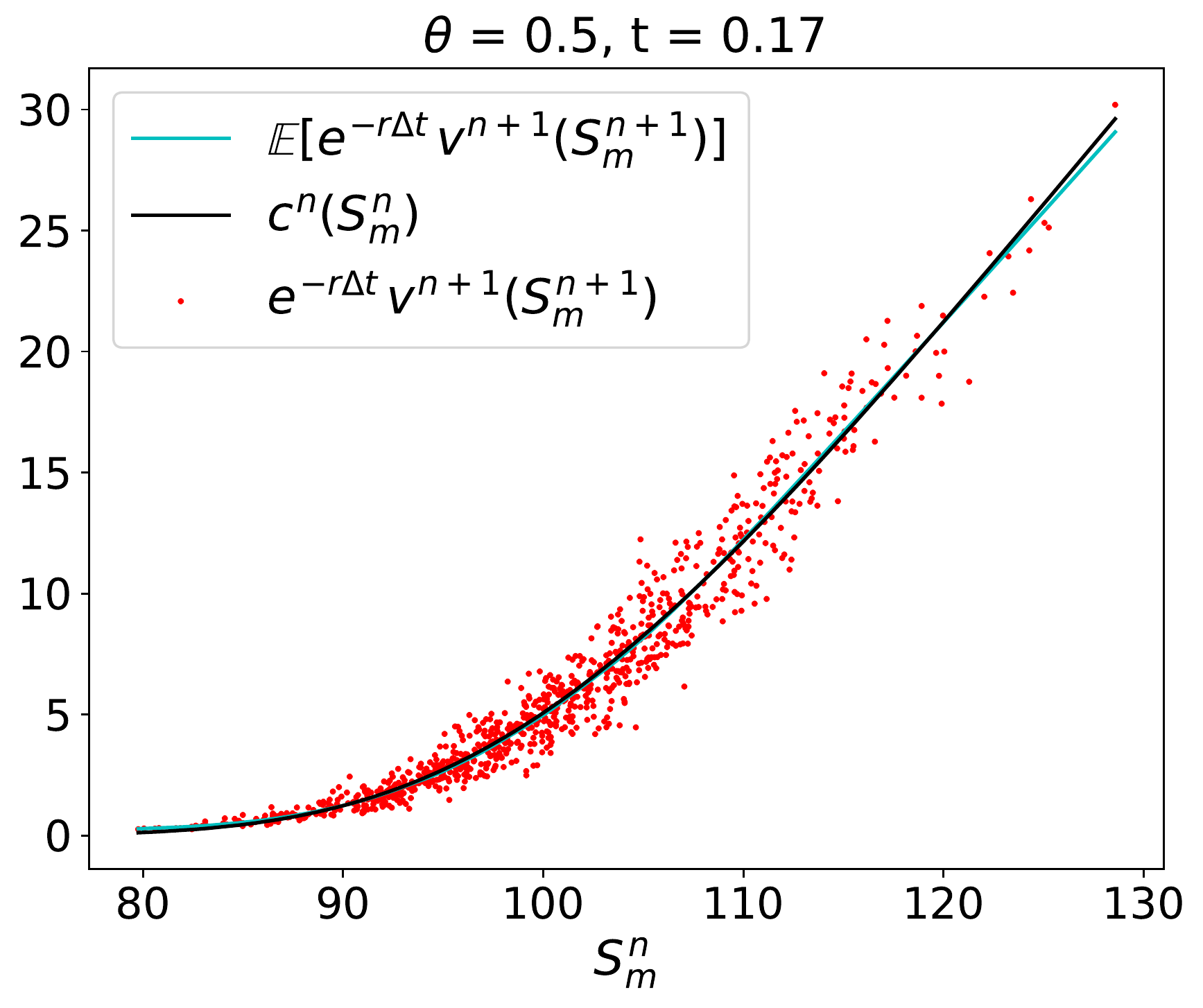}
\caption{\label{fig:theta}
The values of $c^{n}(\vec{S}^{n}_m)$ (black line), $e^{-r\Delta t} v^{n+1}(\vec{S}^{n+1}_m)$ (red dots) and $\mathbb{E}[e^{-r\Delta t} v^{n+1}(\vec{S}^{n+1}_m)]$ (blue line) under different definitions of $v^{n+1}(\vec{S}^{n+1}_m)$.
(Top left) The values under the definition of (\ref{eq:v_def1}), which shows a bias;
(Top right) The values under the definition of (\ref{eq:v_def2}), which shows a variance;
(Bottom) The values under the definition of (\ref{eq:v_def3}) with $\theta=0.5$, where both bias and variance are reduced.
}
\end{figure}


Figure \ref{fig:theta} shows a concrete example of the bias.  Consider a simulation of a one-dimensional American option, where $T=0.5$, $N=100$ and the true continuation function $c^n$ can be computed by finite difference methods. Consider using (\ref{eq:v_def1}) to define the training input $v^{n+1}(\vec{S}^{n+1}_m)$ at every timestep. As shown in the top left plot of Figure \ref{fig:theta}, when the simulation proceeds to $n=33$, there is a clear deviation of $\mathbb{E} [e^{-r\Delta t} v^{n+1}(\vec{S}^{n+1}_m)]$ (blue line)\footnote{To assess $\mathbb{E} [e^{-r\Delta t} v^{n+1}(\vec{S}^{n+1}_m)]$, we start with a fixed set of $\{\vec{S}^n_m\}$. For each point of $\vec{S}^n_m$, we generate multiple $\vec{S}^{n+1}_m$'s by (\ref{eq:Euler_S}), denoted as $\{\vec{S}^{n+1}_{m;m'}|m'=1,\cdots,M'\}$; compute $\{v(\vec{S}^{n+1}_{m;m'})\}$; and then compute the imperial average:
$
\mathbb{E} [e^{-r\Delta t} v^{n+1}(\vec{S}^{n+1}_m)]
\approx
e^{-r\Delta t}\frac{1}{M'}\sum_{m'}v(\vec{S}^{n+1}_{m;m'})
$.
}
from $c^n(\vec{S}^n_m)$ (black line) around $\vec{S}^n_m=80$. 

In fact, we can use the relation (\ref{eq:v_def2_pre}) to avoid the bias caused by the definition (\ref{eq:v_def1}). More specifically, let $\vec{S}^{n+1}_m$ be a continued point. Then
$v^{n+1}(\vec{S}^{n+1}_m)
=c^{n+1}(\vec{S}^{n+1}_m)
=\mathbb{E} [e^{-r\Delta t} v^{n+2}(\vec{S}^{n+2}_m)]$. This motivates us to redefine the training input $v^{n+1}(\vec{S}^{n+1}_m)$ as follows:
\begin{equation}
\label{eq:v_def2}
v^{n+1}(\vec{S}^{n+1}_m)
= \left\{
\begin{array}{ll}
e^{-r\Delta t} \, v^{n+2}(\vec{S}^{n+2}_m),
&
\text{if }
\xi^{n+1}(\vec{S}^{n+1}_m) = \text{continued},
\\
f(\vec{S}^{n+1}_m),
&
\text{if }
\xi^{n+1}(\vec{S}^{n+1}_m) = \text{exercised}.
\end{array}
\right.
\end{equation}
We note that (\ref{eq:v_def2}) is actually the ``discounted payoffs" used in \citet{longstaff2001valuing}. They use (\ref{eq:v_def2}) as the target prices for regression.

The top right plot of Figure \ref{fig:theta} considers again the same simulation, where the definition of $v^{n+1}(\vec{S}^{n+1}_m)$ is changed to (\ref{eq:v_def2}). The deviation of $\mathbb{E} [e^{-r\Delta t} v^{n+1}(\vec{S}^{n+1}_m)]$ (blue line) from $c^n(\vec{S}^n_m)$ (black line) around $\vec{S}^n_m=80$ disappears. The blue and black lines agree well with each other. This shows that using the definition (\ref{eq:v_def2}) does not introduce bias as does the definition (\ref{eq:v_def1}).
However, the noisy red dots show that using the definition (\ref{eq:v_def2}) results in a big variance of $e^{-r\Delta t} v^{n+1}(\vec{S}^{n+1}_m)$. This poses a risk for the model to fit the noise, which may still result in an inaccurate trained $y^{n}$.

In this paper, we define $v^{n+1}(\vec{S}^{n+1}_m)$ as the linear combination of the two definitions (\ref{eq:v_def1}) and (\ref{eq:v_def2}):
\begin{equation}
\label{eq:v_def3}
v^{n+1}(\vec{S}^{n+1}_m)
= \left\{
\begin{array}{l}
\theta \,
y^{n+1} (\vec{S}^{n+1}_m;\,(\Omega^{n+1})^*)
+ (1-\theta)
e^{-r\Delta t} \, v^{n+2}(\vec{S}^{n+2}_m),
\\
\hspace{35.5mm}
\text{if }
\xi^{n+1}(\vec{S}^{n+1}_m) = \text{continued},
\\
f(\vec{S}^{n+1}_m),
\hspace{20mm}
\text{if }
\xi^{n+1}(\vec{S}^{n+1}_m) = \text{exercised},
\end{array}
\right.
\end{equation}
where $\theta \in [0,1]$ is a user-defined hyperparameter. This linear combination mitigates both the bias caused by the definition (\ref{eq:v_def1}) and the variance caused by the definition (\ref{eq:v_def2}). That is, the resulting $v^{n+1}(\vec{S}^{n+1}_m)$ would accumulate less training error over multiple timesteps, and meanwhile contain less noise. The bottom plot in Figure \ref{fig:theta} considers the same simulation, where the definition of $v^{n+1}(\vec{S}^{n+1}_m)$ is (\ref{eq:v_def3}) with $\theta=0.5$. We observe almost no deviation of $\mathbb{E}[e^{-r\Delta t} v^{n+1}(\vec{S}^{n+1}_m)]$ (blue line) from $c^n(\vec{S}^n_m)$ (black line), and a small variance of $e^{-r\Delta t} v^{n+1}(\vec{S}^{n+1}_m)$ (red dots), as expected. Hence, the definition (\ref{eq:v_def3}) can improve the accuracy of the trained networks.

\subsection{Weight reuse}

The trainable parameters $\Omega^n$ need to be initialized for each individual network from $n=N-1$ to $n=0$. Starting from the network at $n=N-1$,
we initialize $(\vec{\beta}^{[l]})^{N-1}$ and $b^{N-1}$ by zeros; $(\vec{\gamma}^{[l]})^{N-1}$ and $\alpha^{N-1}$ by ones; and $(\mathbf{W}^{[l]})^{N-1}$ and $\vec{\omega}^{N-1}$ by uniformly distributed random numbers in
$(
-1/\sqrt{d^{[l]}+d^{[l-1]}},
1/\sqrt{d^{[l]}+d^{[l-1]}}
)$,
as suggested in \citet{goodfellow2016deep}.
Move on to the consecutive networks at $n<N-1$. One can use the same idea to initialize their trainable parameters. However, we notice that when $\Delta t$ is sufficiently small, the networks at the $n$-th and $(n+1)$-th timesteps should be close. In other words, their optimal trainable parameters should be close, i.e., $(\Omega^{n+1})^*\approx(\Omega^{n})^*$. We can take advantage of this fact and use the values of the trained parameters $(\Omega^{n+1})^*$ as the initial values of the corresponding trainable parameters $\Omega^n$. Such ``weight reuse" provides a good initial guess before the training starts at the $n$-th timestep.
Hence, the training results will be more accurate.


\begin{figure}[b!]
\centering
\footnotesize
\includegraphics[height=4cm]{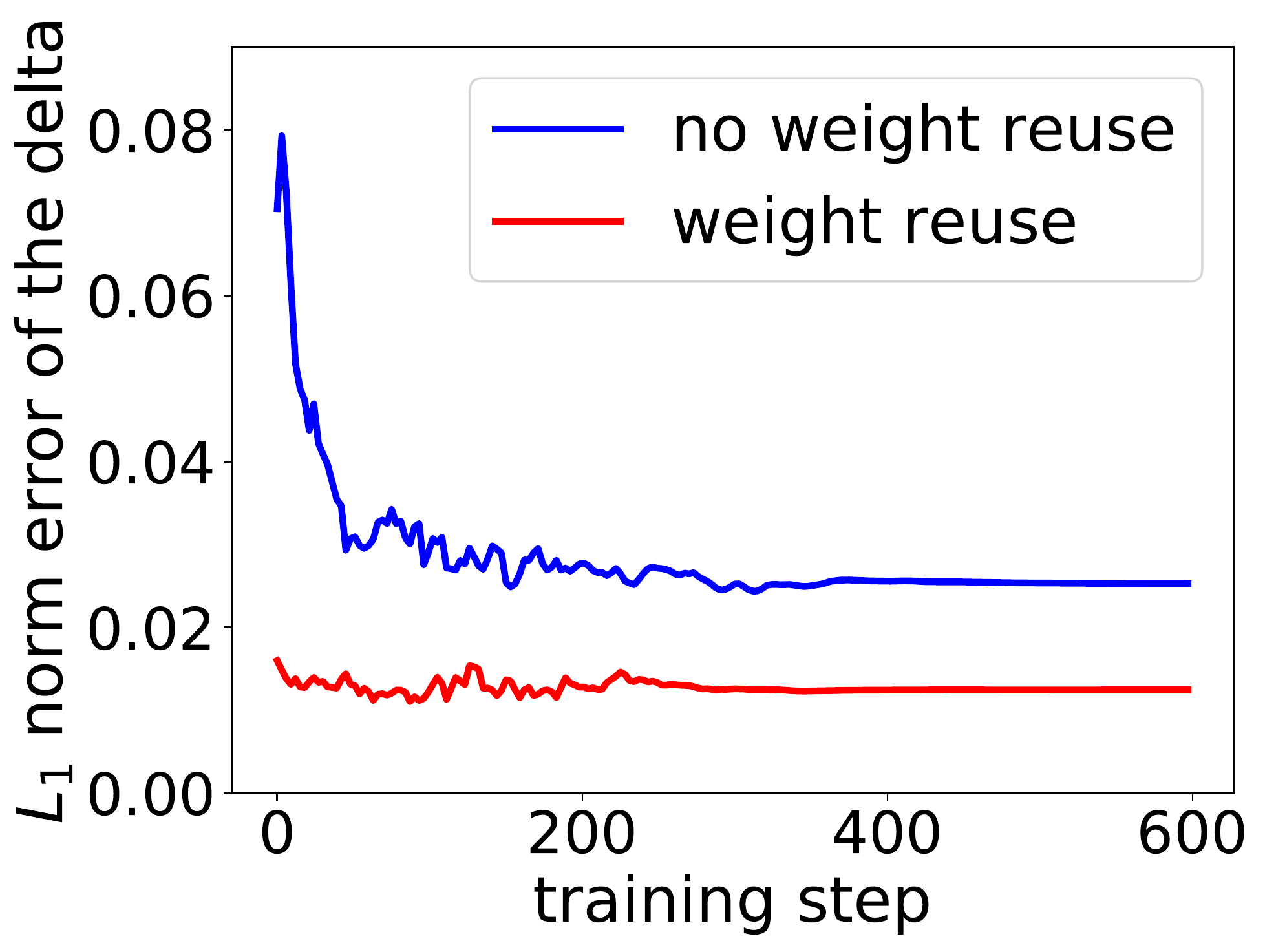}
\includegraphics[height=4cm]{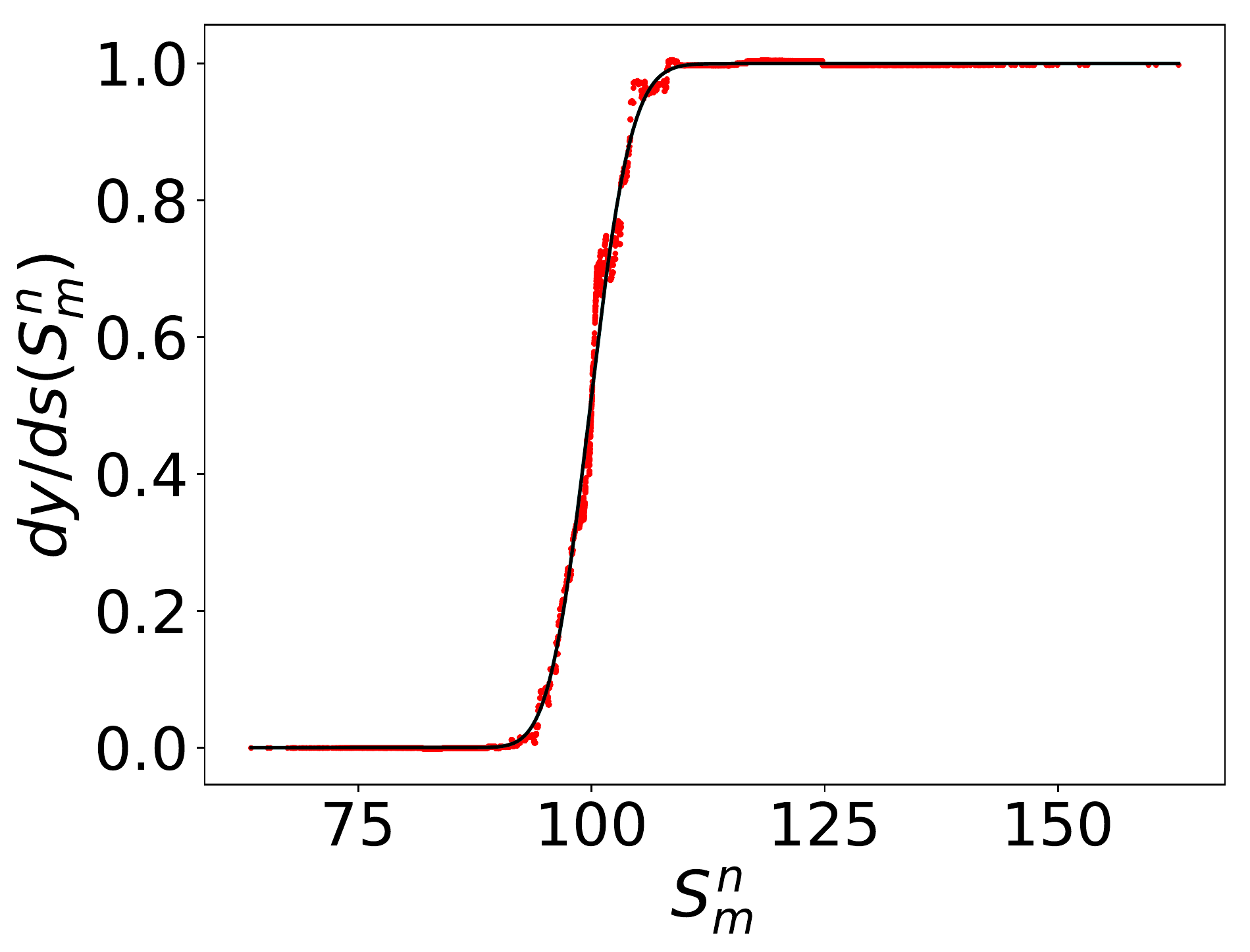}
\includegraphics[height=4cm]{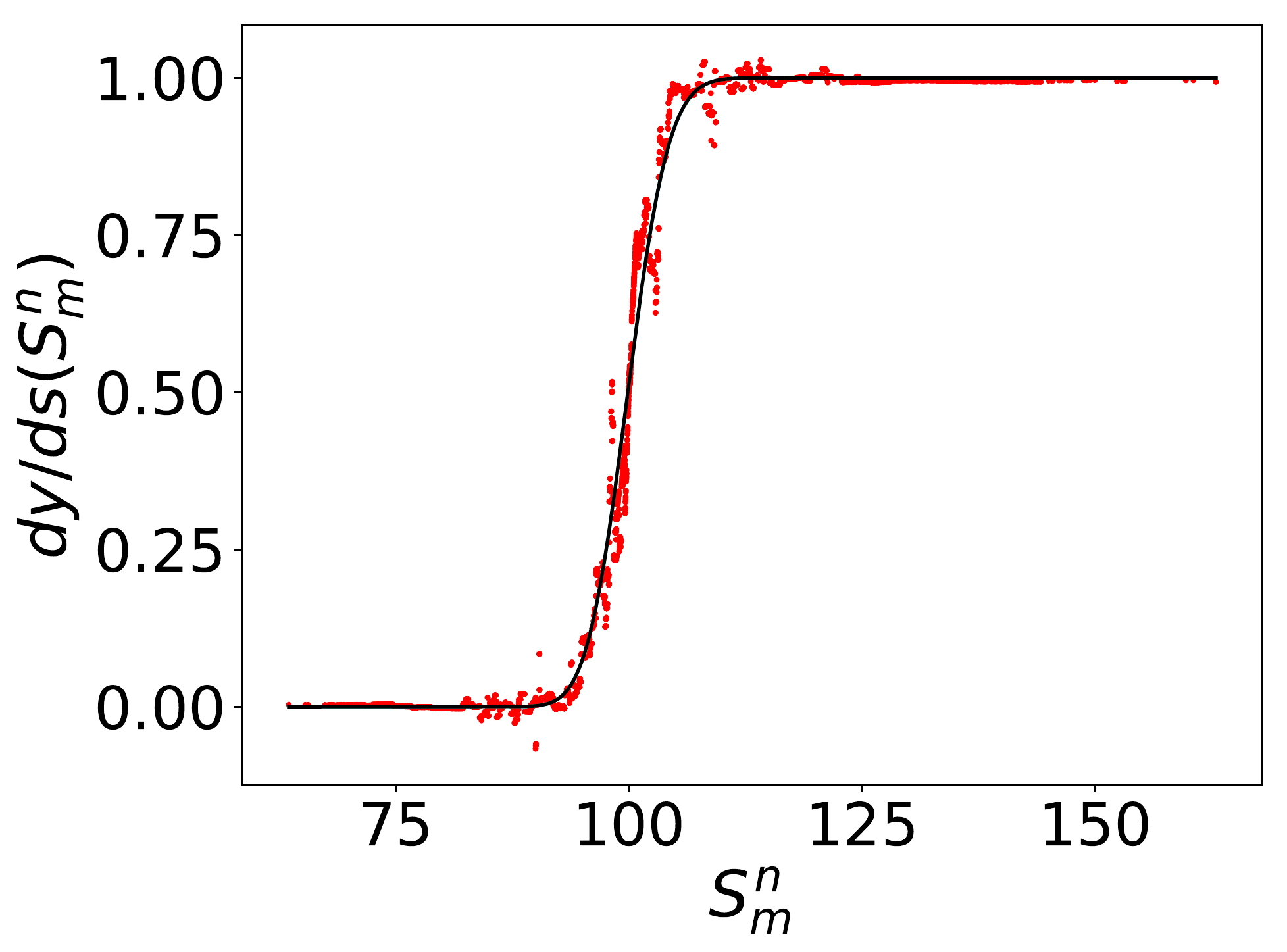}
\caption{\label{fig:reuse}
Example of the computed deltas with or without weight reuse.
(Left) The $L_1$ norm error of the computed delta over 600 training steps. Blue: the error with no weight reuse. Red: the error with weight reuse.
(Middle) The computed delta with weight reuse after 600 training steps. Black line: the exact delta computed by finite difference. Red dots: the sample values of the delta obtained from the network $y^n$.
(Right) The computed delta without weigh reuse after 600 training steps.
}
\end{figure}


Figure \ref{fig:reuse} demonstrates a concrete example on how weight reuse improves the training accuracy. Consider again a simulation of a one-dimensional American option with $T=0.5$, $N=50$. Consider a particular timestep $n=47$. We computed the delta $\frac{dy^n}{ds}(S^n_m)$ of 180000 sample points. The first plot shows the evolution of the $L_1$ norm error of the computed delta over 600 training steps. The error with weight reuse (red line) is significantly lower than the error without weight reuse (blue line). The second plot shows that after 600 training steps, the computed delta with weight reuse (red dots) agrees with the exact delta (black line). As a comparison, the third plot shows that after 600 training steps, the computed delta without weight reuse (red dots) still has a large fluctuation and does not match the exact delta (black line) well.

\subsection{Ensemble of neural networks}
\label{subsec:ensemble}

It is well-known that ensemble learning, which is a combination of the multiple machine learning models, usually outperforms individual models \citep{goodfellow2016deep}. Inspired by this, we consider ``ensemble of neural networks".

To describe the details, at each timestep (e.g., the $n$-th timestep), we construct $C$ networks
$
\{y^n(\vec{s};\,\Omega^n_c) \,|\, c=1,\cdots, C\}
$
instead of one network.
All the $C$ networks have the same architecture as defined in Sections \ref{subsec:NN}-\ref{subsec:NN2}.
The difference is that their trainable parameters
$
\{\Omega^n_c \,|\, c=1,\cdots, C\}
$ are initialized by different set of numbers.
Then the $C$ networks are trained by different input data. To do this, we generate $CM$ input samples (\ref{eq:train_input}) with $m=1,\cdots,CM$, split them into $C$ copies, and then use each copy of the input samples to train each of the $C$ networks. Consequentially, the trained results of the $C$ networks are independent of each other, i.e.,
$
\{(\Omega^n_c)^* \,|\, c=1,\cdots, C\}
$
are distinct from each other.
Then after training, we compute the averages across the ensemble:
\begin{equation}
y^n(\vec{s})
=\frac{1}{C} \sum_{c=1}^C
y^n(\vec{s};\,(\Omega_c^n)^*),
\quad
\vec{\nabla} y^n(\vec{s})
=\frac{1}{C} \sum_{c=1}^C
\vec{\nabla} y^n(\vec{s};\,(\Omega_c^n)^*),
\end{equation}
for the prices and deltas, respectively.
Eventually, we use the ensemble-average prices to determine the exercise boundary at the $n$-th timestep by (\ref{eq:exercise}) before proceeding to the $(n-1)$-th timestep.

Such ensemble technique yields more precise prices, deltas and thus more precise exercise boundaries.
We note that the computation across different ensembles can be parallelized. In practice, we find that $C=3$ is a good choice, in the sense that the accuracy is improved compared with $C=1$ without dramatically increasing computational cost.

\subsection{Price and delta at $t=0$}
\label{subsec:initial}

Our neural network formulation yields prices and deltas on the entire spacetime domain. In practical applications, the price and the delta at $t=0$, $v(\vec{s}^0, 0)$ and $\vec{\nabla} v(\vec{s}^0, 0)$, are of particular interest. We can extract their values from the trained neural network at $t=0$. Here we discuss how to further improve the accuracy of their values.

Our approach is to use the expectation values of the Monte Carlo paths, subject to the exercise boundary computed by our neural network formulation. More specifically, given the $m$-th path, the trained neural networks determine its stopping time, denoted as $\tau_m$. Then the price at $t=0$ can be computed by the mean of the discounted payoffs:
\begin{equation}
\label{eq:initial_price}
v(\vec{s}^0, 0)
= \frac{1}{CM}\sum_{m=1}^{CM}
e^{-r\tau_m} f(\vec{S}_m(\tau_m)).
\end{equation}
Regarding the delta at $t=0$, we can use the method in \citet{thom2009longstaff}, which is an adaptation of ``pathwise derivative method" \citep{broadie1996estimating} to American options:
\begin{equation}
\label{eq:initial_delta}
\frac{\partial v}{\partial s_i}(\vec{s}^0, 0)
= \frac{1}{CM}\sum_{m=1}^{CM}
\left(
e^{-r\tau_m}
\sum_{j=1}^d
\frac{\partial f}{\partial s_j}(\vec{S}_m(\tau_m))
\frac{\partial (S_j)_m}{\partial (s_i)^0}
\right).
\end{equation}
When the underlying asset prices evolve under (\ref{eq:SDE_S}), we have
$\frac{\partial (S_j)_m}{\partial (s_i)^0}
= \frac{(S_j)_m}{(s_i)^0} \delta_{ij}$.
We note that the pathwise derivative approach may not be applicable if $\frac{\partial (S_j)_m}{\partial (s_i)^0}$ is not evaluable (e.g., the underlying asset prices do not evolve under (\ref{eq:SDE_S})) or if the payoff function is not differentiable. For such non-applicable cases, we can still obtain the deltas from our trained neural network at $t=0$.

Using (\ref{eq:initial_price})-(\ref{eq:initial_delta}) to compute the price and the delta at $t=0$ is also observed in other Monte Carlo style pricing approaches, including the Longstaff-Schwartz algorithm. However, we emphasize that our approach differs from the others. More specifically, (\ref{eq:initial_price})-(\ref{eq:initial_delta}) are not computable unless combined with an algorithm that can determine the exercise boundary on the entire spacetime.
In this paper, our neural network framework is used to determine the exercise boundary before applying (\ref{eq:initial_price})-(\ref{eq:initial_delta}). In Section \ref{sec:numerical}, we will demonstrate that our neural network formulation yields a more accurate exercise boundary, and thus more accurate prices and deltas at $t=0$, compared to the Longstaff-Schwartz algorithm.

\subsection{Final algorithm}
\label{subsec:algorithm}

The final version of the proposed algorithm is summarized in Algorithm \ref{alg:SDE-NN}. We note that in Algorithm \ref{alg:SDE-NN}, we store
$\{y^n(\vec{S}^n_m), \vec{\nabla} y^n(\vec{S}^n_m)
\,|\, \forall n, \forall m\}$
on the entire spacetime (i.e., for all $m$'s and $n$'s). The reason is that we are interested in a complete delta hedging simulation, which requires sample values of both prices and deltas on the entire spacetime. The implementation of Algorithm \ref{alg:SDE-NN} uses an overwriting strategy for more efficient memory. We note, however, that if an algorithm user does not need sample values from the entire spacetime, then only the storage of the training outputs
$\{y^n(\vec{S}^n_m),\vec{\nabla} y^n(\vec{S}^n_m)
\,|\, \forall m\}$
and the training inputs
$\{y^{n+\eta}(\vec{S}^n_m),\vec{\nabla} y^{n+\eta}(\vec{S}^n_m)
\,|\, \forall m\}$
at the current timestep (i.e., for all $m$'s and for a given $n$) is necessary.

\begin{algorithm}[h!]
\caption{Neural network pricing and hedging under BSDE formulation}
\label{alg:SDE-NN}
\begin{algorithmic}[1]
\STATE{ {\bf Parameters} }
\STATE{ \qquad $C$: the number of networks in network ensemble }
\STATE{ \qquad $M$: the number of samples per ensemble }
\STATE{ \qquad $N$: the number of timesteps }
\STATE{ \qquad $J$: the number of timesteps between the network recurrence }
\STATE{ }
\STATE{ Initialize the underlying asset prices
        $\{\vec{S}^0_m \equiv \vec{s}^{0}
        \, | \, \forall m
        \text{ (i.e., } m=1,\cdots, CM)\}$. }
\FOR{ $n = 1, \cdots, N$ }
	\STATE{ Use (\ref{eq:Euler_S0})-(\ref{eq:Euler_S}) to generate 
	        $CM$ Monte Carlo trajectories of 
	        the underlying asset prices 
	        $\{\vec{S}^n_m \, | \, \forall m\}$. }
\ENDFOR
\STATE{ }
\STATE{ Use (\ref{eq:NN_def_payoff2}) to compute
        the expiry option prices and option deltas
        \\
        \qquad
        $\{Y^\nu_m = y^N(\vec{S}^\nu_m)
        \, | \, 0\leq\nu\leq N, \forall m\}$
        and
        $\{\vec{Z}^\nu_m = \vec{\nabla} y^N(\vec{S}^\nu_m)
        \, | \, 0\leq\nu\leq N, \forall m\}$. }
\STATE{ Initialize
        $\{v^N(\vec{S}^N_m)
        \, | \, \forall m\}$
        by (\ref{eq:Euler_vT}). }
\STATE{ }
\FOR{ $n = N-1, \cdots, 0$ }
    \FOR{ $c = 1, \cdots, C$ }
	    \STATE{ Initialize the neural network
	            $y^n (\vec{s};\,\Omega^n_c)$
	            defined by (\ref{eq:NN_def_recursive_3}),
	            where the input layer is
	            (\ref{eq:NN_layer_input_concat})-(\ref{eq:NN_layer_input_batch}),
	            the hidden layers are
	            (\ref{eq:NN_layer_linear})-(\ref{eq:NN_layer_activate})
	            and the output layer is
	            (\ref{eq:NN_layer_output}). }
	    \STATE{ {\bf Training:}
	            minimize the least squares residual
	            (\ref{eq:LeastSq3})-(\ref{eq:LeastSq4}),
	            using the training input (\ref{eq:train_input}). }
	    \STATE{ Result: the trained neural network
	            $y^n (\vec{s};\,(\Omega^n_c)^*)$. }
	\ENDFOR
    \STATE{ }
	\IF{ $(N-n) \text{ mod } J=0$}
        \STATE{ {\bf Ensemble evaluation}
            (all future timesteps):
            overwrite the option prices and deltas
            \\
            \qquad
            $\{Y^\nu_m
            = \frac{1}{C}\sum_{c=1}^C
            y^n(\vec{S}^\nu_m; \,(\Omega^n_c)^*)
            \, | \, 0\leq\nu\leq n, \forall m\}$,
            \\
            \qquad
            $\{\vec{Z}^\nu_m
            = \frac{1}{C}\sum_{c=1}^C
            \vec{\nabla} y^n(\vec{S}^\nu_m; \,(\Omega^n_c)^*)
            \, | \, 0\leq\nu\leq n, \forall m\}$. }
    \ELSE
        \STATE{ {\bf Ensemble evaluation}
            (current timestep):
            overwrite the option prices and deltas
            \\
            \qquad
            $\{Y^n_m
            = \frac{1}{C}\sum_{c=1}^C
            y^n(\vec{S}^n_m; \,(\Omega^n_c)^*)
            \, | \, \forall m\}$,
            \\
            \qquad
            $\{\vec{Z}^n_m
            = \frac{1}{C}\sum_{c=1}^C
            \vec{\nabla} y^n(\vec{S}^n_m; \,(\Omega^n_c)^*)
            \, | \, \forall m\}$. }
	\ENDIF
	\STATE{ }
	\STATE{ Determine whether $\vec{S}^n_m$ 
	        is continued or exercised
	        using (\ref{eq:exercise}) for all $m$'s.}
	\STATE{ Update 
	        $\{v^n(\vec{S}^n_m) \, | \, \forall m\}$
	        by (\ref{eq:v_def3}). }
\ENDFOR
\STATE{ }
\STATE{ Result: samples of option price and delta functions 
		on the entire spacetime
        \\
        \qquad
        $\{ Y^n_m \gets \max(Y^n_m, f(\vec{S}^n_m))
        \, | \, \forall n, \forall m \}$
        and
        $\{ \vec{Z}^n_m
        \, | \, \forall n, \forall m \}$.
        }
\STATE{ Optional result: Recompute the option price and
        the option delta at $t=0$ using 
        (\ref{eq:initial_price}) and (\ref{eq:initial_delta}).
        }
\end{algorithmic}
\end{algorithm}


\section{Computational Cost}
\label{sec:cost}

In this section, we analyze the computational cost of the proposed algorithm, and compare it with the Longstaff-Schwartz algorithm. For the Longstaff-Schwartz algorithm, consider degree-$\chi$ monomial basis
\begin{equation}
\label{eq:LSM_basis}
\varphi_\chi(\vec{s})\equiv
\{s_1^{a_1}s_2^{a_2}\cdots s_d^{a_d}
\, | \, a_1+a_2+\cdots+a_d\leq \chi\},
\end{equation}
as proposed in \citet{longstaff2001valuing} and \citet{kohler2010review}. In practice, we choose $\chi\ll d$. Then the number of the monomial basis is
$\left(
\begin{smallmatrix}
d+\chi \\ d
\end{smallmatrix}
\right)
\approx \frac{1}{\chi!}d^{\chi}$.

\subsection{Memory}
\label{subsec:memory}

The proposed algorithm requires storing
\begin{itemize}[leftmargin=10mm]
\item the underlying asset prices
$\{\vec{S}^n_m \, | \, \forall n, \forall m \}$
on the entire spacetime, requiring $NMd$ floating point numbers;
\item the training outputs
$\{y^n(\vec{S}^n_m),
\vec{\nabla} y^n(\vec{S}^n_m)
\, | \, \forall m \}$
and the training inputs
$\{y^{n+\eta}(\vec{S}^n_m),
\\
\vec{\nabla} y^{n+\eta}(\vec{S}^n_m)
\, | \, \forall m \}$
at the current timestep, requiring $2(M+Md)$ floating point numbers.
\end{itemize}
Hence, the entire process requires a total memory of
$NMd+2(M+Md) \approx NMd$
floating point numbers.
As a comparison, the Longstaff-Schwartz method requires storing
$\{\vec{S}^n_m \, | \, \forall n, \forall m \}$
on the entire spacetime and storing
$\{\varphi_\chi(\vec{S}^n_m), y^n(\vec{S}^n_m)
\, | \, \forall m \}$
at the current timestep. This requires a total memory of
$NMd +
M\cdot \frac{1}{\chi!}d^{\chi}
+ M
\approx
NMd + \frac{1}{\chi!}Md^{\chi}
$
floating point numbers. We remind readers that convergence of the Longstaff-Schwartz method to the exact American option prices requires $\chi\to\infty$. As a result, the proposed neural network method is more memory efficient than the Longstaff-Schwartz method.

\subsection{Time}
\label{subsec:time}

Consider a given timestep $n$. The computational time is dominated by two stages:
\begin{itemize}[leftmargin=10mm]
\item Stage 1: Computing the training inputs (\ref{eq:train_input}), in particular,
$
\{
y^{n+\eta}(\vec{S}^n_m;(\Omega^{n+\eta})^*), \,
\\
\vec{\nabla} y^{n+\eta}(\vec{S}^n_m;(\Omega^{n+\eta})^*)
\,|\, \forall m
\}
$,
using the trained networks
$\{(\Omega^\nu)^*
\,|\,
\nu \geq n + \eta\}$.
\item Stage 2: Training, using the training inputs (\ref{eq:train_input}).
\end{itemize}

To derive the computational time of each stage, denote the maximal width of the $L$-layer neural network $\mathcal{F}$ as $d_{max}\equiv \max_{l=0,\cdots,L}d^{[l]}$. We note that matrix multiplication is the dominant operation in (\ref{eq:NN_layer_linear})-(\ref{eq:NN_layer_activate}). Hence, for each stage, the computational time {\it per neural network} is given by $c_1 MLd_{max}^2$ and $c_2 MLd_{max}^2$, where $c_1$ and $c_2$ are constants. Typically $c_1 \ll c_2$, because Stage 1 only involves computing the outputs of neural networks, while Stage 2 involves training. This seems to suggest that Stage 2 dominates Stage 1. However, we note that Stage 2 involves only one single network (i.e., the $n$-th network), while Stage 1 involves multiple networks from the previous timesteps. More specifically, following the same analysis as (\ref{eq:y0_2}), one can show that the computation of the training input $y^{n+\eta}(\vec{S}^n_m;(\Omega^{n+\eta})^*)$, given by
\begin{equation}
\label{eq:y0_3}
y^{n+\eta} (\vec{s})
= y^N(\vec{s})
+ J \Delta t \cdot \sum_{\nu=1}^{(N-n-\eta)/J}
\mathcal{F}(\vec{s};\Omega^{N-\nu J}),
\end{equation}
requires going through
$(N-n-\eta)/J \approx (N-n)/J$
feedforward networks. As a result,
the actual computational time for Stage 1 is $c_1 MLd_{max}^2 \cdot \frac{N-n}{J}$.

Furthermore, if we consider all the $N$ timesteps, then the total computational time is
\begin{equation}
\label{eq:cost}
\begin{array}{l}
\text{Stage 1: }
\displaystyle
\sum_{n=0}^N c_1 MLd_{max}^2 \cdot \frac{N-n}{J}
= \frac{c_1 N^2}{2J} MLd_{max}^2,
\\
\text{Stage 2: }
\displaystyle
\sum_{n=0}^N c_2 MLd_{max}^2
= c_2 NMLd_{max}^2.
\end{array}
\end{equation}
Equation (\ref{eq:cost}) suggests that when $N$ is large, Stage 1 is dominant. However, we can significantly reduce the computational time of Stage 1 by increasing $J$, as discussed in Section \ref{subsec:NN2}. In our numerical simulation, we chose $d_{max}=d+5$. Then the total computational time of the proposed algorithm is approximately $(\frac{c_1 N}{2J} + c_2) NMLd^2$, which is quadratic in the dimension $d$.

Regarding the Longstaff-Schwartz method, if we assume that the standard normal equation or QR factorization is used for solving regression problems, then the computational time is
$O\left(NM(\frac{1}{\chi!}d^{\chi})^2\right)
= O(NMd^{2\chi})$,
which is worse-than-quadratic in $d$.
Hence, the proposed neural network method is asymptotically more efficient than the Longstaff-Schwartz method in high dimensions.


\section{Numerical Results}
\label{sec:numerical}

In this section, we solve the American option problem (\ref{eq:SDE_S})-(\ref{eq:AmerOp_def2}) using our neural network described in Algorithm \ref{alg:SDE-NN}. We compute the price $v(\vec{s}^0,0)$ and the delta $\vec{\nabla} v(\vec{s}^0,0)$ at $t=0$ for given $\vec{s}^0=(s_1^0,\cdots,s_d^0)$ where $s_1^0=\cdots=s_d^0=0.9K, K$ or $1.1K$. We also compute the prices $v(\vec{s},t)$ and the deltas $\vec{\nabla} v(\vec{s},t)$ for sample paths of $(\vec{s},t)$ spread over the entire spacetime.

In our experiments, we set the strike price $K=100$, the number of the timesteps $N=100$, the number of timesteps between the network recurrence $J=4$, the smoothing parameter in (\ref{eq:NN_def_payoff2}) $\kappa=\frac{2}{\Delta t}$, the coefficient in (\ref{eq:v_def3}) $\theta=0.5$. At each timestep, we train an ensemble of $C=3$ neural networks, where each neural network has a depth of $L=7$ and a uniform width of $d^{[l]}=d+5$ across all the hidden layers. We let the number of samples per network be $M=240000$ (or the total number of samples be $CM=720000$), and let the batch size and the number of training steps be 400 and 600 respectively.
Each numerical experiment is implemented on one Cedar\footnote{Cedar is a Compute Canada cluster. For more details, see https://docs.computecanada.ca/wiki/Cedar and https://docs.computecanada.ca/wiki/Using\_GPUs\_with\_Slurm.} base-GPU node, which contains 4 NVIDIA P100-PCIE-12GB GPUs, 24 CPUs and 128GB memory. 

We compare the numerical results computed by our proposed method with those computed by the finite difference method, the Longstaff-Schwartz method and the method proposed in \citet{sirignano2018dgm}. For the Longstaff-Schwartz method, we choose degree-$\chi$ monomial basis (\ref{eq:LSM_basis}) with $\chi=4$.
Finite difference solutions with very fine grids are used as exact solutions. We note that this is feasible only if $d\leq 3$.

We note that when finite difference solutions are available, we can evaluate the absolute and percent errors of computed prices and deltas. More specifically, denote the finite difference solutions as $v_{exact}$. Then the percent errors of the price and the delta at $t=0$ are
\begin{equation}
\label{eq:metric1}
\frac{|v(\vec{s}^0,0)-v_{exact}(\vec{s}^0,0)|}
{|v_{exact}(\vec{s}^0,0)|}
\times 100\%,
\quad
\frac{\|\vec{\nabla} v(\vec{s}^0,0)-\vec{\nabla} v_{exact}(\vec{s}^0,0)\|_{L_2}}
{\|\vec{\nabla} v_{exact}(\vec{s}^0,0)\|_{L_2}}
\times 100\%;
\end{equation}
and the percent errors of the spacetime price and the spacetime delta are
\begin{equation}
\label{eq:metric2}
\frac{\sum_{m,n} |v(\vec{S}^n_m,t^n)-v_{exact}(\vec{S}^n_m,t^n)|}
{\sum_{m,n} |v_{exact}(\vec{S}^n_m,t^n)|}
\times 100\%,
\;
\frac{\sum_{m,n} \|\vec{\nabla} v(\vec{S}^n_m,t^n)-\vec{\nabla} v_{exact}(\vec{S}^n_m,t^n)\|_{L_2}}
{\sum_{m,n} \|\vec{\nabla} v_{exact}(\vec{S}^n_m,t^n)\|_{L_2}}
\times 100\%.
\end{equation}

In addition, we can evaluate the quality of the computed exercise boundaries. More specifically, each sample point $(\vec{S}^n_m,t^n)$ is classified as ``exercised" or ``continued" by either the proposed algorithm or other algorithms that we compare with. Meanwhile, the true ``exercised" or ``continued" class of each sample point can be determined by the finite difference method. Let ``exercised" class be the positive class, and denote the numbers of true positive, true negative, false positive and false negative samples as TP, TN, FP, FN, respectively. Then the quality of the exercise boundaries can be evaluated by the f1-score:
\begin{equation}
\label{eq:metric3}
\text{f1-score} \equiv \frac{2TP}{2TP + FP + FN}.
\end{equation}
The best (or worst) case of the f1-score is 1 (or 0), respectively. We note that another common metric to evaluate the quality of classification problems is the accuracy. Since in all our experiments, the positive class is skewed (around 3-17\%), the f1-score would be a better metric than the accuracy \citep[see][for explanations]{murphy2012machine}.

\subsection{Multi-dimensional geometric average options}
\label{subsec:multi_geometric}

Consider a $d$-dimensional ``geometric average" American call option, where $\rho_{ij}=\rho$ for $i\neq j$, $\sigma_i=\sigma$ for all $i$'s, and the payoff function is given by 
$f(\vec{s}) = \max\left[
\left( \prod_{i=1}^d s_i \right)^{1/d} - K, 0
\right]$.
Although such options are rarely seen in practical applications, they have semi-analytical solutions for benchmarking the performance of our algorithm in high dimensions. More specifically, it is shown in \citet{glasserman2013monte} and \citet{sirignano2018dgm} that such a $d$-dimensional option can be reduced to a one-dimensional American call option in the variable
$s' = 
\left( \prod_{i=1}^d s_i \right)^{1/d}
$,
where the effective volatility is
$\sigma' = 
\sqrt{\frac{1+(d-1)\rho}{d}} \sigma
$
and the effective drift is
$r-\delta+\frac{1}{2}(\sigma'^2-\sigma^2)$.
Hence, by solving the equivalent one-dimensional option using finite difference methods, one can compute the $d$-dimensional option prices and (sometimes) deltas\footnote{We note that solving the equivalent one-dimensional option is not sufficient for computing the $d$-dimensional delta except at the symmetric points $s_1=\cdots=s_d$. Interested readers can verify this by straightforward algebra.} accurately.

In the following Experiments 1-5, we consider the geometric average option in Section 4.3 of \citet{sirignano2018dgm}, where $\rho_{i,j}=0.75$, $\sigma=0.25$, $r=0$, $\delta=0.02$, $T=2$.


\begin{table}[h!]
\footnotesize
\begin{center}
(i) 7-dimensional geometric average call option
\\
\begin{tabular}{|c|c|c|c|c|c|}
\hline
\multirow{3}{*}[0em]{$s^0_i$}
& \multirow{3}{*}[0em]
{\begin{tabular}{@{}c@{}}exact price\\$v(\vec{s}^0,0)$\end{tabular}}
& \multicolumn{2}{@{}c@{}|}
{proposed method}
& \multicolumn{2}{@{}c@{}|}
{Longstaff-Schwartz}
\\
\cline{3-6}
& 
&
{\begin{tabular}{@{}c@{}}computed price\\$v(\vec{s}^0,0)$\end{tabular}}
& \begin{tabular}{@{}c@{}}percent\\error\end{tabular}
&
{\begin{tabular}{@{}c@{}}computed price\\$v(\vec{s}^0,0)$\end{tabular}}
& \begin{tabular}{@{}c@{}}percent\\error\end{tabular}
\\
\hline
90
&
5.9021
&
5.8822
&
0.34\%
&
5.8440
&
0.98\%
\\
\hline
100
&
10.2591
&
10.2286
&
0.30\%
&
10.1736
&
0.83\%
\\
\hline
110
&
15.9878
&
15.9738
&
0.09\%
&
15.8991
&
0.55\%
\\
\hline
\end{tabular}
\\
\smallskip
(ii) 13-dimensional geometric average call option
\\
\begin{tabular}{|c|c|c|c|c|c|}
\hline
\multirow{3}{*}[0em]{$s^0_i$}
& \multirow{3}{*}[0em]
{\begin{tabular}{@{}c@{}}exact price\\$v(\vec{s}^0,0)$\end{tabular}}
& \multicolumn{2}{@{}c@{}|}
{proposed method}
& \multicolumn{2}{@{}c@{}|}
{Longstaff-Schwartz}
\\
\cline{3-6}
& 
&
{\begin{tabular}{@{}c@{}}computed price\\$v(\vec{s}^0,0)$\end{tabular}}
& \begin{tabular}{@{}c@{}}percent\\error\end{tabular}
&
{\begin{tabular}{@{}c@{}}computed price\\$v(\vec{s}^0,0)$\end{tabular}}
& \begin{tabular}{@{}c@{}}percent\\error\end{tabular}
\\
\hline
90
&
5.7684
&
5.7719
&
0.06\%
&
5.5962
&
3.0\%
\\
\hline
100
&
10.0984
&
10.1148
&
0.16\%
&
9.9336
&
1.6\%
\\
\hline
110
&
15.8200
&
15.8259
&
0.04\%
&
15.6070
&
1.4\%
\\
\hline
\end{tabular}
\\
\smallskip
(iii) 20-dimensional geometric average call option
\\
\begin{tabular}{|c|c|c|c|c|c|}
\hline
\multirow{3}{*}[0em]{$s^0_i$}
& \multirow{3}{*}[0em]
{\begin{tabular}{@{}c@{}}exact price\\$v(\vec{s}^0,0)$\end{tabular}}
& \multicolumn{2}{@{}c@{}|}
{proposed method}
& \multicolumn{2}{@{}c@{}|}
{Longstaff-Schwartz}
\\
\cline{3-6}
& 
&
{\begin{tabular}{@{}c@{}}computed price\\$v(\vec{s}^0,0)$\end{tabular}}
& \begin{tabular}{@{}c@{}}percent\\error\end{tabular}
&
{\begin{tabular}{@{}c@{}}computed price\\$v(\vec{s}^0,0)$\end{tabular}}
& \begin{tabular}{@{}c@{}}percent\\error\end{tabular}
\\
\hline
90
&
5.7137
&
5.7105
&
0.06\%
&
5.2023
&
9.0\%
\\
\hline
100
&
10.0326
&
10.0180
&
0.15\%
&
9.5964
&
4.4\%
\\
\hline
110
&
15.7513
&
15.7425
&
0.06\%
&
15.2622
&
3.1\%
\\
\hline
\end{tabular}
\\
\smallskip
(iv) 100-dimensional geometric average call option
\\
\begin{tabular}{|c|c|c|c|c|c|}
\hline
\multirow{3}{*}[0em]{$s^0_i$}
& \multirow{3}{*}[0em]
{\begin{tabular}{@{}c@{}}exact price\\$v(\vec{s}^0,0)$\end{tabular}}
& \multicolumn{2}{@{}c@{}|}
{proposed method}
& \multicolumn{2}{@{}c@{}|}
{Longstaff-Schwartz}
\\
\cline{3-6}
& 
&
{\begin{tabular}{@{}c@{}}computed price\\$v(\vec{s}^0,0)$\end{tabular}}
& \begin{tabular}{@{}c@{}}percent\\error\end{tabular}
&
{\begin{tabular}{@{}c@{}}computed price\\$v(\vec{s}^0,0)$\end{tabular}}
& \begin{tabular}{@{}c@{}}percent\\error\end{tabular}
\\
\hline
90
&
5.6322
&
5.6154
&
0.30\%
&
OOM
&
OOM
\\
\hline
100
&
9.9345
&
9.9187
&
0.16\%
&
OOM
&
OOM
\\
\hline
110
&
15.6491
&
15.6219
&
0.17\%
&
OOM
&
OOM
\\
\hline
\end{tabular}
\end{center}
\caption{\label{tab:multi_geometric_vs_Longstaff_1}
Multi-dimensional geometric average call options: Computed prices at $t=0$, i.e., $v(\vec{s}^0,0)$.
OOM means ``out-of-memory".}
\end{table}



\begin{table}[h!]
\footnotesize
\begin{center}
(i) 7-dimensional geometric average call option
\\
\begin{tabular}{|c|c|c|c|c|}
\hline
\multirow{2}{*}[0em]{$s^0_i$}
& \multirow{2}{*}[0em]
{\begin{tabular}{@{}c@{}}exact delta
\\$\vec{\nabla}v(\vec{s}^0,0)$\end{tabular}}
& \multicolumn{2}{@{}c@{}|}
{proposed method}
& Longstaff-Schwartz
\\
\cline{3-5}
& 
& computed delta $\vec{\nabla}v(\vec{s}^0,0)$
& percent error
& percent error
\\
\hline
90
&
(0.0523,$\cdots$,0.0523)
&
(0.0516,$\cdots$,0.0516)
&
1.2\%
&
1.2\%
\\
\hline
100
&
(0.0722,$\cdots$,0.0722)
&
(0.0710,$\cdots$,0.0710)
&
1.7\%
&
1.6\%
\\
\hline
110
&
(0.0912,$\cdots$,0.0912)
&
(0.0901,$\cdots$,0.0901)
&
1.2\%
&
1.4\%
\\
\hline
\end{tabular}
\\
\smallskip
(ii) 13-dimensional geometric average call option
\\
\begin{tabular}{|c|c|c|c|c|}
\hline
\multirow{2}{*}[0em]{$s^0_i$}
& \multirow{2}{*}[0em]
{\begin{tabular}{@{}c@{}}exact delta
\\$\vec{\nabla}v(\vec{s}^0,0)$\end{tabular}}
& \multicolumn{2}{@{}c@{}|}
{proposed method}
& Longstaff-Schwartz
\\
\cline{3-5}
& 
& computed delta $\vec{\nabla}v(\vec{s}^0,0)$
& percent error
& percent error
\\
\hline
90
&
(0.0279,$\cdots$,0.0279)
&
(0.0277,$\cdots$,0.0277)
&
0.76\%
&
5.4\%
\\
\hline
100
&
(0.0387,$\cdots$,0.0387)
&
(0.0384,$\cdots$,0.0384)
&
0.83\%
&
3.7\%
\\
\hline
110
&
(0.0492,$\cdots$,0.0492)
&
(0.0486,$\cdots$,0.0486)
&
1.1\%
&
2.6\%
\\
\hline
\end{tabular}
\\
\smallskip
(iii) 20-dimensional geometric average call option
\\
\begin{tabular}{|c|c|c|c|c|}
\hline
\multirow{2}{*}[0em]{$s^0_i$}
& \multirow{2}{*}[0em]
{\begin{tabular}{@{}c@{}}exact delta
\\$\vec{\nabla}v(\vec{s}^0,0)$\end{tabular}}
& \multicolumn{2}{@{}c@{}|}
{proposed method}
& Longstaff-Schwartz
\\
\cline{3-5}
& 
& computed delta $\vec{\nabla}v(\vec{s}^0,0)$
& percent error
& percent error
\\
\hline
90
&
(0.0180,$\cdots$,0.0180)
&
(0.0179,$\cdots$,0.0179)
&
0.70\%
&
12.7\%
\\
\hline
100
&
(0.0251,$\cdots$,0.0251)
&
(0.0248,$\cdots$,0.0248)
&
1.2\%
&
8.3\%
\\
\hline
110
&
(0.0320,$\cdots$,0.0320)
&
(0.0316,$\cdots$,0.0316)
&
1.2\%
&
6.8\%
\\
\hline
\end{tabular}
\\
\smallskip
(iv) 100-dimensional geometric average call option
\\
\begin{tabular}{|c|c|c|c|c|}
\hline
\multirow{2}{*}[0em]{$s^0_i$}
& \multirow{2}{*}[0em]
{\begin{tabular}{@{}c@{}}exact delta
\\$\vec{\nabla}v(\vec{s}^0,0)$\end{tabular}}
& \multicolumn{2}{@{}c@{}|}
{proposed method}
& Longstaff-Schwartz
\\
\cline{3-5}
& 
& computed delta $\vec{\nabla}v(\vec{s}^0,0)$
& percent error
& percent error
\\
\hline
90
&
(0.00359,$\cdots$,0.00359)
&
(0.00357,$\cdots$,0.00357)
&
0.58\%
&
OOM
\\
\hline
100
&
(0.00502,$\cdots$,0.00502)
&
(0.00495,$\cdots$,0.00495)
&
1.3\%
&
OOM
\\
\hline
110
&
(0.00639,$\cdots$,0.00639)
&
(0.00631,$\cdots$,0.00631)
&
1.3\%
&
OOM
\\
\hline
\end{tabular}
\end{center}
\caption{\label{tab:multi_geometric_vs_Longstaff_2}
Multi-dimensional geometric average call options: Computed deltas at $t=0$, i.e., $\vec{\nabla}v(\vec{s}^0,0)$.
Note that all the reported deltas in the table are length-$d$ vectors where all the elements are the same. The column ``Longstaff-Schwartz" is the Longstaff-Schwartz method combined with \citet{thom2009longstaff} and \citet{broadie1996estimating}. OOM means ``out-of-memory".}
\end{table}


\begin{table}[h!]
\footnotesize
\begin{center}
geometric average call option
\\
\begin{tabular}{|c|c|c|c|c|c|c|c|c|}
\hline
\multirow{2}{*}[0em]{$s^0_i$}
& \multicolumn{4}{@{}c@{}|}
{proposed method}
& \multicolumn{4}{@{}c@{}|}
{Longstaff-Schwartz}
\\
\cline{2-9}
& $d=7$
& $d=13$
& $d=20$
& $d=100$
& $d=7$
& $d=13$
& $d=20$
& $d=100$
\\
\hline
90
&
0.96
&
0.95
&
0.96
&
0.95
&
0.72
&
0.56
&
0.42
&
OOM
\\
\hline
100
&
0.95
&
0.95
&
0.97
&
0.97
&
0.75
&
0.61
&
0.47
&
OOM
\\
\hline
110
&
0.98
&
0.96
&
0.96
&
0.97
&
0.78
&
0.65
&
0.51
&
OOM
\\
\hline
\end{tabular}
\end{center}
\caption{\label{tab:multi_geometric_vs_Longstaff_3}
Multi-dimensional geometric average call options: The f1-score of the exercise boundary classification.
OOM means ``out-of-memory".}
\end{table}



\begin{figure}[h!]
\centering
\footnotesize
(i) 7-dimensional geometric average call option
\\
\begin{tikzpicture}
\node (img1)
{\includegraphics[scale=0.4,trim={9mm 9mm 0 8mm},clip]{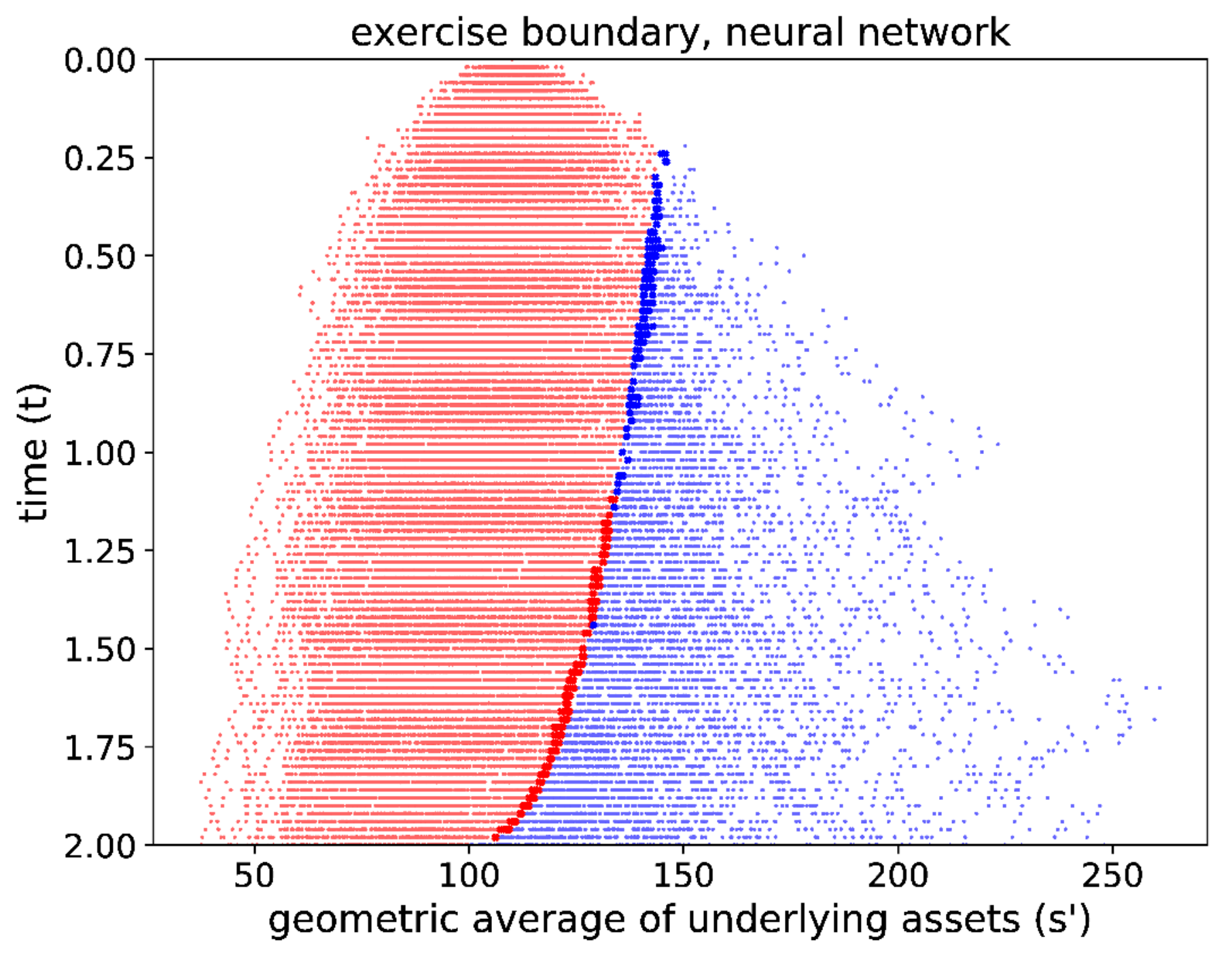}};
\node[below=of img1, node distance=0cm, yshift=1.2cm, xshift=0cm] {geometric average of underlying asset prices $(s')$};
\node[left=of img1, node distance=0cm, rotate=90, anchor=center, yshift=-0.9cm, xshift=0cm] {time ($t$)};
\node[above=of img1, node distance=0cm, yshift=-1.15cm, xshift=0cm] {neural network};
\end{tikzpicture}
\begin{tikzpicture}
\node (img1)
{\includegraphics[scale=0.4,trim={9mm 9mm 0 8mm},clip]{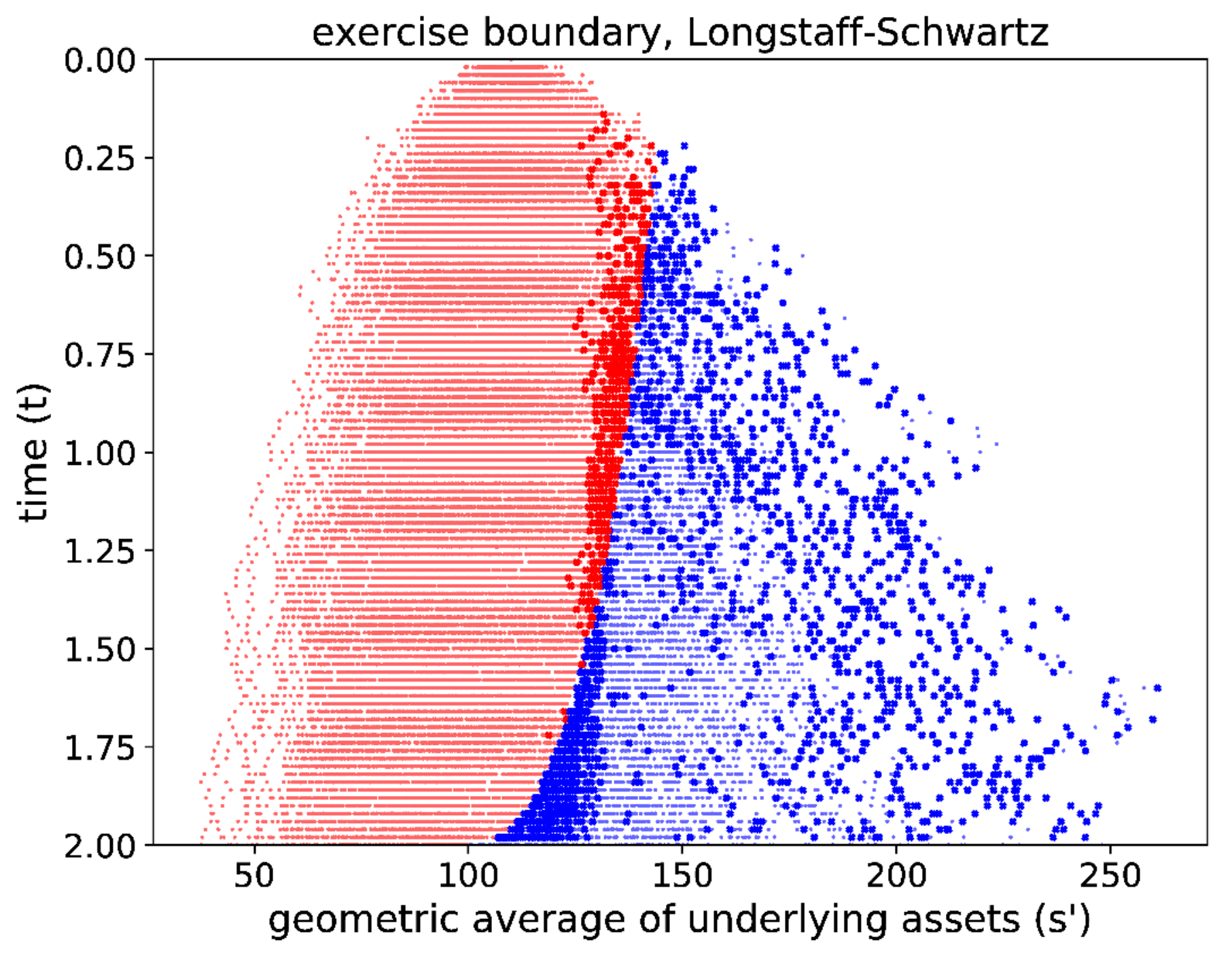}};
\node[below=of img1, node distance=0cm, yshift=1.2cm, xshift=0cm] {geometric average of underlying asset prices $(s')$};
\node[left=of img1, node distance=0cm, rotate=90, anchor=center, yshift=-0.9cm, xshift=0cm] {time ($t$)};
\node[above=of img1, node distance=0cm, yshift=-1.2cm, xshift=0cm] {Longstaff-Schwartz};
\end{tikzpicture}
\\
(ii) 20-dimensional geometric average call option
\\
\begin{tikzpicture}
\node (img1)
{\includegraphics[scale=0.4,trim={9mm 9mm 0 8mm},clip]{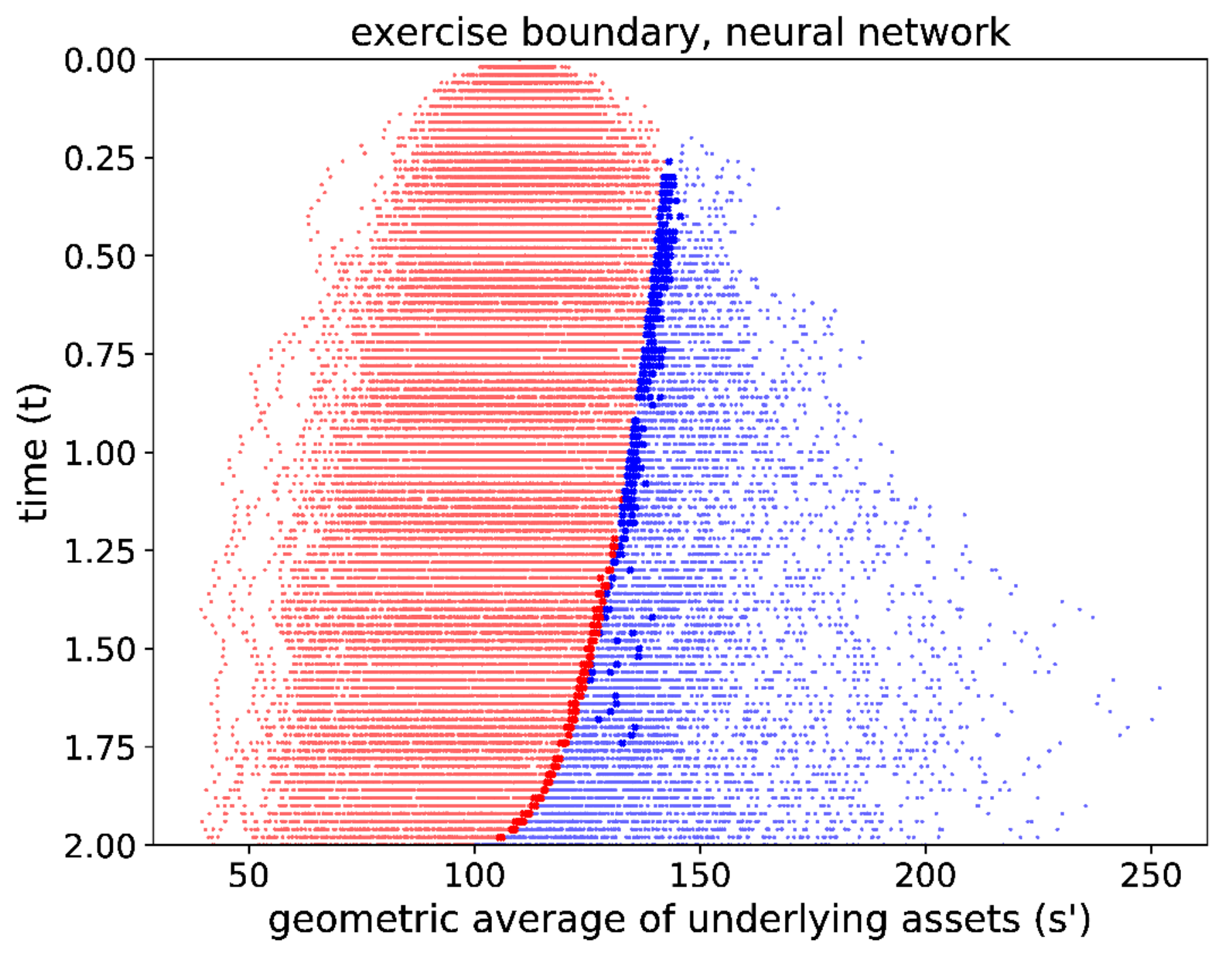}};
\node[below=of img1, node distance=0cm, yshift=1.2cm, xshift=0cm] {geometric average of underlying asset prices $(s')$};
\node[left=of img1, node distance=0cm, rotate=90, anchor=center, yshift=-0.9cm, xshift=0cm] {time ($t$)};
\node[above=of img1, node distance=0cm, yshift=-1.15cm, xshift=0cm] {neural network};
\end{tikzpicture}
\begin{tikzpicture}
\node (img1)
{\includegraphics[scale=0.4,trim={9mm 9mm 0 8mm},clip]{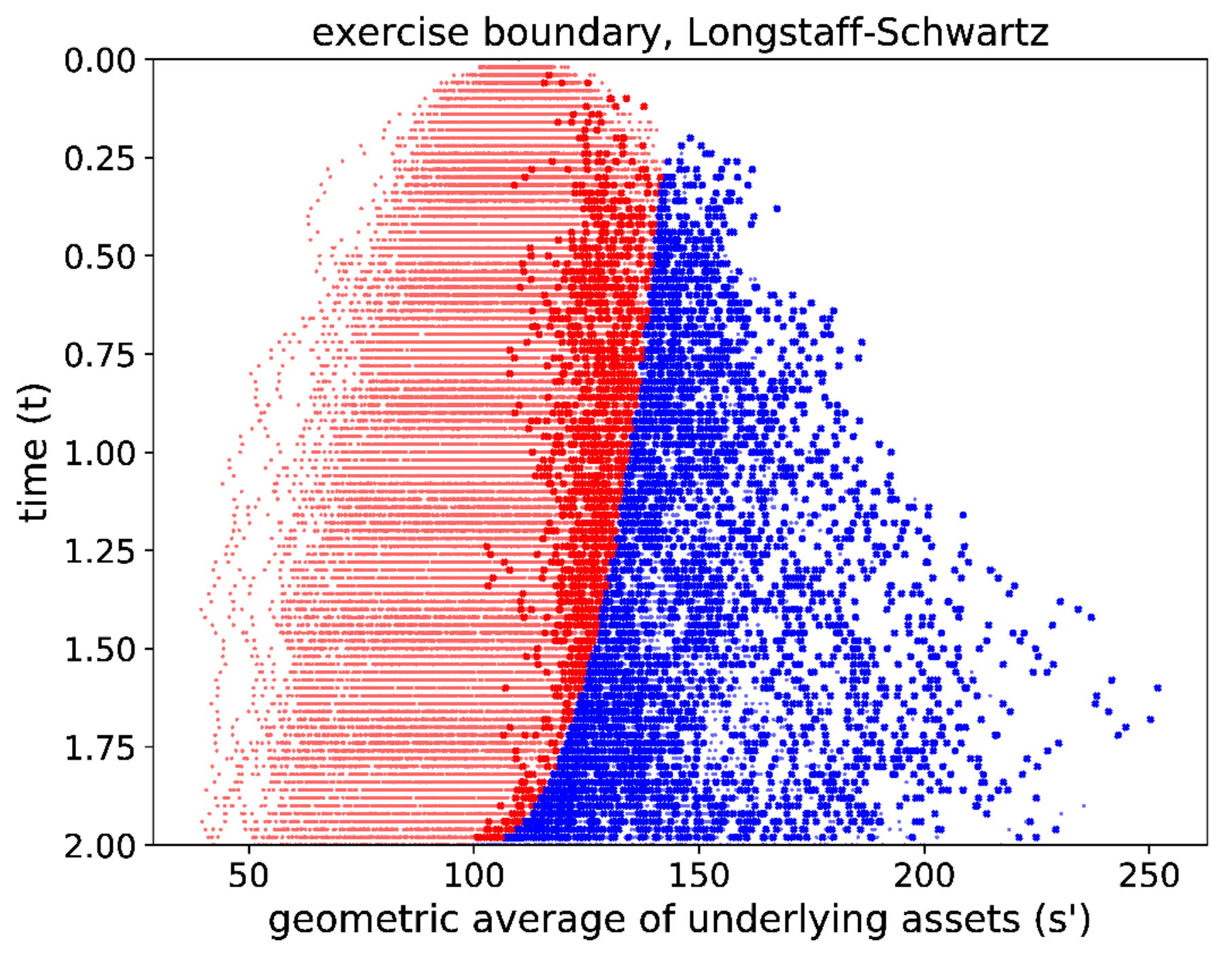}};
\node[below=of img1, node distance=0cm, yshift=1.2cm, xshift=0cm] {geometric average of underlying asset prices $(s')$};
\node[left=of img1, node distance=0cm, rotate=90, anchor=center, yshift=-0.9cm, xshift=0cm] {time ($t$)};
\node[above=of img1, node distance=0cm, yshift=-1.2cm, xshift=0cm] {Longstaff-Schwartz};
\end{tikzpicture}
\caption{\label{fig:multi_geometric_boundary}
Multi-dimensional geometric average call options: Comparison of exercise boundaries between the proposed neural network approach (top left and bottom left) and the Longstaff-Schwartz approach (top right and bottom right). All blue points: sample points that should be exercised; all red points: sample points that should be continued; bold dark blue points: sample points that should be exercised but are misclassified as continued; bold dark red points: sample points that should be continued but are misclassified as exercised.
}
\end{figure}


\subsubsection*{Experiment 1}
{\it Comparison between our proposed method and the Longstaff-Schwartz method.}
First we compare the computed prices at $t=0$; see Table \ref{tab:multi_geometric_vs_Longstaff_1}. Each sub-table includes: the exact prices computed by the Crank-Nicolson finite difference method with 1000 timesteps and 16385 space grid points, the prices and the corresponding percent errors computed by our proposed method, and the prices and the corresponding percent errors computed by the Longstaff-Schwartz method.
For the proposed method, the computed prices are accurate up to 2 decimal places; the percent errors are bounded by $0.34\%$, and remain approximately the same as the dimension increases.
As a comparison, for the Longstaff-Schwartz method, the percent errors deteriorate from $1\%$ to $9\%$ as the dimension increases from 7 to 20. If we keep increasing the dimension towards 100, the Longstaff-Schwartz method encounters an out-of-memory error, because, at $d=100$, it requires storing 
$\left(
\begin{smallmatrix}
d + \chi \\ d
\end{smallmatrix}
\right)CM = 3.3\times 10^{12}$
floating point numbers, or around 23TB of memory.

The Longstaff-Schwartz algorithm combined with the approaches in \citet{thom2009longstaff} and \citet{broadie1996estimating} can be used to compute the deltas at $t=0$. Table \ref{tab:multi_geometric_vs_Longstaff_2} compares the deltas at $t=0$ computed by our proposed approach with the ones computed by the Longstaff-Schwartz algorithm.
For the Longstaff-Schwartz algorithm, as the dimension increases from 7 to 20, the percent errors of the deltas worsen from $1.6\%$ to $12.7\%$; as the dimension continues to increase towards 100, an out-of-memory error occurs. However, for our proposed method, the computed deltas are accurate up to 3 decimal places; the percent errors do not increase with the dimension and stay below $1.7\%$.

Furthermore, we compare the exercise boundaries computed by the proposed neural network approach with the ones computed by the Longstaff-Schwartz approach.
Table \ref{tab:multi_geometric_vs_Longstaff_3} evaluates the f1-score of the exercise boundary classification, as defined in (\ref{eq:metric3}). For the proposed method, the f1-score remains around 0.95-0.98 as the dimension increases from 7 to 100. For the Longstaff-Schwartz algorithm, the f1-score drops from 0.78 to 0.42 as the dimension increases from 7 to 20. This illustrates a more precise exercise boundary determined by our proposed algorithm.

Figure \ref{fig:multi_geometric_boundary} visualizes the exercise boundaries computed by both algorithms. In order to visualize this, we start with $(\vec{s}^0,t^0)=(1.1K,0)$ and use the SDE (\ref{eq:Euler_S0})-(\ref{eq:Euler_S}) to generate sample points on the entire spacetime, i.e.,
$\{(\vec{S}^n_m,t^n) \, | \,
n = 0, ..., N; m = 1, ..., M
\}$;
we classify each sample point using either our proposed method, i.e., (\ref{eq:exercise}), or the Longstaff-Schwartz method; then we project these $(d+1)$-dimensional points onto the 2-dimensional points $\{({s'}^n_m,t^n)\}$, where
${s'}^n_m
= \left( \prod_{i=1}^d (S_i)^n_m \right)^{1/d}$
is the geometric average of the underlying asset prices $\vec{S}^n_m$.
We use bold dark blue to mark the sample points that should be exercised but are misclassified as continued, and bold dark red to mark the ones that should be continued but are misclassified as exercised. The plots show that the proposed neural network approach (top left and bottom left) has fewer misclassified sample points than the Longstaff-Schwartz approach (top right and bottom right). In other words, the proposed neural network approach yields more precise exercise boundaries.


\begin{table}[t!]
\footnotesize
\begin{center}
geometric average call option, $s_i^0=100$
\\
\begin{tabular}{|c|c|c|c|c|}
\hline
$d$
& exact price $v(\vec{s}^0,0)$
& mean of computed prices
& percent error
& 95\% CI
\\
\hline
7
&
10.2591
&
10.2468
&
0.12\%
&
$\pm$0.0161 ($\pm$0.16\%)
\\
\hline
13
&
10.0984
&
10.0822
&
0.16\%
&
$\pm$0.0201 ($\pm$0.20\%)
\\
\hline
20
&
10.0326
&
10.0116
&
0.21\%
&
$\pm$0.0173 ($\pm$0.17\%)
\\
\hline
100
&
9.9345
&
9.9163
&
0.18\%
&
$\pm$0.0038 ($\pm$0.04\%)
\\
\hline
\end{tabular}
\end{center}
\caption{\label{tab:multi_geometric_vs_Longstaff_1_CI}
Multi-dimensional geometric average call options: mean values and 95\% T-statistic confidence intervals (CIs) of the computed prices at $t=0$, i.e., $v(\vec{s}^0,0)$, using the proposed neural network method.}
\end{table}



\begin{table}[t!]
\footnotesize
\begin{center}
geometric average call option, $s_i^0=100$
\\
\begin{tabular}{|c|c|c|c|c|}
\hline
$d$
& exact delta $\nabla v(\vec{s}^0,0)$
& mean of computed deltas
& percent error
& 95\% CI of $\frac{\partial v}{\partial s_1} (\vec{s}^0,0)$
\\
\hline
7
&
(0.0722,$\cdots$,0.0722)
&
(0.0717,$\cdots$,0.0717)
&
0.67\%
&
$\pm 1.8\times 10^{-4}$ ($\pm$0.25\%)
\\
\hline
13
&
(0.0387,$\cdots$,0.0387)
&
(0.0384,$\cdots$,0.0384)
&
0.70\%
&
$\pm 7.3\times 10^{-5}$ ($\pm$0.19\%)
\\
\hline
20
&
(0.0251,$\cdots$,0.0251)
&
(0.0249,$\cdots$,0.0249)
&
0.78\%
&
$\pm 4.2\times 10^{-5}$ ($\pm$0.17\%)
\\
\hline
100
&
(0.00502,$\cdots$,0.00502)
&
(0.00498,$\cdots$,0.00498)
&
0.76\%
&
$\pm 8.9\times 10^{-6}$ ($\pm$0.18\%)
\\
\hline
\end{tabular}
\end{center}
\caption{\label{tab:multi_geometric_vs_Longstaff_2_CI}
Multi-dimensional geometric average call options: mean values of the computed deltas at $t=0$, i.e., $\nabla v(\vec{s}^0,0)$, using the proposed neural network method, and the corresponding 95\% T-statistic confidence intervals (CIs) of the first elements of deltas, i.e., $\frac{\partial v}{\partial s_1} (\vec{s}^0,0)$.}
\end{table}


\subsubsection*{Experiment 2}
{\it Confidence intervals by the proposed method.}
We repeat the experiments of computing the prices and deltas at $t=0$ (Tables \ref{tab:multi_geometric_vs_Longstaff_1}-\ref{tab:multi_geometric_vs_Longstaff_2}) for 9 times.
Tables \ref{tab:multi_geometric_vs_Longstaff_1_CI}-\ref{tab:multi_geometric_vs_Longstaff_2_CI} report the mean values of the computed prices and deltas, and the corresponding 95\% T-statistic confidence intervals. The last columns of the tables show that, for both the prices and the deltas, the deviations from the mean values remain a constant of $\pm 0.2\%$ as the dimension increases.


\begin{table}[h!]
\footnotesize
\begin{center}
(i) 7-dimensional geometric average call option
\\
\begin{tabular}{|c|c|c|c|c|}
\hline
\multirow{2}{*}[0em]{$s^0_i$}
& \multicolumn{2}{@{}c@{}|}
{spacetime price $v(\vec{s},t)$}
& \multicolumn{2}{@{}c@{}|}
{spacetime derivative
$\frac{\partial v}{\partial s'}(s',t)$}
\\
\cline{2-5}
& absolute error
& percent error
& absolute error
& percent error
\\
\hline
90
&
0.0688
&
1.2\%
&
0.0102
&
3.3\%
\\
\hline
100
&
0.0545
&
0.54\%
&
0.0102
&
2.3\%
\\
\hline
110
&
0.0450
&
0.29\%
&
0.0092
&
1.6\%
\\
\hline
\end{tabular}
\\
\smallskip
(ii) 13-dimensional geometric average call option
\\
\begin{tabular}{|c|c|c|c|c|}
\hline
\multirow{2}{*}[0em]{$s^0_i$}
& \multicolumn{2}{@{}c@{}|}
{spacetime price $v(\vec{s},t)$}
& \multicolumn{2}{@{}c@{}|}
{spacetime derivative
$\frac{\partial v}{\partial s'}(s',t)$}
\\
\cline{2-5}
& absolute error
& percent error
& absolute error
& percent error
\\
\hline
90
&
0.0540
&
0.94\%
&
0.0101
&
3.3\%
\\
\hline
100
&
0.0475
&
0.48\%
&
0.0106
&
2.4\%
\\
\hline
110
&
0.0465
&
0.30\%
&
0.0093
&
1.6\%
\\
\hline
\end{tabular}
\\
\smallskip
(iii) 20-dimensional geometric average call option
\\
\begin{tabular}{|c|c|c|c|c|}
\hline
\multirow{2}{*}[0em]{$s^0_i$}
& \multicolumn{2}{@{}c@{}|}
{spacetime price $v(\vec{s},t)$}
& \multicolumn{2}{@{}c@{}|}
{spacetime derivative
$\frac{\partial v}{\partial s'}(s',t)$}
\\
\cline{2-5}
& absolute error
& percent error
& absolute error
& percent error
\\
\hline
90
&
0.0567
&
1.00\%
&
0.0115
&
3.7\%
\\
\hline
100
&
0.0455
&
0.46\%
&
0.0111
&
2.5\%
\\
\hline
110
&
0.0397
&
0.26\%
&
0.0090
&
1.6\%
\\
\hline
\end{tabular}
\\
\smallskip
(iv) 100-dimensional geometric average call option
\\
\begin{tabular}{|c|c|c|c|c|}
\hline
\multirow{2}{*}[0em]{$s^0_i$}
& \multicolumn{2}{@{}c@{}|}
{spacetime price $v(\vec{s},t)$}
& \multicolumn{2}{@{}c@{}|}
{spacetime derivative
$\frac{\partial v}{\partial s'}(s',t)$}
\\
\cline{2-5}
& absolute error
& percent error
& absolute error
& percent error
\\
\hline
90
&
0.0534
&
0.96\%
&
0.0117
&
3.8\%
\\
\hline
100
&
0.0458
&
0.47\%
&
0.0107
&
2.4\%
\\
\hline
110
&
0.0480
&
0.31\%
&
0.0099
&
1.7\%
\\
\hline
\end{tabular}
\end{center}
\caption{\label{tab:multi_geometric_spacetime}
Multi-dimensional geometric average call options: Spacetime prices and deltas (in terms of absolute and percent errors) computed by our proposed method.
}
\end{table}



\begin{figure}[h!]
\centering
\footnotesize
100-dimensional geometric average call option
\\
\begin{tikzpicture}
\node (img1)
{\includegraphics[scale=0.35,trim={9mm 9mm 0 8mm},clip]{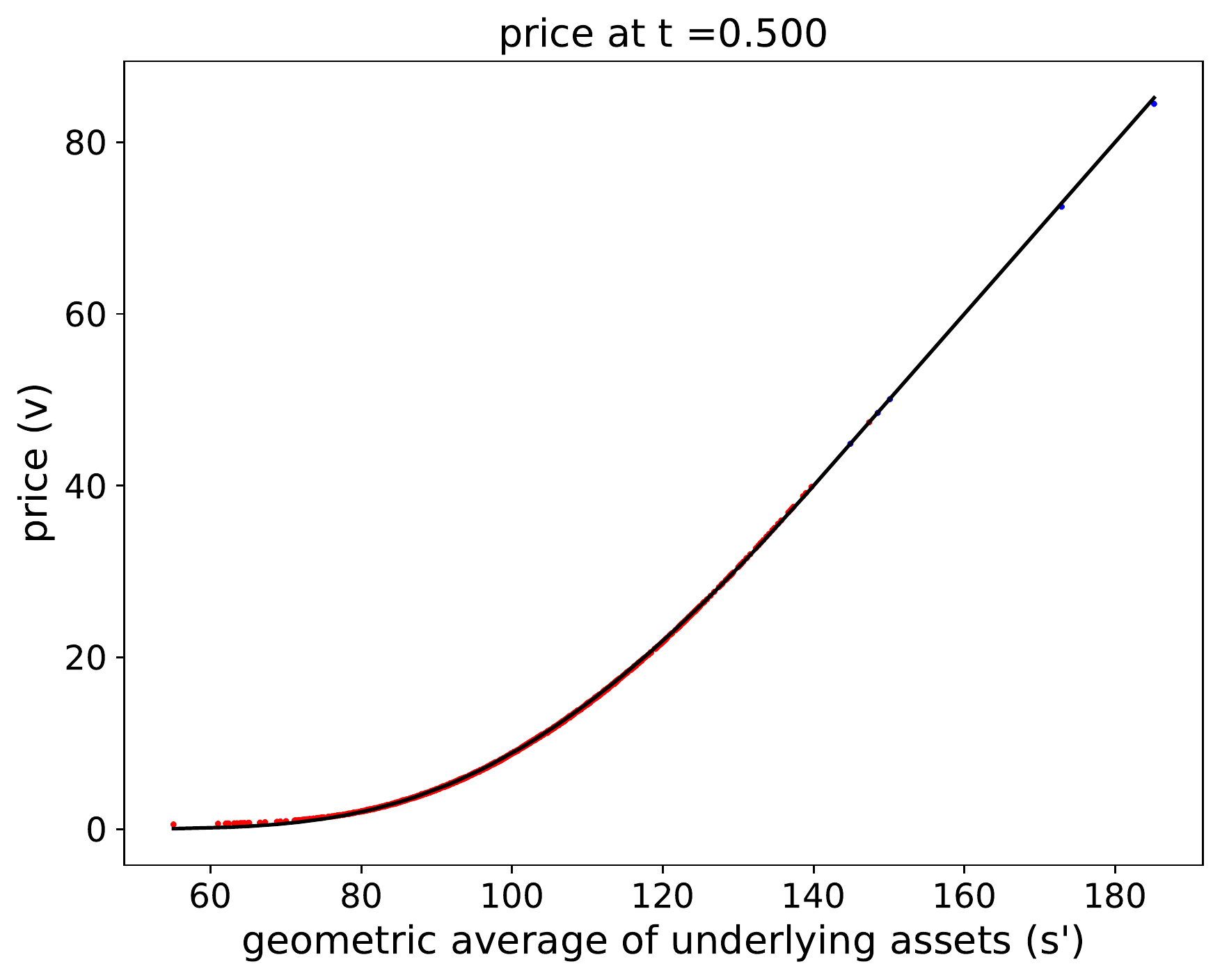}};
\node[below=of img1, node distance=0cm, yshift=1.2cm, xshift=0cm, font=\scriptsize] {geometric average of underlying asset prices $(s')$};
\node[left=of img1, node distance=0cm, rotate=90, anchor=center, yshift=-0.9cm, xshift=0cm] {price ($v$)};
\node[above=of img1, node distance=0cm, yshift=-1.2cm, xshift=0cm] {price at $t=0.5$};
\end{tikzpicture}
\begin{tikzpicture}
\node (img1)
{\includegraphics[scale=0.35,trim={9mm 9mm 0 8mm},clip]{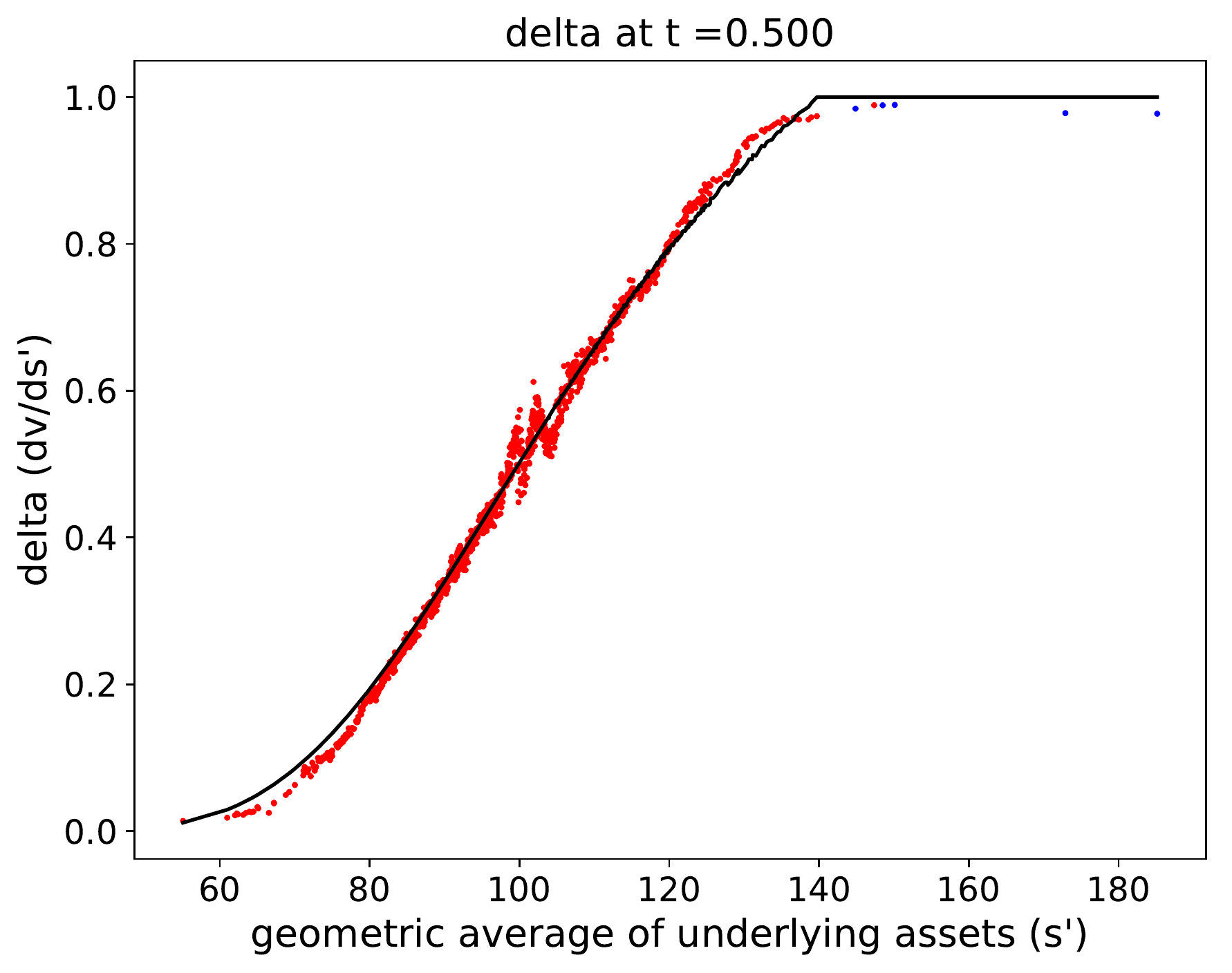}};
\node[below=of img1, node distance=0cm, yshift=1.2cm, xshift=0cm, font=\scriptsize] {geometric average of underlying asset prices $(s')$};
\node[left=of img1, node distance=0cm, rotate=90, anchor=center, yshift=-0.9cm, xshift=0cm] {delta ($dv/ds'$)};
\node[above=of img1, node distance=0cm, yshift=-1.15cm, xshift=0cm] {delta at $t=0.5$};
\end{tikzpicture}
\\
\begin{tikzpicture}
\node (img1)
{\includegraphics[scale=0.35,trim={9mm 9mm 0 8mm},clip]{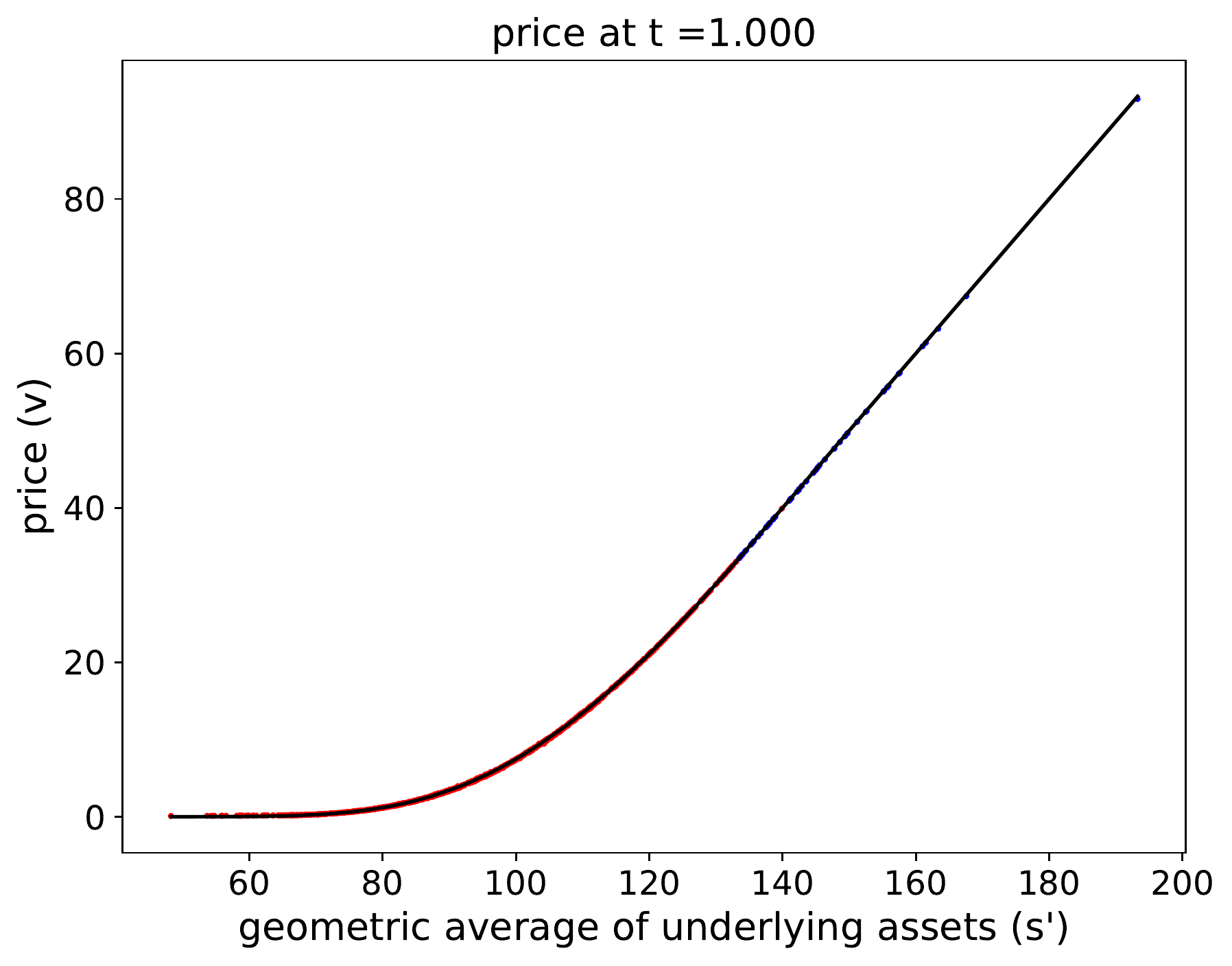}};
\node[below=of img1, node distance=0cm, yshift=1.2cm, xshift=0cm, font=\scriptsize] {geometric average of underlying asset prices $(s')$};
\node[left=of img1, node distance=0cm, rotate=90, anchor=center, yshift=-0.9cm, xshift=0cm] {price ($v$)};
\node[above=of img1, node distance=0cm, yshift=-1.2cm, xshift=0cm] {price at $t=1.0$};
\end{tikzpicture}
\begin{tikzpicture}
\node (img1)
{\includegraphics[scale=0.35,trim={9mm 9mm 0 8mm},clip]{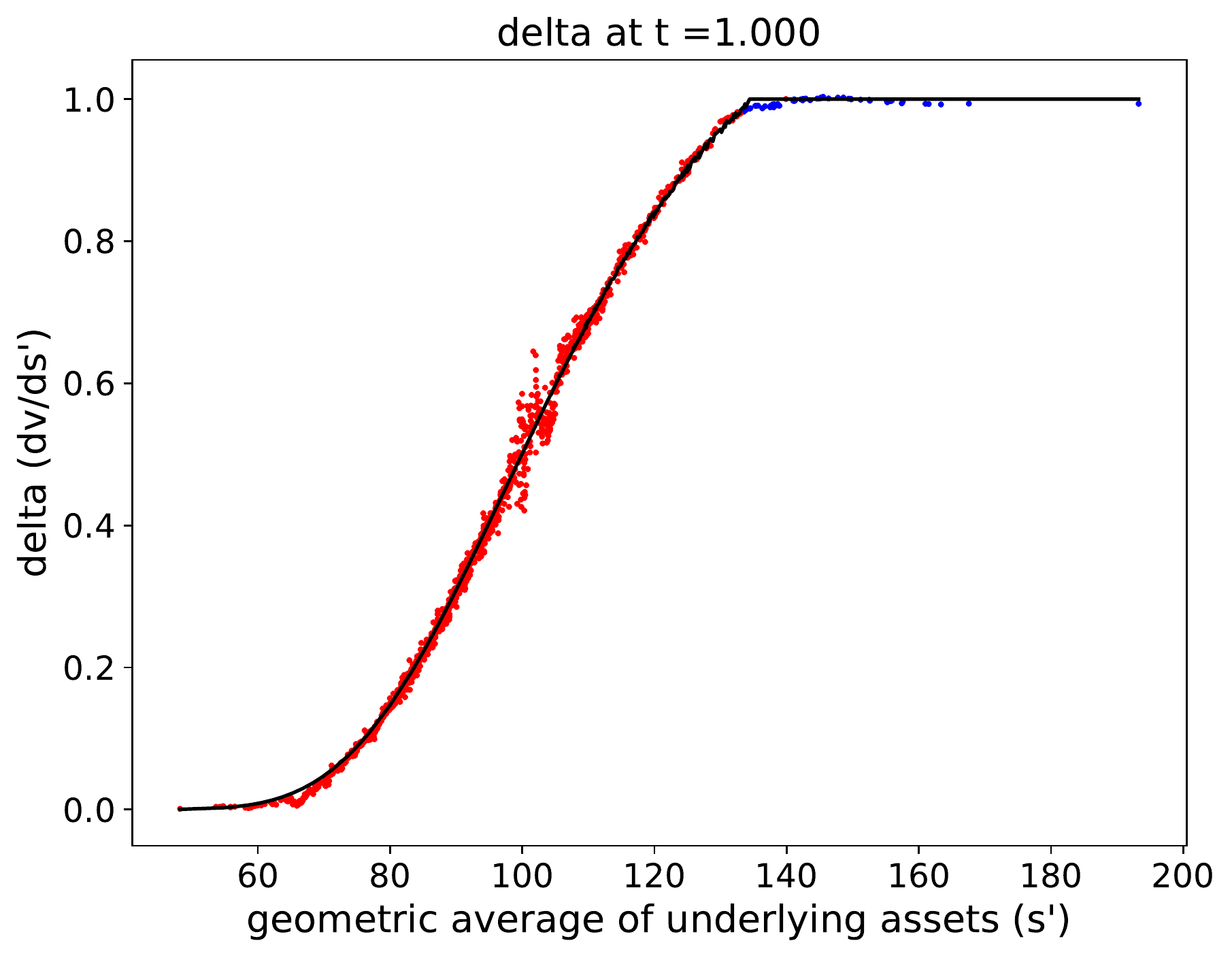}};
\node[below=of img1, node distance=0cm, yshift=1.2cm, xshift=0cm, font=\scriptsize] {geometric average of underlying asset prices $(s')$};
\node[left=of img1, node distance=0cm, rotate=90, anchor=center, yshift=-0.9cm, xshift=0cm] {delta ($dv/ds'$)};
\node[above=of img1, node distance=0cm, yshift=-1.15cm, xshift=0cm] {delta at $t=1.0$};
\end{tikzpicture}
\\
\begin{tikzpicture}
\node (img1)
{\includegraphics[scale=0.35,trim={9mm 9mm 0 8mm},clip]{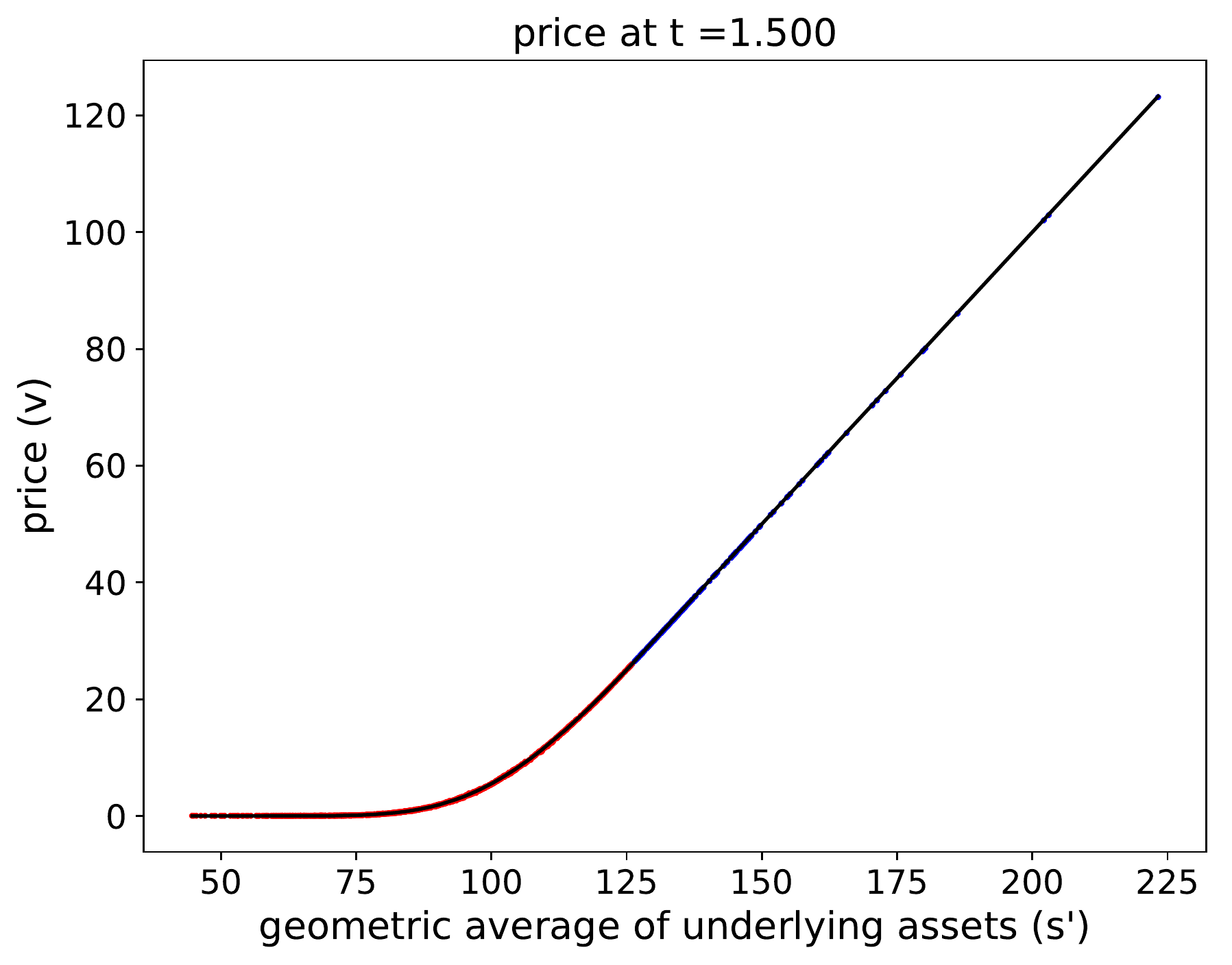}};
\node[below=of img1, node distance=0cm, yshift=1.2cm, xshift=0cm, font=\scriptsize] {geometric average of underlying asset prices $(s')$};
\node[left=of img1, node distance=0cm, rotate=90, anchor=center, yshift=-0.9cm, xshift=0cm] {price ($v$)};
\node[above=of img1, node distance=0cm, yshift=-1.2cm, xshift=0cm] {price at $t=1.5$};
\end{tikzpicture}
\begin{tikzpicture}
\node (img1)
{\includegraphics[scale=0.35,trim={9mm 9mm 0 8mm},clip]{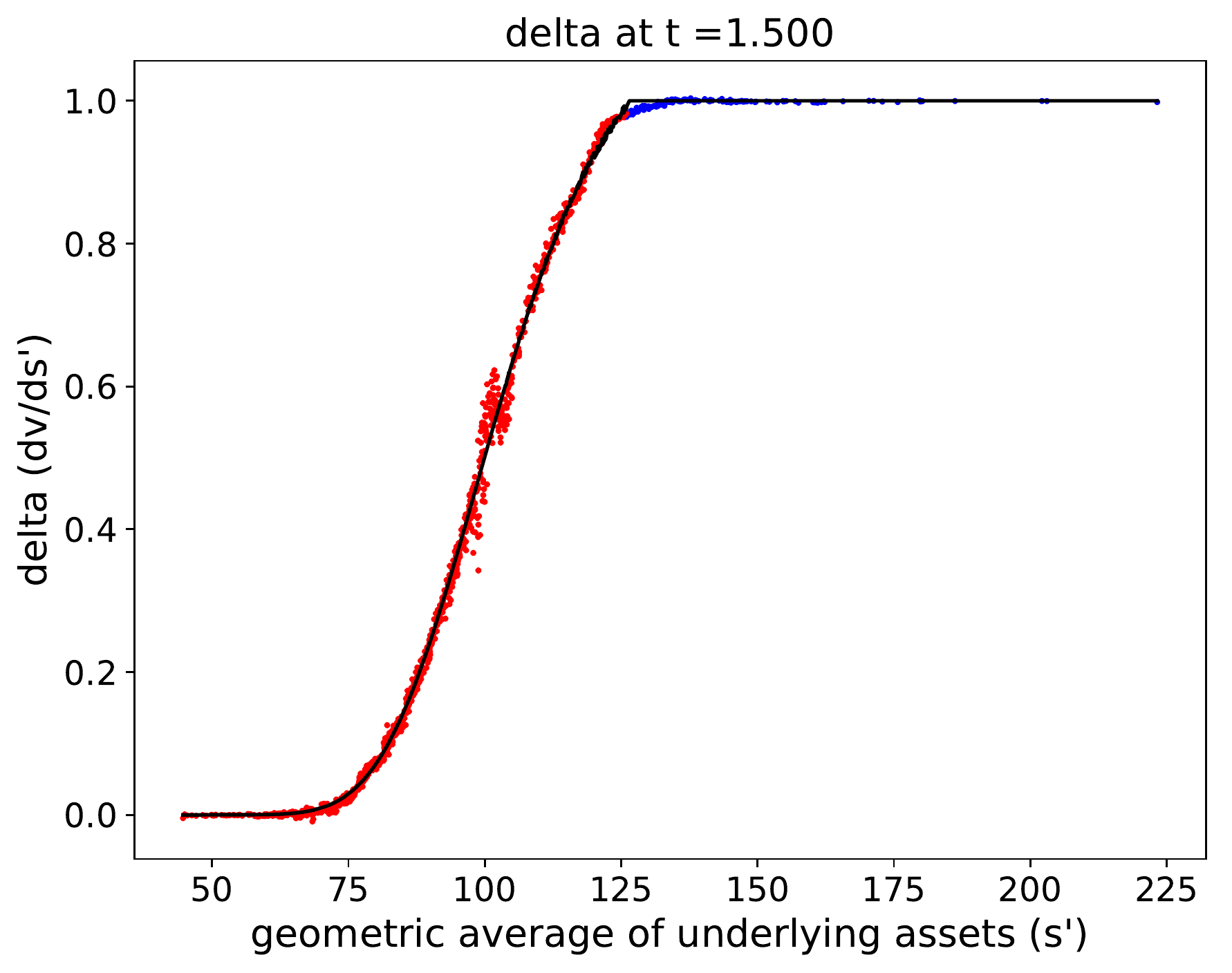}};
\node[below=of img1, node distance=0cm, yshift=1.2cm, xshift=0cm, font=\scriptsize] {geometric average of underlying asset prices $(s')$};
\node[left=of img1, node distance=0cm, rotate=90, anchor=center, yshift=-0.9cm, xshift=0cm] {delta ($dv/ds'$)};
\node[above=of img1, node distance=0cm, yshift=-1.15cm, xshift=0cm] {delta at $t=1.5$};
\end{tikzpicture}
\caption{\label{fig:multi_geometric_spacetime}
100-dimensional geometric average call option:
Prices (left subplots) and deltas (right subplots) computed by the proposed neural network approach at $t=$0.5, 1.0, 1.5. The blue/red dots are neural network output values of the exercised/continued sample points. The black lines
are the exact solutions computed by finite difference methods.
}
\end{figure}


\subsubsection*{Experiment 3}
{\it Evaluation of computed spacetime prices and deltas by the proposed method.}
Our proposed algorithm yields not only the prices and deltas at $t=0$, but also the prices and deltas on the entire spacetime, which are directly extracted from the output of the neural networks. We emphasize that the computation of spacetime prices and deltas using the Longstaff-Schwartz method is infeasible. The reason is that using the Longstaff-Schwartz method to compute prices and deltas on the entire spacetime would require repeating the algorithm at {\it every} sample point, noting that the Longstaff-Schwartz method at one sample point is already non-trivial. We also remark that although one may consider using the Longstaff-Schwartz regressed values as an estimate of the spacetime prices, Figure 1 in \citet{bouchard2012monte} shows that using such regressed values as the spacetime solution is inaccurate.

First we evaluate the absolute and percent errors of the spacetime price $v(\vec{s},t)$ and the derivative $\frac{\partial v}{\partial s'}(s',t)$ computed by our proposed method. Here we evaluate the errors of the derivative $\frac{\partial v}{\partial s'}(s',t)$ instead of the delta $\vec{\nabla} v(\vec{s},t)$, because the exact values of the former can be computed by finite difference method spacetime-wise, but not the latter. Table \ref{tab:multi_geometric_spacetime} shows that the absolute errors of the spacetime prices and derivatives are around 0.04-0.07 and 0.01 respectively, or in other words, the spacetime prices and derivatives are accurate up to 2 decimal places; the percent errors are less than 1.2\% and 3.8\%, respectively. We note that the percent errors of the spacetime prices and deltas (Table \ref{tab:multi_geometric_spacetime}) are slightly larger than the percent errors of the prices and deltas at $t=0$ (Tables \ref{tab:multi_geometric_vs_Longstaff_1}-\ref{tab:multi_geometric_vs_Longstaff_2}). This is expected, as the values at $t=0$ are computed by the improved approach described in Section \ref{subsec:initial}.

To visualize the spacetime solutions, we consider the 100-dimensional case, select three time slices $t=$ 0.5, 1.0, 1.5, and project the 100-dimensional sample points of $v(\vec{s},t)$ and $\vec{\nabla} v(\vec{s},t)$ to 1-dimensional points of $v(s',t)$ and $\frac{\partial v}{\partial s'}(s',t)$, as shown in Figure \ref{fig:multi_geometric_spacetime}. The spacetime option prices and deltas computed by the proposed neural network approach (the blue/red dots) agree well with the exact solutions by finite difference methods (black lines). We note that small fluctuations exist for the computed spacetime deltas (right subplots), especially near the strike price $K=100$. This is expected, as the deltas of the payoff functions are discontinuous at the strike price. Smoothing the payoff, as described in Section \ref{subsec:smooth}, can mitigate this issue, although it does not eliminate the fluctuations.

\subsubsection*{Experiment 4}
{\it Comparison between our proposed method and the method in \citet{sirignano2018dgm}.}
First we compare the computed prices at $t=0$; see Table \ref{tab:multi_geometric_vs_sirignano_1}. Up to 200 dimension is tested. In particular, by comparing the last two columns of the table, we observe that the percent errors computed by our method are bounded by $0.17\%$, while the ones computed by \citet{sirignano2018dgm} are bounded by $0.22\%$.

Next we compare the computed spacetime prices by the two approaches.
Figure \ref{fig:multi_geometric_vs_sirignano_1} compares the absolute errors of the spacetime prices.
To plot the figure, we start with $(\vec{s}^0,t^0)=(K,0)$ and use the SDE (\ref{eq:Euler_S0})-(\ref{eq:Euler_S}) to generate sample points on the entire spacetime, i.e.,
$\{(\vec{S}^n_m,t^n) \, | \,
n = 0, ..., N; m = 1, ..., M
\}$.
We compute the error at each sample point,
$e(\vec{S}^n_m,t^n)
\equiv |v(\vec{S}^n_m,t^n)-v_{exact}(\vec{S}^n_m,t^n)|$.
Then we project $\{e(\vec{S}^n_m,t^n)\}$ from $(d+1)$-dimensional to 2-dimensional space and get the sample points $\{e({s'}^n_m,t^n)\}$, where ${s'}^n_m$ is the geometric average of $\vec{S}^n_m$.
From the discrete data points $\{e({s'}^n_m,t^n)\}$, we use interpolation to obtain a continuous error function $e(s',t)$  and represent it by a heatmap (also known as filled contour plot), where the $x$ and $y$ axes are the time $t$ and the geometric average $s'$, and the color represents the magnitude of $e(s',t)$. The red, green and blue areas represent the areas where the samples have large, median and small errors, respectively. The white areas are the areas outside the convex hull of the sampled points, where no value of $e(s',t)$ can be interpolated from the sampled $\{e({s'}^n_m,t^n)\}$. We remark that this plotting procedure is the same as \citet{sirignano2018dgm}. Indeed, the right subplot of Figure \ref{fig:multi_geometric_vs_sirignano_1} is directly taken from \citet{sirignano2018dgm}. In addition, we note that the colored areas of the left and right subplots are not exactly the same. This is because
the points on (or near) the boundary of the convex hull are only sampled with a small probability and would have a large variation under the two independent stochastic sampling processes that generate the two subplots.


\begin{table}[b!]
\footnotesize
\begin{center}
geometric average call option, $s^0_i=100$
\\
\begin{tabular}{|c|c|c|c|c|}
\hline
\multirow{2}{*}[0em]{$d$}
& \multirow{2}{*}[0em]
{\begin{tabular}{@{}c@{}}exact price\\$v(\vec{s}^0,0)$\end{tabular}}
& \multicolumn{2}{@{}c@{}|}
{proposed method}
& \citet{sirignano2018dgm}
\\
\cline{3-5}
& 
& computed price $v(\vec{s}^0,0)$
& percent error
& percent error
\\
\hline
3
&
10.7185
&
10.7368
&
0.17\%
&
0.05\%
\\
\hline
20
&
10.0326
&
10.0180
&
0.15\%
&
0.03\%
\\
\hline
100
&
9.9345
&
9.9187
&
0.16\%
&
0.11\%
\\
\hline
200
&
9.9222
&
9.9088
&
0.14\%
&
0.22\%
\\
\hline
\end{tabular}
\end{center}
\caption{\label{tab:multi_geometric_vs_sirignano_1}
Multi-dimensional geometric average call options: Computed prices at $t=0$, i.e., $v(\vec{s}^0,0)$. $s^0_i=100$.
The percent errors reported in Table 1 of \citet{sirignano2018dgm} are also included in the last column of this table.
}
\end{table}



\begin{table}[b!]
\footnotesize
\begin{center}
geometric average call option, $s^0_i=100$
\\
\begin{tabular}{|c|c|c|c|}
\hline
\multirow{2}{*}[0em]{$d$}
& \multirow{2}{*}[0em]
{\begin{tabular}{@{}c@{}}exact delta
\\$\vec{\nabla}v(\vec{s}^0,0)$\end{tabular}}
& \multicolumn{2}{@{}c@{}|}
{proposed method}
\\
\cline{3-4}
& 
& computed delta $\vec{\nabla}v(\vec{s}^0,0)$
& percent error
\\
\hline
3
&
(0.1702,$\cdots$,0.1702)
&
(0.1683,$\cdots$,0.1683)
&
1.1\%
\\
\hline
20
&
(0.0251,$\cdots$,0.0251)
&
(0.0248,$\cdots$,0.0248)
&
1.2\%
\\
\hline
100
&
(0.00502,$\cdots$,0.00502)
&
(0.00495,$\cdots$,0.00495)
&
1.3\%
\\
\hline
200
&
(0.00251,$\cdots$,0.00251)
&
(0.00250,$\cdots$,0.00250)
&
0.53\%
\\
\hline
\end{tabular}
\end{center}
\caption{\label{tab:multi_geometric_vs_sirignano_2}
Multi-dimensional geometric average call options: Computed deltas at $t=0$, i.e., $\vec{\nabla}v(\vec{s}^0,0)$. $s^0_i=100$.
}
\end{table}



\begin{figure}[h!]
\begin{footnotesize}
\hspace{30mm}
proposed method
\hspace{40mm}
\citet{sirignano2018dgm}
\begin{center}
\includegraphics[height=5cm]{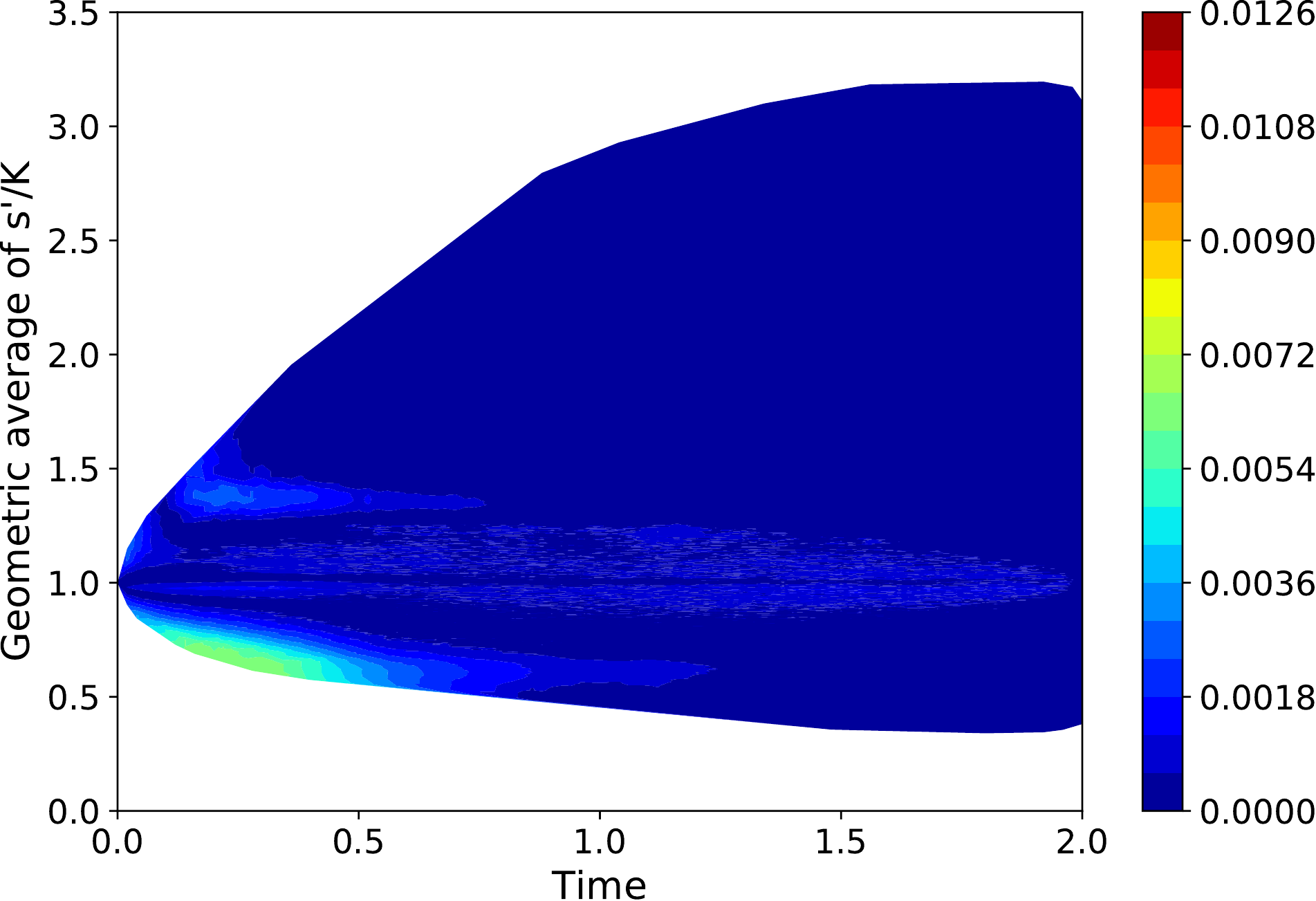}
\includegraphics[height=5cm]{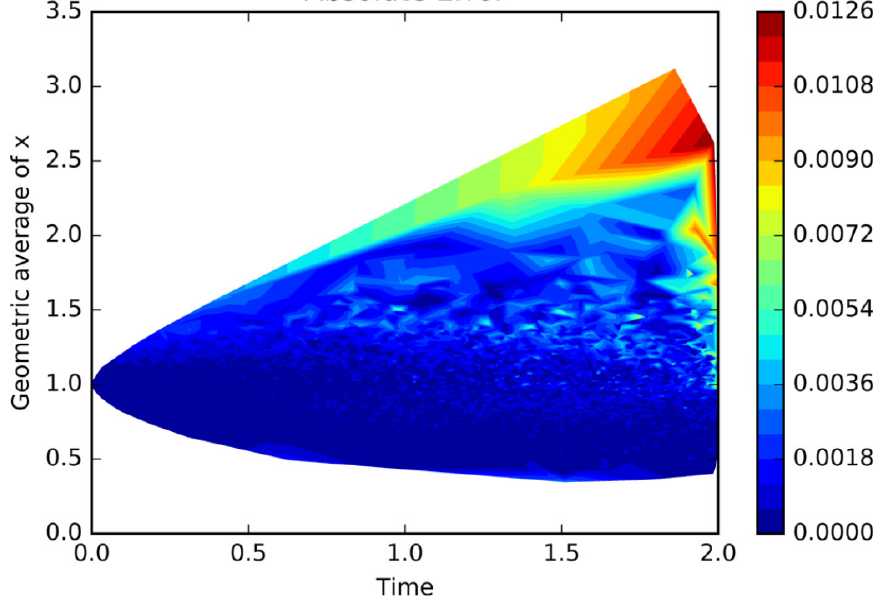}
\end{center}
\end{footnotesize}
\caption{\label{fig:multi_geometric_vs_sirignano_1}
20-dimensional geometric average call options: Heatmaps of the absolute errors of the computed spacetime prices. Left: absolute error computed by the proposed approach; right: absolute error computed by \citet{sirignano2018dgm}.
}
\end{figure}



\begin{figure}[h!]
\begin{footnotesize}
\hspace{30mm}
proposed method
\hspace{40mm}
\citet{sirignano2018dgm}
\begin{center}
\includegraphics[height=5cm]{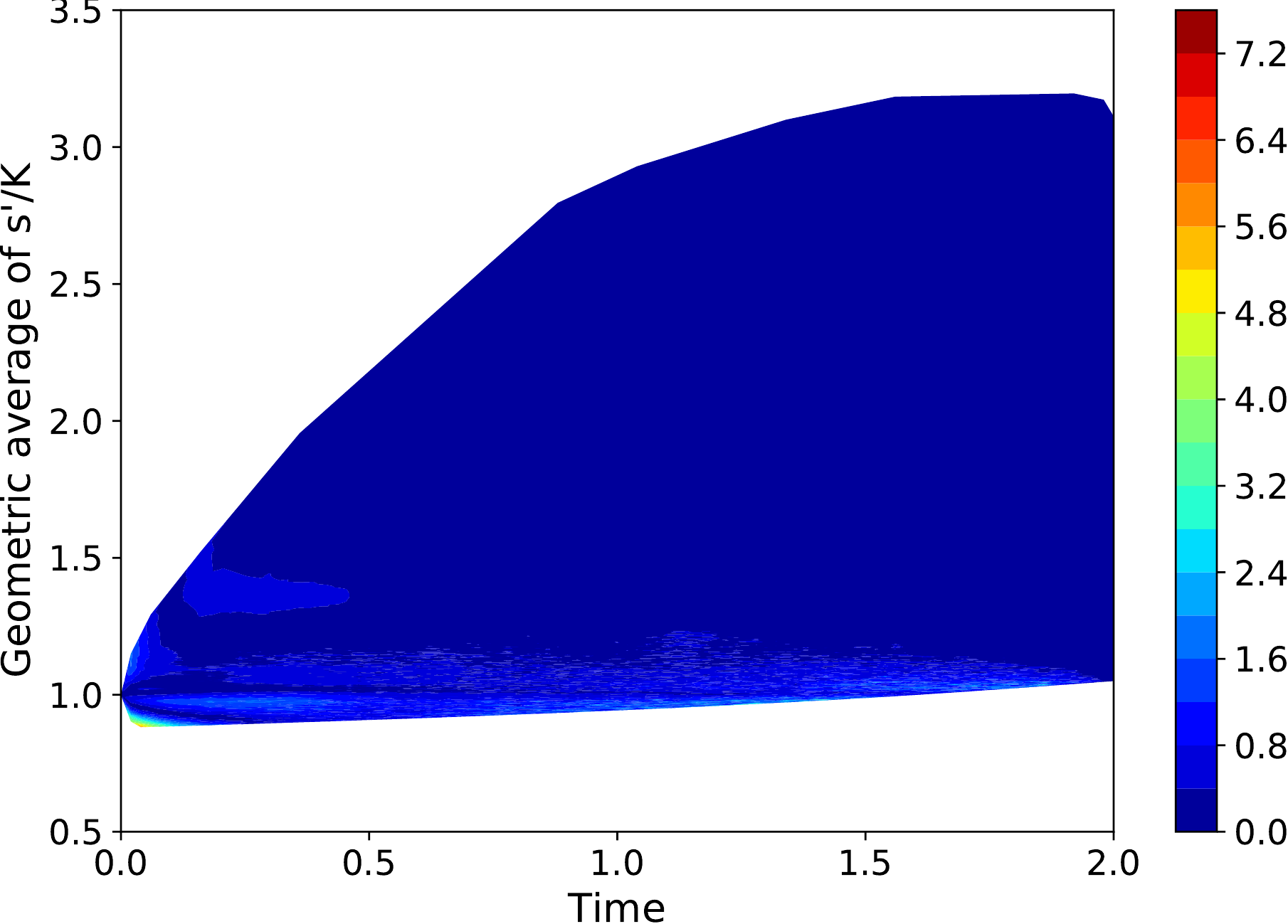}
\includegraphics[height=5cm]{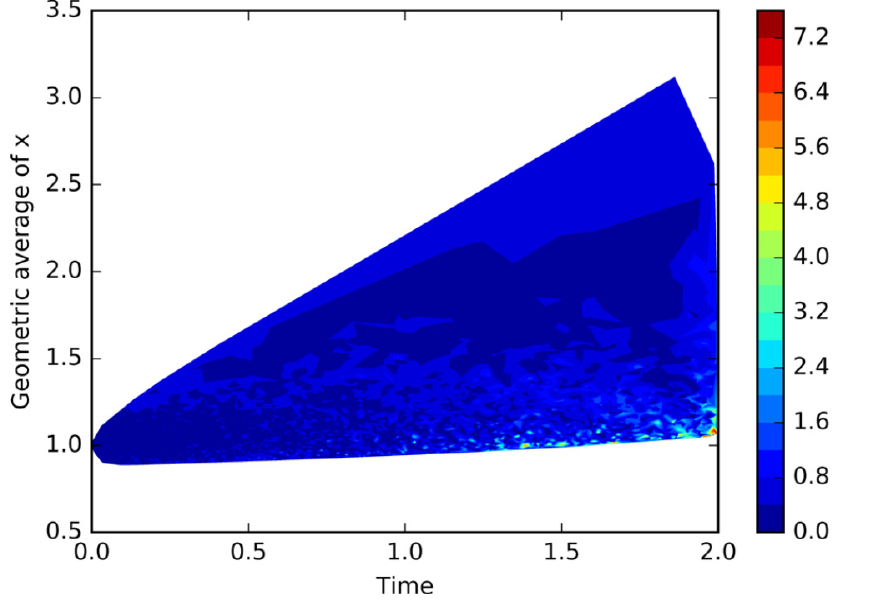}
\end{center}
\end{footnotesize}
\caption{\label{fig:multi_geometric_vs_sirignano_2}
20-dimensional geometric average call options: Heatmaps of the percent errors of the computed spacetime prices. Left: percent error computed by the proposed approach; right: percent error computed by \citet{sirignano2018dgm}.
}
\end{figure}


Figure \ref{fig:multi_geometric_vs_sirignano_1} (left) shows that the absolute error computed by our proposed approach is close to zero almost on the entire spacetime domain. The error is slightly larger near $(t,s'/K)\approx (0.2,0.7)$ and bounded by 0.0072. The reason why the error is slightly larger near $t=0$ is that our proposed approach computes the price in a backward manner, and hence the error may accumulate near $t=0$. As a comparison, Figure \ref{fig:multi_geometric_vs_sirignano_1} (right) shows that the error computed by \citet{sirignano2018dgm} has a larger error in most of the spacetime domain. In particular, the error reaches 0.0126 near $(t,s'/K)\approx (2.0,2.7)$, which is larger than the upper bound of our error, 0.0072.

Figure \ref{fig:multi_geometric_vs_sirignano_2} compares the heatmaps of the corresponding percent errors. Following \citet{sirignano2018dgm}, the percent errors are only plotted for the areas where $|v_{exact}(s',t)|>0.05$. Similar to Figure \ref{fig:multi_geometric_vs_sirignano_1}, Figure \ref{fig:multi_geometric_vs_sirignano_2} (left) shows that our proposed approach yields zero error almost everywhere, except that near $(t,s'/K)\approx (0.05,0.9)$ the error reaches $5.6\%$. Figure \ref{fig:multi_geometric_vs_sirignano_2} (right) shows that the approach in \citet{sirignano2018dgm} results in a larger error, particularly near $(t,s'/K)\approx (2.0,1.05)$, where the error reaches $7.2\%$.

We emphasize that \citet{sirignano2018dgm} does not compute deltas, whereas our proposed method does yield the deltas. Table \ref{tab:multi_geometric_vs_sirignano_2} reports the deltas at $t=0$ computed by our proposed method. The percent errors are bounded by $1.3\%$, and remain approximately the same as the dimension increases. Our approach also computes spacetime deltas, which has been discussed in Experiment 3 and is thus skipped here.


\begin{table}[b!]
\footnotesize
\begin{center}
geometric average call option
\\
\begin{tabular}{|c|c|c|c|c|c|c|c|c|}
\hline
\multirow{2}{*}[0em]{$s^0_i$}
& \multicolumn{2}{@{}c@{}|}
{$d=7$}
& \multicolumn{2}{@{}c@{}|}
{$d=13$}
& \multicolumn{2}{@{}c@{}|}
{$d=20$}
& \multicolumn{2}{@{}c@{}|}
{$d=100$}
\\
\cline{2-9}
& mean
& std
& mean
& std
& mean
& std
& mean
& std
\\
\hline
90
&
-0.0023 & 0.1788
&
0.0017 & 0.1827
&
-0.0003 & 0.1877
&
-0.0021 & 0.1908
\\
\hline
100
&
-0.0016 & 0.1159
&
0.0021 & 0.1170
&
-0.0007 & 0.1184
&
-0.0010 & 0.1184
\\
\hline
110
&
-0.0001 & 0.0757
&
0.0013 & 0.0755
&
0.0005 & 0.0751
&
-0.0009 & 0.0763
\\
\hline
\end{tabular}
\end{center}
\caption{\label{tab:multi_geometric_hedge}
Multi-dimensional geometric average call options: Computed means and standard deviations of the relative P\&Ls, subject to 100 hedging intervals.
}
\end{table}



\begin{figure}[b!]
\centering
\footnotesize
\begin{tikzpicture}
\node (img1)
{\includegraphics[scale=0.35,trim={9mm 9mm 0 8mm},clip]{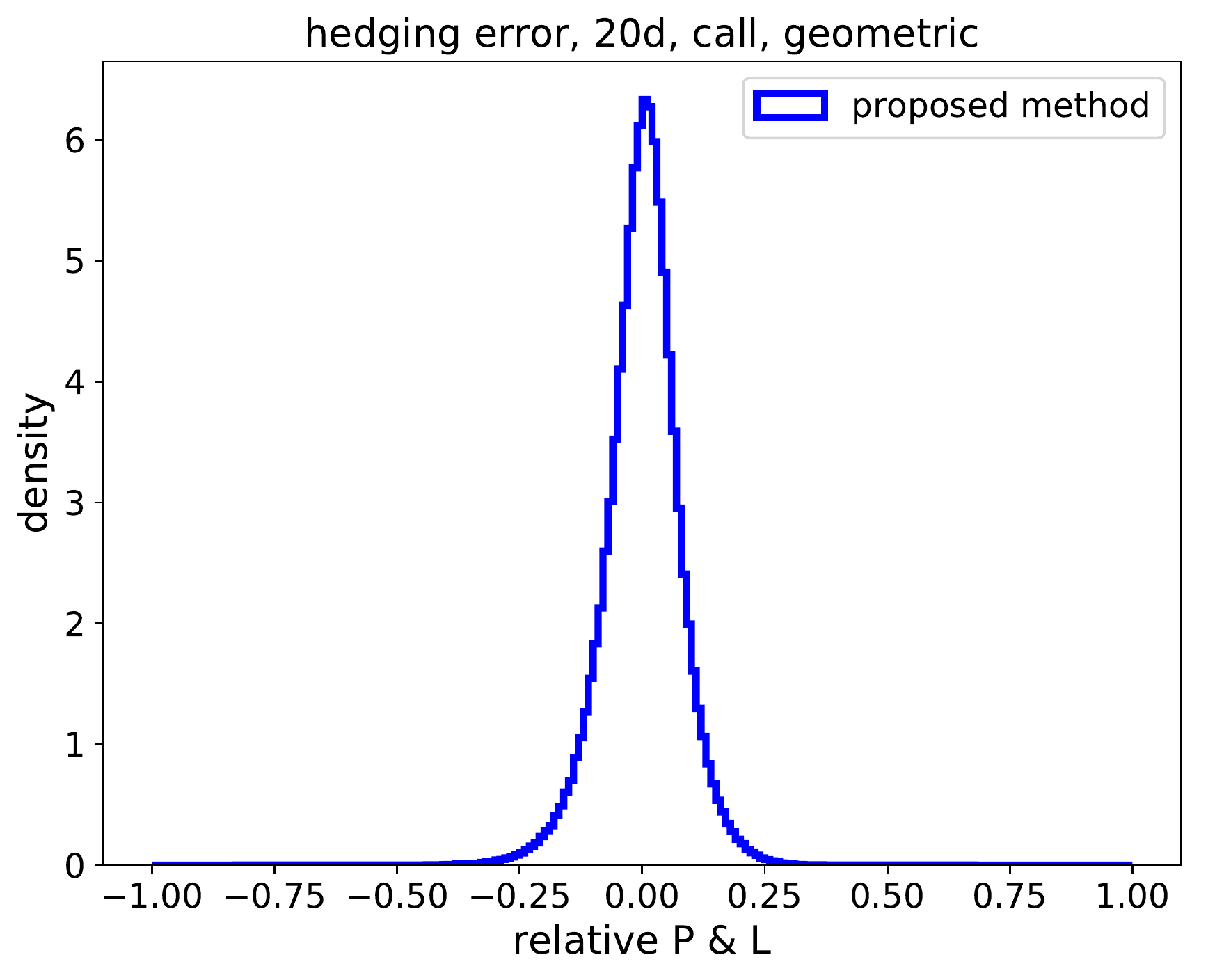}};
\node[below=of img1, node distance=0cm, yshift=1.2cm, xshift=0cm] {relative P\&L};
\node[left=of img1, node distance=0cm, rotate=90, anchor=center, yshift=-0.9cm, xshift=0cm] {density};
\node[above=of img1, node distance=0cm, yshift=-1.2cm, xshift=0cm] {20-dimensional geometric call};
\end{tikzpicture}
\begin{tikzpicture}
\node (img1)
{\includegraphics[scale=0.35,trim={9mm 9mm 0 8mm},clip]{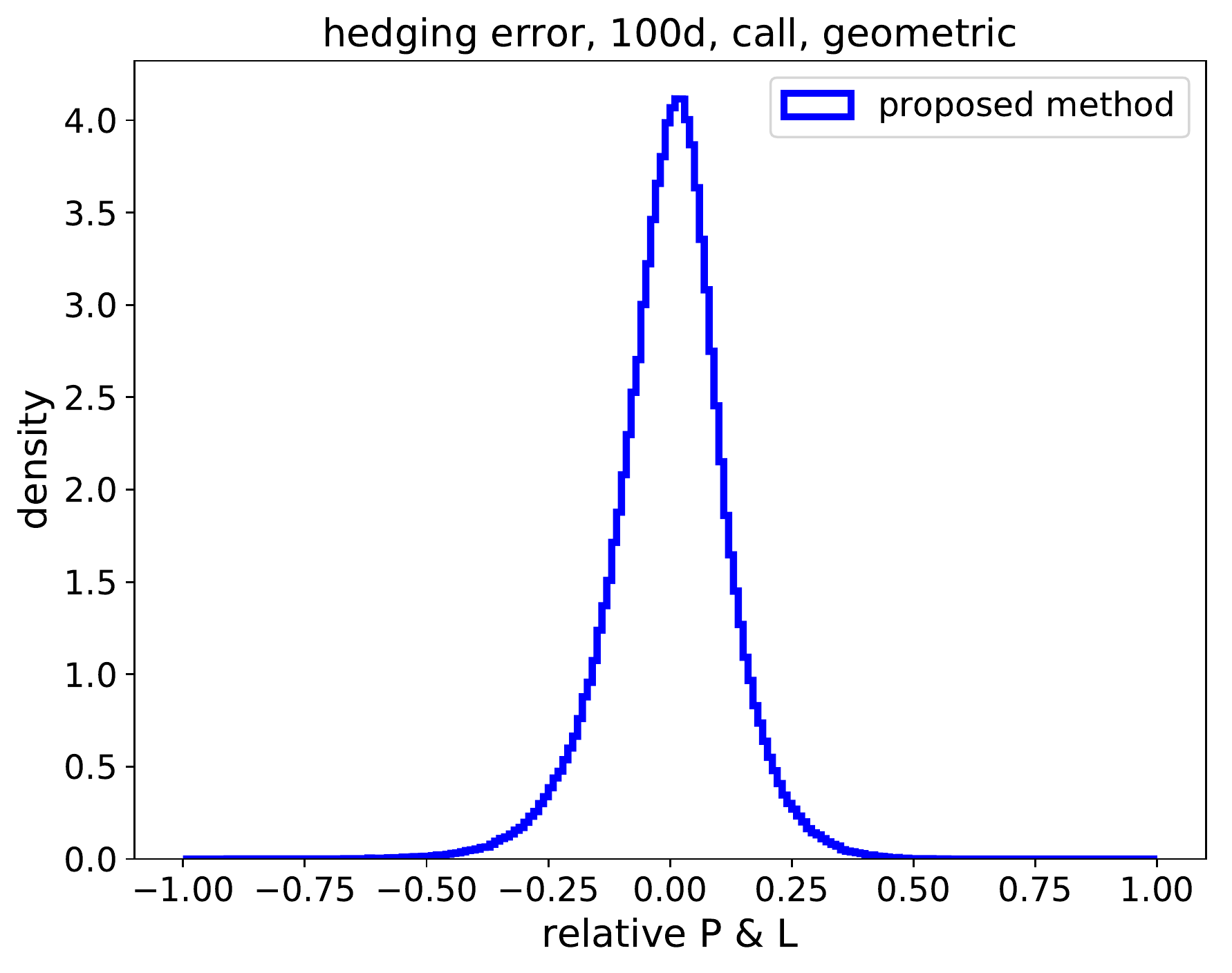}};
\node[below=of img1, node distance=0cm, yshift=1.2cm, xshift=0cm] {relative P\&L};
\node[left=of img1, node distance=0cm, rotate=90, anchor=center, yshift=-0.9cm, xshift=0cm] {density};
\node[above=of img1, node distance=0cm, yshift=-1.2cm, xshift=0cm] {100-dimensional geometric call};
\end{tikzpicture}
\caption{\label{fig:multi_geometric_hedge}
Multi-dimensional geometric call options: Distributions of the relative P\&Ls computed by the proposed neural network approach, subject to 100 hedging intervals.
}
\end{figure}


\subsubsection*{Experiment 5}
{\it Delta hedging.}
We perform delta hedging simulations over the period $[0,T]$ with our proposed method. We evaluate the quality of the approach using the distribution of the relative profit and loss \citep{forsyth2005introduction,he2006calibration}:
Relative P\&L $\equiv \frac{e^{-rT}\Pi_T}{v(\vec{s}^0,0)}$,
where $\Pi_T$ is the balance of an initially-zero hedging portfolio at the expiry $T$. For perfect hedging, the relative P\&L should be a Dirac delta function. Due to the discretization of time, the relative P\&L would be close to a normal distribution, where the mean is zero and the standard deviation is a small value depending on $\Delta t$. We emphasize that the computation of the relative P\&L must use both prices and deltas on the entire spacetime. Hence, none of the existing methods referenced in this paper, except our proposed method, are designed to compute the relative P\&L.

Table \ref{tab:multi_geometric_hedge} shows the means and the standard deviations of the relative P\&Ls for all the 720000 simulation paths, computed by our proposed method. The reported values are indeed close to zero. Figure \ref{fig:multi_geometric_hedge} illustrates the distributions of the relative P\&Ls. The resulting distributions are indeed approximately normal distributions with zero means. These results confirm the accuracy of the spacetime prices and the spacetime deltas computed by the proposed method.

\subsection{Multi-dimensional max options}
\label{subsec:multi_max}

Multi-dimensional max options are common in practical applications. In this section, we report simulation results for this type of options.


\begin{table}[b!]
\footnotesize
\begin{center}
2-dimensional max call option
\\
\begin{tabular}{|c|c|c|c|c|c|}
\hline
\multirow{3}{*}[0em]{$s^0_i$}
& \multirow{3}{*}[0em]
{\begin{tabular}{@{}c@{}}exact price\\$v(\vec{s}^0,0)$\end{tabular}}
& \multicolumn{2}{@{}c@{}|}
{proposed method}
& \multicolumn{2}{@{}c@{}|}
{Longstaff-Schwartz}
\\
\cline{3-6}
& 
&
{\begin{tabular}{@{}c@{}}computed price\\$v(\vec{s}^0,0)$\end{tabular}}
& \begin{tabular}{@{}c@{}}percent\\error\end{tabular}
&
{\begin{tabular}{@{}c@{}}computed price\\$v(\vec{s}^0,0)$\end{tabular}}
& \begin{tabular}{@{}c@{}}percent\\error\end{tabular}
\\
\hline
90
&
4.2122
&
4.1992
&
0.31\%
&
4.1748
&
0.89\%
\\
\hline
100
&
9.6333
&
9.6080
&
0.26\%
&
9.5646
&
0.71\%
\\
\hline
110
&
17.3487
&
17.3313
&
0.10\%
&
17.2751
&
0.42\%
\\
\hline
\end{tabular}
\end{center}
\caption{\label{tab:multi_2dmax_1}
2-dimensional max call option: Computed prices at $t=0$, i.e., $v(\vec{s}^0,0)$.
}
\end{table}



\begin{table}[b!]
\footnotesize
\begin{center}
2-dimensional max call option
\\
\begin{tabular}{|c|c|c|c|c|}
\hline
\multirow{2}{*}[0em]{$s^0_i$}
& \multirow{2}{*}[0em]
{\begin{tabular}{@{}c@{}}exact delta
\\$\vec{\nabla}v(\vec{s}^0,0)$\end{tabular}}
& \multicolumn{2}{@{}c@{}|}
{proposed method}
& Longstaff-Schwartz
\\
\cline{3-5}
& 
& computed delta $\vec{\nabla}v(\vec{s}^0,0)$
& percent error
& percent error
\\
\hline
90
&
(0.2062, 0.2062)
&
(0.2025, 0.2019)
&
1.9\%
&
5.2\%
\\
\hline
100
&
(0.3338, 0.3338)
&
(0.3300, 0.3324)
&
0.84\%
&
4.4\%
\\
\hline
110
&
(0.4304, 0.4304)
&
(0.4252, 0.4277)
&
0.96\%
&
3.3\%
\\
\hline
\end{tabular}
\end{center}
\caption{\label{tab:multi_2dmax_2}
2-dimensional max call option: Computed deltas at $t=0$, i.e., $\vec{\nabla}v(\vec{s}^0,0)$.
``Longstaff-Schwartz" is the Longstaff-Schwartz method combined with \citet{thom2009longstaff} and \citet{broadie1996estimating}.}
\end{table}



\begin{table}[b!]
\footnotesize
\begin{center}
2-dimensional max call option
\\
\begin{tabular}{|c|c|c|c|c|}
\hline
\multirow{2}{*}[0em]{$s^0_i$}
& \multicolumn{2}{@{}c@{}|}
{spacetime price $v(\vec{s},t)$}
& \multicolumn{2}{@{}c@{}|}
{spacetime delta
$\vec{\nabla}v(\vec{s},t)$}
\\
\cline{2-5}
& absolute error
& percent error
& absolute error
& percent error
\\
\hline
90
&
0.0563
&
1.3\%
&
0.0155
&
4.9\%
\\
\hline
100
&
0.0828
&
0.85\%
&
0.0180
&
3.4\%
\\
\hline
110
&
0.0678
&
0.39\%
&
0.0207
&
3.0\%
\\
\hline
\end{tabular}
\end{center}
\caption{\label{tab:multi_2dmax_3}
2-dimensional max call option: Spacetime prices and deltas (in terms of absolute and percent errors) computed by our proposed method.
}
\end{table}



\begin{table}[b!]
\footnotesize
\begin{center}
2-dimensional max call option
\\
\begin{tabular}{|c|c|c|c|c|c|}
\hline
$s^0_i$
& proposed method
& Longstaff-Schwartz
\\
\hline
90
&
0.93
&
0.74
\\
\hline
100
&
0.95
&
0.76
\\
\hline
110
&
0.94
&
0.79
\\
\hline
\end{tabular}
\end{center}
\caption{\label{tab:multi_2dmax_4}
2-dimensional max call option: The f1-score of the exercise boundary classification.
}
\end{table}



\begin{figure}[h!]
\centering
\footnotesize
2-dimensional max call option
\\
\begin{tikzpicture}
\node (img1)
{\includegraphics[scale=0.42,trim={9mm 9mm 0 8mm},clip]{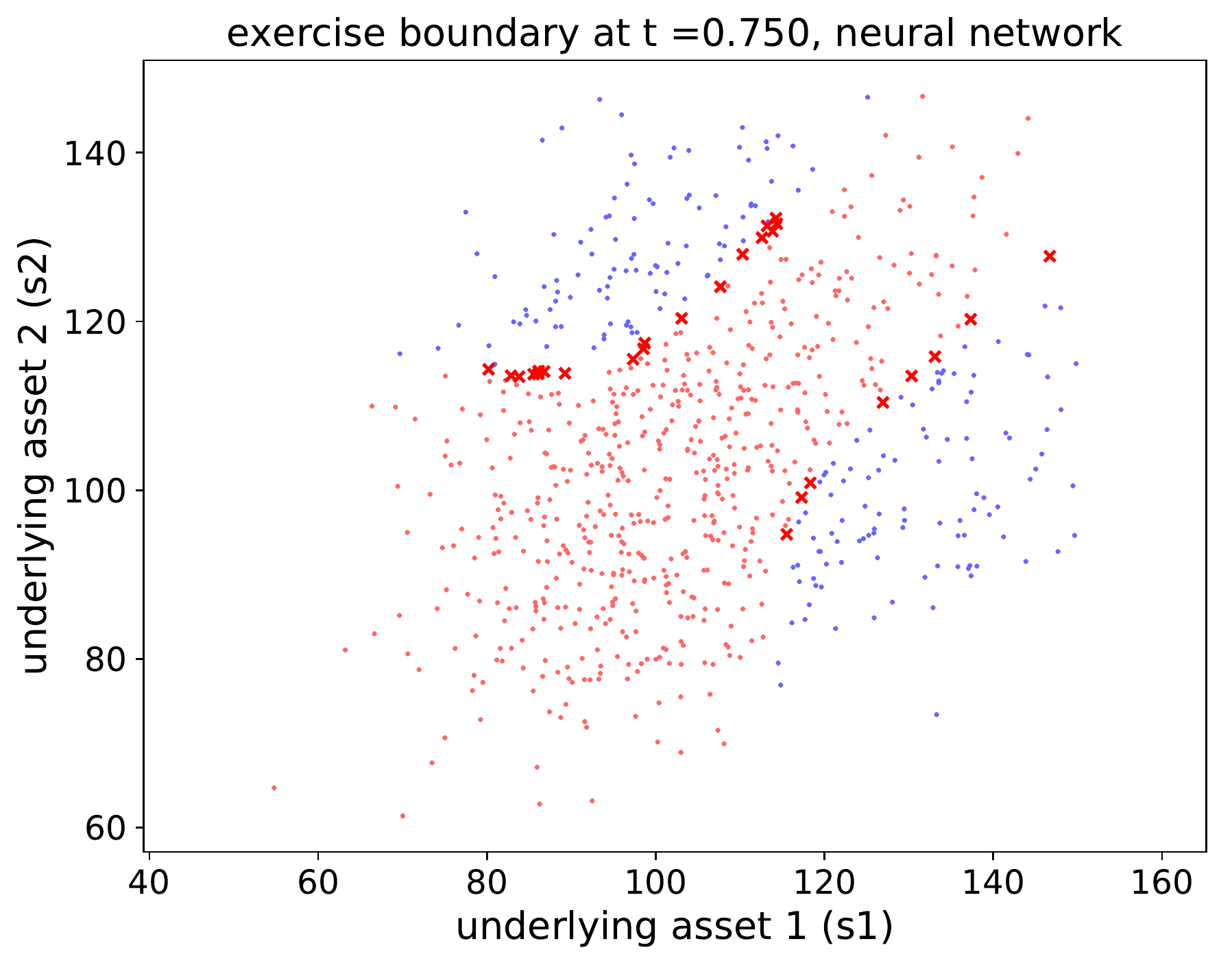}};
\node[below=of img1, node distance=0cm, yshift=1.2cm, xshift=0cm] {1st underlying asset price ($s_1$)};
\node[left=of img1, node distance=0cm, rotate=90, anchor=center, yshift=-0.9cm, xshift=0cm] {2nd underlying asset price ($s_2$)};
\node[above=of img1, node distance=0cm, yshift=-1.15cm, xshift=0cm] {neural network, $t=0.75$};
\end{tikzpicture}
\begin{tikzpicture}
\node (img1)
{\includegraphics[scale=0.42,trim={9mm 9mm 0 8mm},clip]{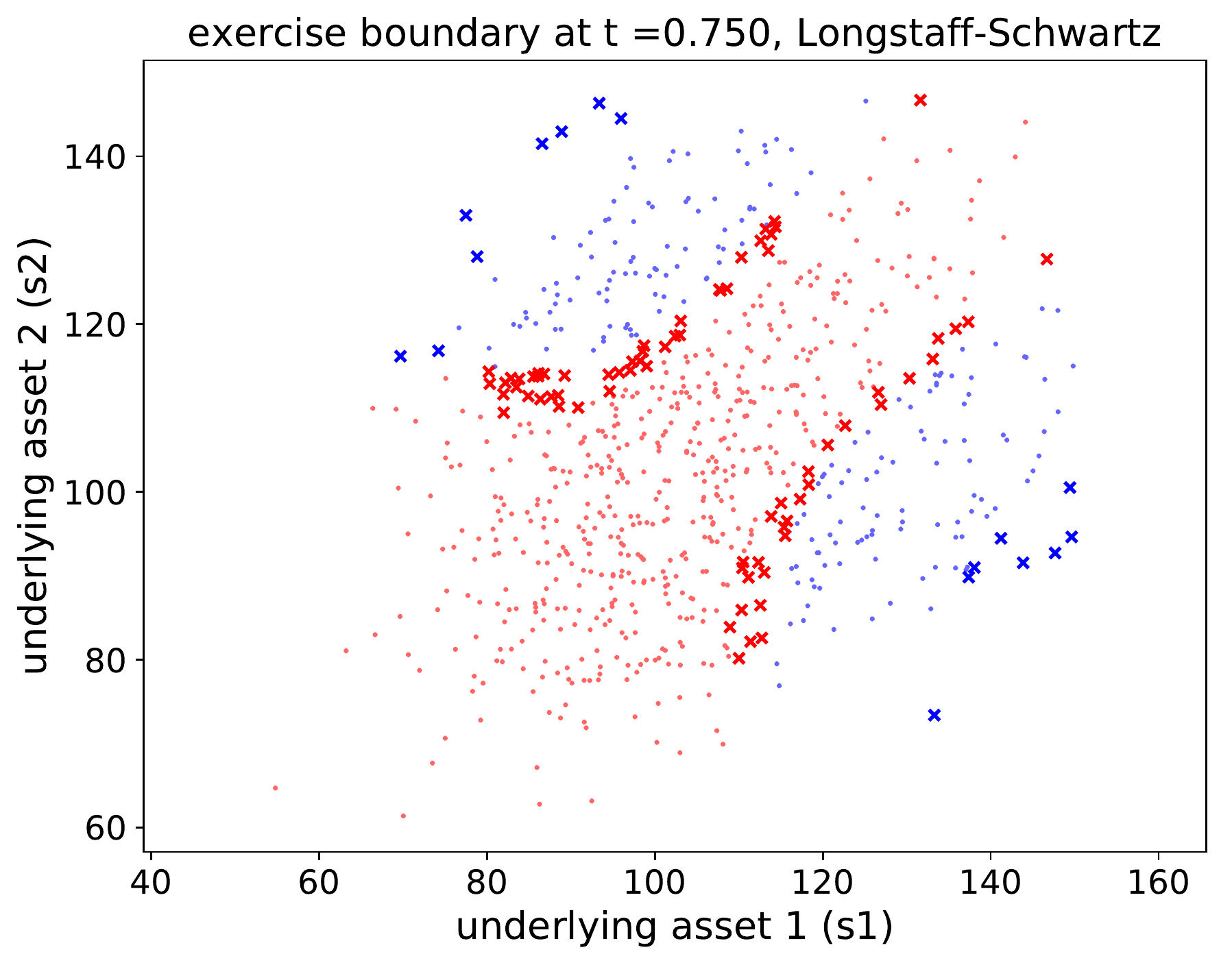}};
\node[below=of img1, node distance=0cm, yshift=1.2cm, xshift=0cm] {1st underlying asset price ($s_1$)};
\node[left=of img1, node distance=0cm, rotate=90, anchor=center, yshift=-0.9cm, xshift=0cm] {2nd underlying asset price ($s_2$)};
\node[above=of img1, node distance=0cm, yshift=-1.2cm, xshift=0cm] {Longstaff-Schwartz, $t=0.75$};
\end{tikzpicture}
\\
\begin{tikzpicture}
\node (img1)
{\includegraphics[scale=0.42,trim={9mm 9mm 0 8mm},clip]{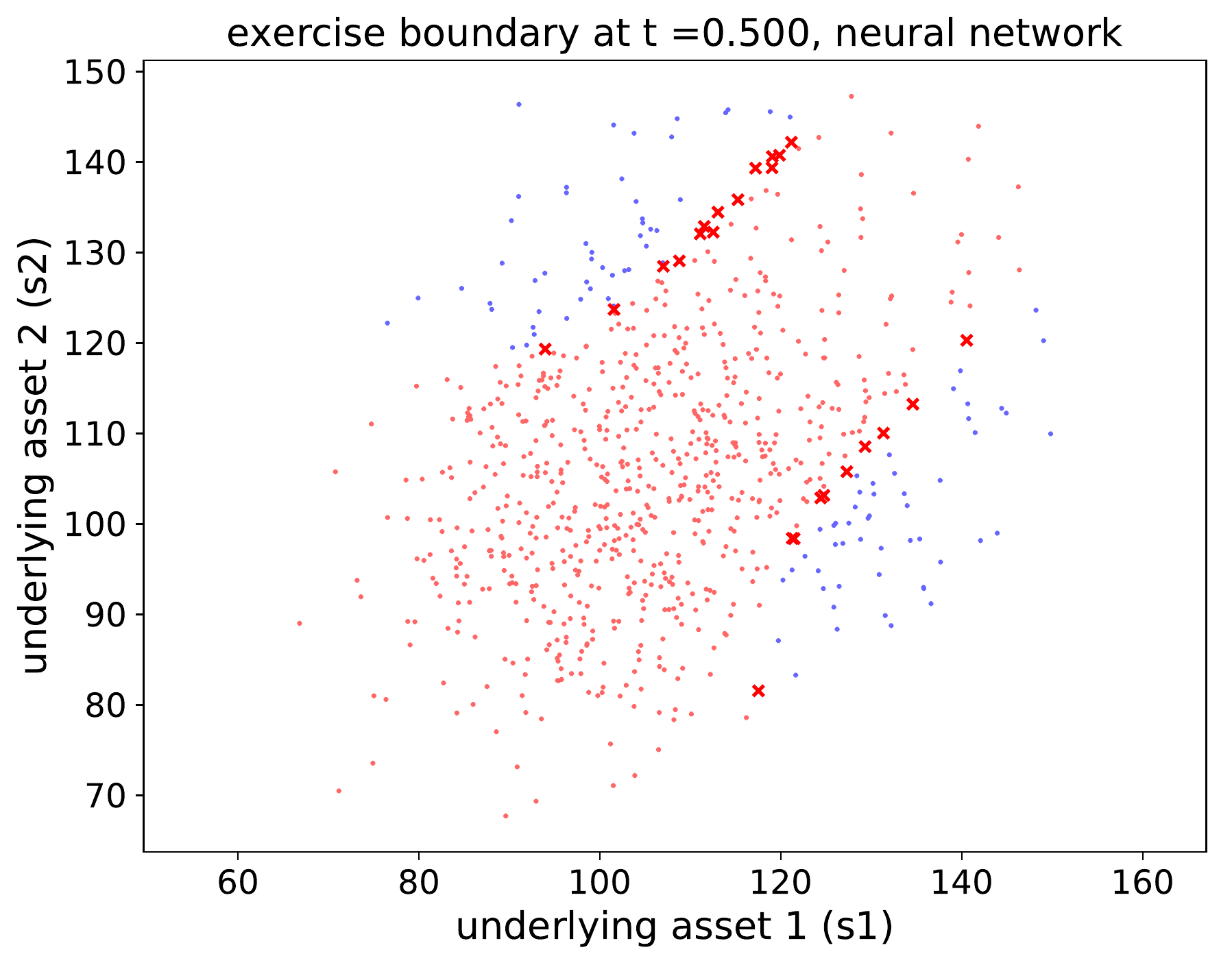}};
\node[below=of img1, node distance=0cm, yshift=1.2cm, xshift=0cm] {1st underlying asset price ($s_1$)};
\node[left=of img1, node distance=0cm, rotate=90, anchor=center, yshift=-0.9cm, xshift=0cm] {2nd underlying asset price ($s_2$)};
\node[above=of img1, node distance=0cm, yshift=-1.15cm, xshift=0cm] {neural network, $t=0.50$};
\end{tikzpicture}
\begin{tikzpicture}
\node (img1)
{\includegraphics[scale=0.42,trim={9mm 9mm 0 8mm},clip]{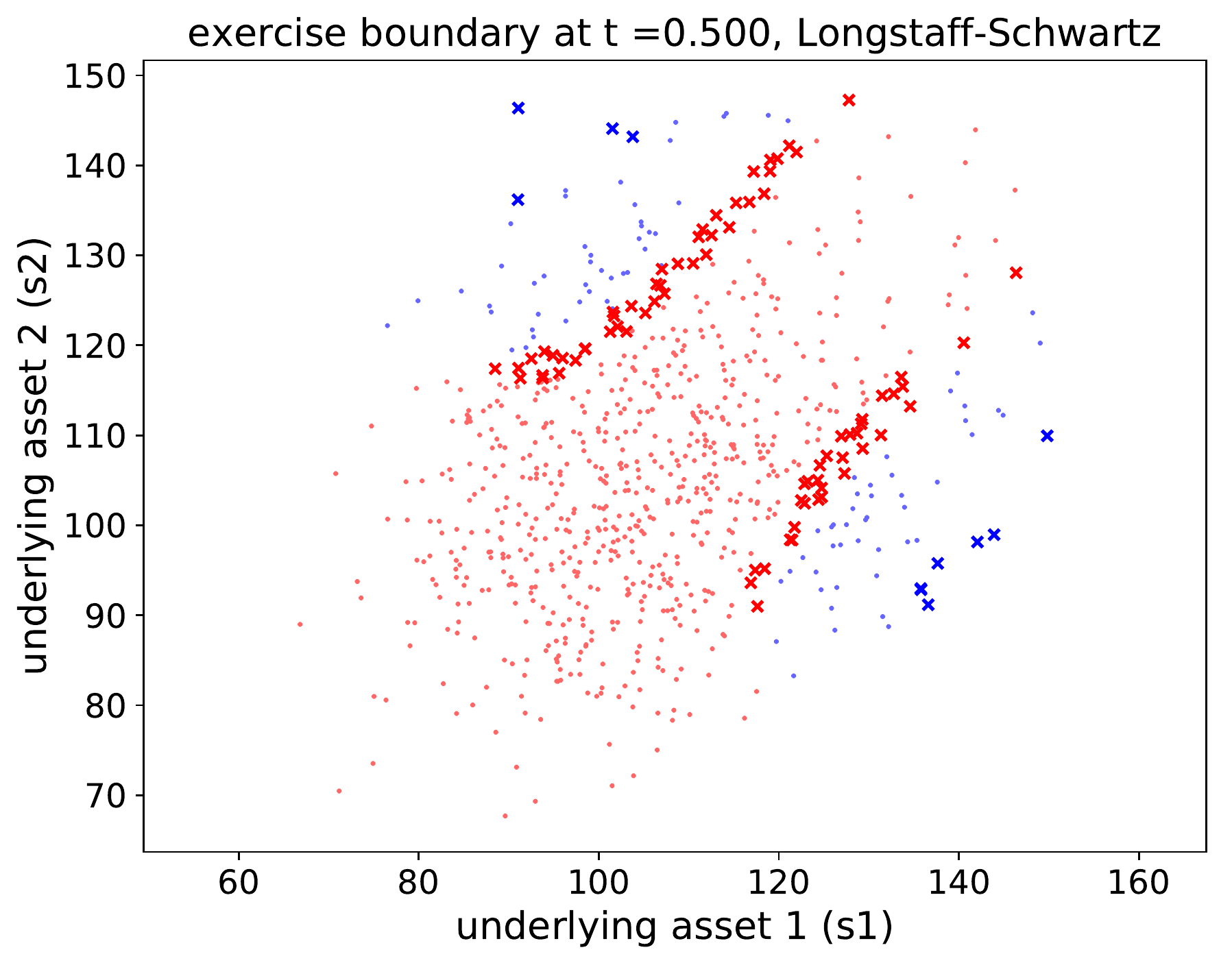}};
\node[below=of img1, node distance=0cm, yshift=1.2cm, xshift=0cm] {1st underlying asset price ($s_1$)};
\node[left=of img1, node distance=0cm, rotate=90, anchor=center, yshift=-0.9cm, xshift=0cm] {2nd underlying asset price ($s_2$)};
\node[above=of img1, node distance=0cm, yshift=-1.2cm, xshift=0cm] {Longstaff-Schwartz, $t=0.50$};
\end{tikzpicture}
\caption{\label{fig:multi_2dmax_boundary}
2-dimensional max call option: Comparison of exercise boundaries between the proposed neural network approach (top left and bottom left) and the Longstaff-Schwartz approach (top right and bottom right). Note that only the time slices of $t=$0.75 and 0.5 are plotted. All blue points: sample points that should be exercised; all red points: sample points that should be continued; bold dark blue points: sample points that should be exercised but are misclassified as continued; bold dark red points: sample points that should be continued but are misclassified as exercised.
}
\end{figure}


\subsubsection*{Experiment 6}
{\it 2-dimensional max call option.}
Consider the 2-dimensional max call option from Table 3 of \citet{broadie1997pricing}, where the payoff function is
$f(\vec{s}) = \displaystyle\max\left[
\max(s_1, s_2) - K, 0
\right]$, and the parameters are $\rho=0.3$, $\sigma=0.2$, $r=0.05$, $\delta=0.1$, $T=1$. The reason to consider this example is that the exact prices and deltas are available spacetime-wise. More specifically, we approximate the exact prices and deltas by the Crank-Nicolson finite difference method with 1000 timesteps and 2049$\times$2049 space grid points. Hence, we can again benchmark the values computed by our approach with the exact ones.

Using our proposed method, the percent errors of the computed prices at $t=0$ are less than $0.31\%$ (Table \ref{tab:multi_2dmax_1}); the percent errors of the computed deltas at $t=0$ are less than $1.9\%$ (Table \ref{tab:multi_2dmax_2}). These errors are smaller than the corresponding ones computed by the Longstaff-Schwartz method.
In addition, the percent errors of the computed spacetime prices and deltas are less than $1.3\%$ and $4.9\%$ (Table \ref{tab:multi_2dmax_3}).

Here we also compare the exercise boundary computed by the proposed approaches with the one computed by the Longstaff-Schwartz method. Table \ref{tab:multi_2dmax_4} shows that the f1-scores computed by our proposed method are around 0.94, higher than the ones computed by the Longstaff-Schwartz algorithm (around 0.76). Figure \ref{fig:multi_2dmax_boundary} plots the exercise boundaries at the time slices $t=$ 0.75 and 0.5. Similar to Figure \ref{fig:multi_geometric_boundary}, here the misclassified sample points are highlighted by dark cross markers, and we observe again that the proposed neural network approach has fewer misclassified points than the Longstaff-Schwartz method. Both Table \ref{tab:multi_2dmax_4} and Figure \ref{fig:multi_2dmax_boundary} illustrate a more accurate exercise boundary determined by our proposed method than by the Longstaff-Schwartz method.


\begin{table}[t!]
\footnotesize
\begin{center}
2-dimensional max call option
\\
\begin{tabular}{|c|c|c|c|c|}
\hline
\multirow{2}{*}[0em]{$s^0_i$}
& \multicolumn{2}{@{}c@{}|}
{finite difference}
& \multicolumn{2}{@{}c@{}|}
{proposed method}
\\
\cline{2-5}
& mean
& std
& mean
& std
\\
\hline
90
&
0.0025 & 0.1683
&
0.0022 & 0.1932
\\
\hline
100
&
0.0014 & 0.0894
&
0.0016 & 0.0990
\\
\hline
110
&
0.0011 & 0.0544
&
0.0016 & 0.0614
\\
\hline
\end{tabular}
\end{center}
\caption{\label{tab:multi_2dmax_hedge}
2-dimensional max call option: Means and standard deviations of the relative P\&Ls by finite difference versus by the proposed method,  subject to 100 hedging intervals. 
}
\end{table}



\begin{figure}[t!]
\centering
\footnotesize
\begin{tikzpicture}
\node (img1)
{\includegraphics[scale=0.35,trim={9mm 9mm 0 8mm},clip]{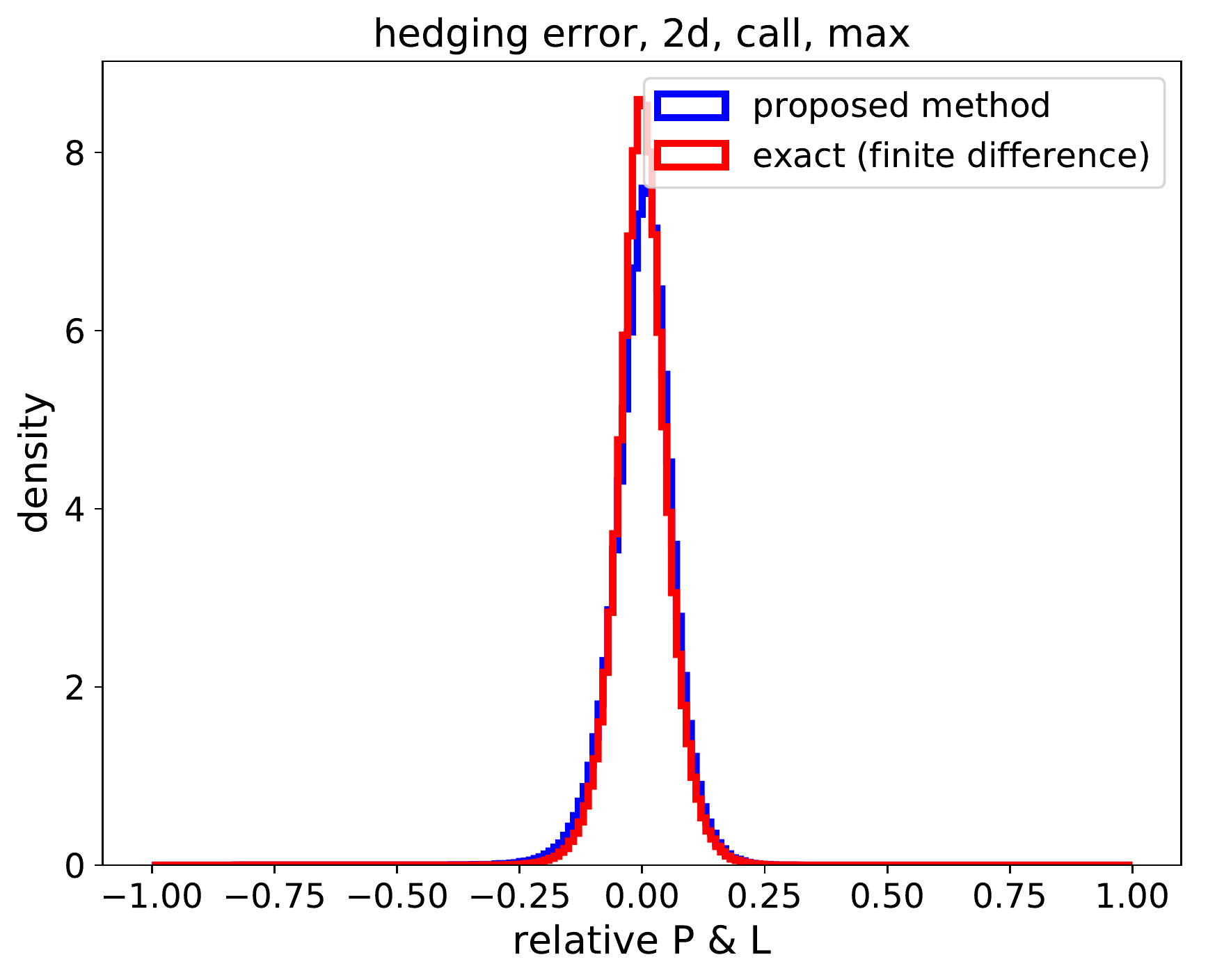}};
\node[below=of img1, node distance=0cm, yshift=1.2cm, xshift=0cm] {relative P\&L};
\node[left=of img1, node distance=0cm, rotate=90, anchor=center, yshift=-0.9cm, xshift=0cm] {density};
\node[above=of img1, node distance=0cm, yshift=-1.2cm, xshift=0cm] {2-dimensional max call option};
\end{tikzpicture}
\caption{\label{fig:multi_2dmax_hedge}
2-dimensional max call option: Comparison of the distributions of the relative P\&Ls computed by the proposed neural network approach (blue) versus by finite difference method (red), subject to 100 hedging intervals.
}
\end{figure}


In addition, we compute the relative P\&Ls by the finite difference method\footnote{We note that even though finite difference methods yield nearly exact spacetime prices and deltas, due to the finite number of hedging intervals, the resulting relative P\&Ls are not a Dirac delta distribution.} and compare them with the values computed by our approach.
Table \ref{tab:multi_2dmax_hedge} and Figure \ref{fig:multi_2dmax_hedge} show the means, standard deviations and the distributions of the relative P\&Ls computed by the proposed approach versus by finite difference methods. The results computed by the proposed approach are similar to the ones computed by finite difference methods. This again verifies the accuracy of the spacetime prices and deltas computed by our proposed algorithm.


\begin{table}[h!]
\footnotesize
\begin{center}
5-dimensional max call option
\\
\begin{tabular}{|c|c|c|c|}
\hline
\multirow{2}{*}[0em]{$s^0_i$}
& \multicolumn{2}{@{}c@{}|}
{computed price $v(\vec{s}^0,0)$}
& \multirow{2}{*}[0em]
{\begin{tabular}{@{}c@{}}
computed delta $\vec{\nabla} v(\vec{s}^0,0)$
\\by proposed method\end{tabular}}
\\
\cline{2-3}
& proposed method
& Longstaff-Schwartz
&
\\
\hline
90
&
16.8896
&
16.76
&
(0.1728, 0.1732, 0.1747, 0.1754, 0.1738)
\\
\hline
100
&
26.4876
&
26.28
&
(0.2017, 0.2004, 0.1998, 0.2071, 0.2041)
\\
\hline
110
&
37.0996
&
36.89
&
(0.2157, 0.2198, 0.2190, 0.2149, 0.2202)
\\
\hline
\end{tabular}
\end{center}
\caption{\label{tab:multi_5dmax_1}
5-dimensional max call option: Computed prices and deltas at $t=0$, i.e., $v(\vec{s}^0,0)$ and $\vec{\nabla}v(\vec{s}^0,0)$.
The column ``Longstaff-Schwartz" is the Longstaff-Schwartz prices reported in \citet{firth2005high}.}
\end{table}


\subsubsection*{Experiment 7}
{\it 5-dimensional max call option.}
We study the 5-dimensional max call option from Table 3.5 of \citet{firth2005high}, where $\rho_{i,j}=0$, $\sigma=0.2$, $r=0.05$, $\delta=0.1$, $T=3$. We note that unlike the previous experiments, here the exact solutions are not available. Table \ref{tab:multi_5dmax_1} reports the option prices and deltas at $t=0$ computed by the proposed method. The table also includes the Longstaff-Schwartz prices reported in \citet{firth2005high}. The prices given by the proposed algorithm and the Longstaff-Schwartz method differ by $10^{-2}$. We note that the Longstaff-Schwartz method is a low-biased method due to its sub-optimal computed exercise boundary, as explained in \citet{longstaff2001valuing} and \citet{firth2005high}. The proposed algorithm gives slightly higher prices.


\section{Conclusion}
\label{sec:conclusion}

We propose a neural network framework for high-dimensional American option problems. Our algorithm minimizes the residual of the backward stochastic differential equation that couples both prices and deltas. The neural network is designed to learn the differences between the price functions of the adjacent timesteps. We improve the algorithm by various techniques, including feature selection, weight reuse, ensemble learning, redefining training input ``v", etc. The proposed algorithm yields not only the prices and deltas at $t=0$, but also the prices and deltas on the entire spacetime. The cost of the proposed algorithm grows quadratically with the dimension $d$, which mitigates the curse of dimensionality. In particular, our algorithm outperforms the Longstaff-Schwartz algorithm when $d\geq 20$.

We note that the main drawback of the proposed algorithm is that the computational cost is quadratic (rather than linear) in the number of the timesteps $N$, even though a mitigation is proposed in Section \ref{subsec:NN2}. A potential future work is to re-design the architecture of the neural network in order to improve this drawback.




\bibliographystyle{rQUF}
\bibliography{Reference}

\end{document}